\newcommand{\specialcell}[2][c]{\begin{tabular}[#1]{@{}c@{}}#2\end{tabular}}
\theoremstyle{plain}
\newtheorem{claim}{Claim}
\numberwithin{claim}{theorem}
\newcommand{\CountSHnoCDConst}{\textsc{Count-SH-noCD-Const}\xspace}
\newcommand{\CountSHnoCDHigh}{\textsc{Count-SH-noCD-High}\xspace}
\newcommand{\CountCenternoCDConst}{\textsc{Count-Desig-noCD-Const}\xspace}
\newcommand{\CountCenternoCDHigh}{\textsc{Count-Desig-noCD-High}\xspace}
\newcommand{\EstUpperSH}{\textsc{EstUpper-SH}\xspace}
\newcommand{\CountSHCDConst}{\textsc{Count-SH-CD-Const}\xspace}
\newcommand{\CountSHCDHigh}{\textsc{Count-SH-CD-High}\xspace}
\newcommand{\EstUpperCenter}{\textsc{EstUpper-Desig}\xspace}
\newcommand{\CountCenterCDConst}{\textsc{Count-Desig-CD-Const}\xspace}
\newcommand{\CountCenterCDHigh}{\textsc{Count-Desig-CD-High}\xspace}
\newcommand{\CountAllnoCDa}{\textsc{Count-All-noCD}\xspace}
\newcommand{\CountAllnoCDb}{\textsc{Count-All-noCD-2}\xspace}
\newcommand{\CountAllCDa}{\textsc{Count-All-CD}\xspace}
\title{Approximate Neighbor Counting in Radio Networks}
\author{Calvin Newport}{Georgetown University, United States}{cnewport@cs.georgetown.edu}{}{}
\author{Chaodong Zheng}{State Key Laboratory for Novel Software Technology, Nanjing University, China}{chaodong@nju.edu.cn}{}{}
\authorrunning{C. Newport and C. Zheng}
\subjclass{\ccsdesc[500]{Theory of computation~Distributed algorithms}}
\keywords{Radio networks, neighborhood size estimation, approximate counting.}
\begin{document}

\maketitle

\begin{abstract}
For many distributed algorithms, neighborhood size is an important parameter. In radio networks, however, obtaining this information can be difficult due to ad hoc deployments  and communication that occurs on a collision-prone shared channel. This paper conducts a comprehensive survey of the approximate neighbor counting problem, which requires nodes to obtain a constant factor approximation of the size of their network neighborhood. We produce new lower and upper bounds for three main variations of this problem in the radio network model: (a) the network is single-hop and every node must obtain an estimate of its neighborhood size; (b) the network is multi-hop and only a designated node must obtain an estimate of its neighborhood size; and (c) the network is multi-hop and every node must obtain an estimate of its neighborhood size. In studying these problem variations, we consider solutions with and without collision detection, and with both constant and high success probability. Some of our results are extensions of existing strategies, while others require technical innovations. We argue this collection of results provides insight into the nature of this well-motivated problem (including how it differs from related symmetry breaking tasks in radio networks), and provides a useful toolbox for algorithm designers tackling higher level problems that might benefit from neighborhood size estimates.
\end{abstract}

% % % main content starts here % % %

\section{Introduction}\label{sec-intro}

Many distributed algorithms assume nodes have advance knowledge of their neighborhood, allowing them to take steps that depend, for example, on gathering information from every neighbor (e.g.,~\cite{luby85}), or flipping a coin weighted with their neighborhood size (e.g.,~\cite{bar-yehuda87}).

In standard wired network models, where nodes are connected by static point-to-point links, obtaining this neighbor information is often trivial (e.g., as in the LOCAL or CONGEST models). In radio networks, by contrast, this information might be harder to obtain. Specifically, because nodes in these networks are often deployed in an ad hoc manner, and subsequently communicate only on a contended shared channel, we cannot expect that they possess advance knowledge of their neighborhood. In fact, learning this information might require non-trivial feats of contention management.

Some distributed algorithms for radio networks depend on nodes possessing an estimate of their neighborhood size (e.g.,~\cite{gilbert14,gilbert17}), while other algorithms could be significantly simplified if this information was available (e.g.,~\cite{willard86,nakano02,jurdzinski05}). Though it is generally assumed that calculating these size estimates should not take {\em too} long in most settings, this problem has escaped the more systematic scrutiny applied to related tasks like contention resolution.%\footnote{Indeed, one explanation for this lack of focus on neighbor counting is the belief that it is essentially equivalent to contention management, in the sense that breaking symmetry requires you to implicitly estimate your neighborhood size. As we will show in this paper, however, that equivalence is not nearly as universal as many suspect: under many assumptions, these problems diverge.}

In this paper, we work toward filling in more of this knowledge gap. We conduct a comprehensive survey of lower and upper bounds for the approximate neighbor counting problem in the radio network model under different combinations of common assumptions for this setting. Some of our results require only extensions of existing strategies, while many others require non-trivial technical innovations.

Combined, this collection of results provides two important contributions to the study of distributed algorithms for radio networks. First, it supports a deeper understanding of the well-motivated neighbor counting problem, highlighting both its similarities and differences to related low-level radio network tasks. Second, the collection acts as a useful toolbox for algorithm designers tackling higher level problems.

\subparagraph*{Result summary.} The radio network model we study describes the underlying network topology with an undirected connected graph $G=(V,E)$, with the $n=|V|$ vertices corresponding to the radio devices (usually called {\em nodes} in this paper), and the edges in $E$ describing which node pairs are within communication range. For every node $u\in V$, $n_u$ describes the number of neighbors of $u$ in $G$. We sometimes call this parameter the {\em neighbor count} of $u$. In single-hop networks (i.e., $G$ is a clique), all nodes have the same neighbor count, while in multi-hop networks these counts can differ.

The {\em approximate neighbor counting} problem requires nodes to calculate constant factor estimates of their neighbor counts. We study the variant where every node must obtain this estimate (e.g., during network initialization), and the variant where only a designated node must obtain this estimate (e.g., when neighborhood of a node changes). We study these variants in single-hop and multi-hop networks, and consider solutions with and without collision detection. We study both lower and upper bounds for randomized solutions. When relevant, we look at both results that hold with constant and high probability.

Our results are summarized in Figure~\ref{tbl-results}. Notice that we do not study both designated and all nodes counting in single-hop networks, as in this setting all nodes have the same neighbor count, making these two cases essentially identical (e.g., a designated node in a single-hop network can simply announce its count, transforming the solution to an all nodes counting solution). We also do not study constant probability solutions for all nodes counting in multi-hop networks. This follows because in the multi-hop setting the success probability applies to each individual node. A constant success probability, therefore, implies that a constant fraction of the nodes are expected to generate inaccurate neighbor counts---a result that is too weak in most scenarios. In the single-hop setting, by contrast, the success probability refers to the probability that {\em all} nodes generate good counts.

Also notice that two upper bounds are given for multi-hop all nodes counting without collision detection: $O(\lg^2{n_u})$ and $O(\lg^3{N})$. The first bound describes an algorithm that generates good neighbor counts but never terminates (specifically, each node must keep participating to help neighbors that are still counting). The second bound does terminate, but requires an upper bound $N$ on the maximum possible network size. This is the only algorithm we study that requires this information to work properly.

\begin{figure}[!t]
\centering
\begin{small}
\begin{tabular}{cccccc}
\toprule
& & \multicolumn{2}{c}{with constant probability} & \multicolumn{2}{c}{with high probability} \\
\cmidrule(lr){3-4}\cmidrule(lr){5-6}
& & no-CD & CD & no-CD & CD \\
\midrule
\multirow{2}{*}{\specialcell{all nodes in single-hop\\ (first variant)}} & lower bound & $\Omega(\lg{N})$ & $\Omega(\lg\lg{N})$ & $\Omega(\lg^2{N})$ & $\Omega(\lg{N})$ \\
\cmidrule(lr){2-6}
& upper bound & $O(\lg{n})$ & $O(\lg\lg{n})$ & $O(\lg^2{n})$ & $O(\lg{n})$ \\
\midrule
\multirow{2}{*}{\specialcell{designated node in multi-hop\\ (second variant)}} & lower bound & $\Omega(\lg{N_{\Delta}})$ & $\Omega(\lg\lg{N_{\Delta}})$ & $\Omega(\lg^2{N_{\Delta}})$ & $\Omega(\lg{N_{\Delta}})$ \\
\cmidrule(lr){2-6}
& upper bound & $O(\lg{n_{w}})$ & $O(\lg\lg{n_{w}})$ & $O(\lg^2{n_{w}})$ & $O(\lg{n_{w}})$ \\
\midrule
\multirow{2}{*}{\specialcell{all nodes in multi-hop\\ (third variant)}} & lower bound & --- & --- & $\Omega(\lg^2{N_\Delta})$ & $\Omega(\lg{N_\Delta})$ \\
\cmidrule(lr){2-6}
& upper bound & --- & --- & \specialcell{$O(\lg^2{n_{u}})$,\\ $O(\lg^3{N})$} & $O(\lg^2{n_{u}})$ \\
\bottomrule
\end{tabular}
\end{small}
\captionsetup{width=0.9\textwidth}
\caption{
Summary of all approximate neighbor counting results proved in this paper. In the above table, ``CD'' and ``no-CD'' denote ``with collision detection'' and ``without collision detection'' respectively, while ``high probability'' is expressed with respect to the parameter in the bound. The $N$ and $N_{\Delta}$ terms describe upper bounds on the maximum neighbor count in single-hop and multi-hop networks, respectively. Our lower bounds are expressed with respect to these maximum sizes, while our upper bounds are expressed with respect to the actual network sizes, with the exception of the $O(\lg^3{N})$ bound for multi-hop all nodes counting without collision detection. This is the only algorithm we study that requires knowledge of network statistics to work properly.
}\label{tbl-results}
\vspace{-2ex}
\end{figure}

\subparagraph*{Discussion.} For all but one cases that we have lower bounds, they match our upper bounds. For the single-hop results, these bounds also match the relevant bounds from the related single-hop contention resolution problem (c.f.,~\cite{newport14}). In fact, most of the lower bounds in this single-hop setting follow by reduction from contention resolution. That is, we show that if you can solve approximate neighbor counting fast, then you can also solve contention resolution fast---allowing existing lower bounds from the latter to carry over to the former.

The single-hop upper bounds, however, required more than the simple application of existing contention resolution strategies. In contention resolution, for example, if you get lucky with your coin flips, and a node broadcasts alone earlier than expected, this is good news---you have solved the problem even faster! In neighbor counting, however, this ``luck'' might lead you to output an inaccurate size estimate. The analysis used for neighbor counting must bound the probabilities of these precocious symmetry breaking events.

Another complexity of neighbor counting (in single-hop networks) as compared to contention resolution is that {\em all} nodes must learn an estimate. This requires extra mechanisms to ensure that once some nodes learn a good estimate, this information is spread to all others. The most difficult single-hop case is the combination of high probability correctness and collision detection. To achieve an accurate estimate in an optimal $O(\lg{n})$ rounds required the adaptation of a technique based on one-dimensional random walks~\cite{nakano02,brandes17}.

Obtaining lower bounds for the multi-hop designated node setting required technical innovations. In the single-hop setting, our lower bounds used reduction arguments that applied the contention resolution bounds from~\cite{newport14} as a black box. In the multi-hop designated node setting, by contrast, we were forced to open the black boxes and modify them to handle the issues specific to multi-hop topologies. For the particular case of collision detection and high probability, substantial new arguments were needed to transform the bound.

In the multi-hop all nodes setting, obtaining upper bounds also required techniques beyond standard symmetry breaking strategies, as each node may simultaneously participate in multiple estimation processes. Our collision detector algorithm for this case has nodes use detectable noise to notify neighbors that they are still counting. When collision detection is not available, we consider two different approaches and hence present two algorithms. The first one returns an estimate for $n_u$ in $O(\lg^2{n_u})$ rounds, which is correct with high probability in $n_u$. The second algorithm uses a ``double counting'' trick and takes longer time, but the returned estimate is correct with high probability in $N$.

Last but not least, we would like to clarify a point about our lower bound statements. As shown in Figure~\ref{tbl-results}, our lower bounds are expressed with respect to the maximum possible neighbor counts (e.g., $N$ and $N_{\Delta}$), whereas, to obtain the strongest possible results, our upper bounds are expressed with respect to the actual neighbor counts in the analyzed execution (e.g., $n$ and $n_u$). The right way to interpret our lower bounds is that they claim in a setting where the number of participants comes from a set of $N$ (or $N_{\Delta}$) possible participants, there \emph{exists} a subset of these participants for which the stated bound holds.

Our lower bound technique does not directly tell us anything about the {\em size} of the participant set that induces the slow performance. Given our matching upper bounds, however, we can conclude that the worst case participant sets for these algorithms must have a size close to the maximum bounds. Consider, for example, single-hop counting with no collision detection. The lower bound says that for each algorithm there exists a collection of no more than $N$ participants that requires $\Omega(\lg{N})$ rounds to generate a good count with constant probability. Our upper bound, on the other hand, guarantees a good count in $O(\lg{n})$ rounds with constant probability, where $n$ is the size of the participant set. It follows that when the lower bound is applied to our algorithm, the bad participant set must have a size that is polynomial in $N$ (i.e.,  $n=\Theta(N^{\gamma})$, for some constant $0<\gamma\leq 1$), as otherwise the existence of both bounds is a logical contradiction.

\section{Related Work}

Algorithms to reduce contention and enable communication on shared channels date back to the early days of networking (c.f., \cite{greenberg87,cidon88}), and remain an active area of study today. In the study of distributed algorithms for shared {\em radio} channels, many strategies explicitly execute approximate neighbor counting as a subroutine. For example, in their study of energy-efficient initialization with collision detection, Bordim et al.~\cite{bordim99} propose a protocol that returns an estimate of $n$ in the range $[n/(16\lg{n}),2n/\lg{n}]$ within $O(\lg^2{n})$ time, while requiring each node to be awake for at most $O(\lg{n})$ rounds. Similarly, Gilbert et al.~\cite{gilbert17} use approximate neighbor counting as part of a neighbor discovery protocol in cognitive radio networks. It is also common for algorithms in this setting to simply assume these estimates are provided in advance. E.g., the often-used {\em decay} strategy introduced by Bar-Yehuda et al.~\cite{bar-yehuda87}, requires a bound on local neighborhood size to limit the estimates it tests.

As mentioned throughout this paper, neighbor counting is often closely related to contention resolution, which requires a single node to broadcast alone on the channel. Some common contention resolution strategies implicitly provide this approximation as a side-effect of their operation (e.g.,~\cite{willard86,nakano02,jurdzinski05}). At the same time, under some assumptions, a good estimate simplifies the problem of contention resolution. As we detail throughout this paper, however, this relationship is not exact. Lower bounds for neighbor counting often require more intricate arguments than contention resolution, and in some cases, contention resolution algorithms require nontrivial extra analysis and mechanisms to provide counts. Teasing apart this intertwined relationship is one of the main contributions of this paper.

Others have directly studied approximate neighbor counting in radio networks. Jurdzinski et al.~\cite{jurdzinski02} develop an algorithm that provides a constant factor approximation of $n$ within $O(\lg^{2+\delta}{n})$ time without collision detection for arbitrary constant $\delta>0$. Their algorithm guarantees that no node participates in more than $O((\lg\lg{n})^\delta)$ rounds. (Our relevant algorithm only needs $O(\lg^2{n})$ rounds, but consumes more energy.) Caragiannis et al.~\cite{caragiannis05} devise two constant-factor approximation algorithms: the first one requires collision detection and takes $O((\lg{n})\cdot(\lg\lg{n}))$ time, while the second one works without collision detection and takes $O(\lg^2{n})$ time. (Our relevant algorithms only need $O(\lg{n})$ rounds with collision detection, and perform as well as theirs without collision detection.) In \cite{kabarowski06,klonowski12}, the authors discuss how to approximate network size when adversaries are present.

Approximate neighbor counting has also been studied in the \textsc{beeping} model~\cite{cornejo10}, which is similar to, but somewhat weaker than, the standard radio network model. In this setting, Chen et al.~\cite{chen13} conduct an excellent mini survey on recent works in RFID counting (e.g., \cite{zheng11,shahzad12,zheng13,chen13}). They conclude that a two-phase approach is the key to achieve efficient and accurate RFID counting. They also prove several lower bounds, one of which shows $\Omega(\lg\lg{n})$ rounds are needed to obtain a constant factor approximation with constant probability. More recently, Brandes et al.~\cite{brandes17} study how to efficiently estimate the size of a single-hop \textsc{beeping} network: they provide both lower and upper bounds for a parameterized approximation accuracy. Notice, the main objective of \cite{chen13} and \cite{brandes17} differs from ours not just in the model, but in that they seek a $(1+\epsilon)$ approximation of $n$ for any $\epsilon>0$ ($\epsilon$ can be non-constant). Nonetheless, they both use constant factor approximation as a key subroutine.

\section{Model and Problem}\label{sec-model}

We consider a synchronous radio network. We model the topology of this network with a connected undirected graph $G=(V,E)$, with the $n=|V|$ vertices corresponding to the radio devices (usually called {\em nodes} in this paper), and the edges in $E$ describing which node pairs are within communication range.

For each node $u\in V$, we use $\Gamma_u$ to denote the set of neighbors of $u$, and use $n_u=|\Gamma_u|$ to denote the number of neighbors of $u$. Let $n_{\Delta} = \max_{u\in V}\{n_u\}$. Our algorithms assume $n_u \geq 1$. That is, we do not confront the possibility of a node isolated from the rest of a multi-hop network, or a single-hop network consisting of only a single node (we see the so-called {\em loneliness detection} problem as an interesting but somewhat orthogonal challenge; e.g.,~\cite{ghaffari12}). For the ease of presentation, we assume $n_u$ and $n$ are always a power of two. This assumption does not affect the correctness or asymptotic time complexities of our results. We define $N$ and $N_{\Delta}$ to be upper bounds on the maximum possible size of $n$ and $n_{\Delta}$, respectively. To obtain the strongest and most general possible results, our algorithms are {\em not} provided with knowledge of $N$ and $N_{\Delta}$, with the exception of an $O(\lg^3{N})$ time algorithm for multi-hop all nodes counting without collision detection. %(We also give another algorithm in this setting that does not require knowledge of $N$, but cannot guarantee termination)

We divide time into discrete and synchronous {\em slots} that we also sometimes call {\em rounds}. We assume all nodes start execution during the same slot. (The definition of ``neighbor counting'' becomes complicated once nodes can activate in different time slots.) These assumptions imply nodes have access to a global clock. We assume each node is equipped with a half-duplex radio transceiver. That is, in each time slot, each node can choose to broadcast or listen, but cannot do both. If a node chooses to broadcast, then it gets no feedback from the communication channel. If a node chooses to listen and no neighbors of it broadcasts, then the node hears nothing (i.e., silence). If a node chooses to listen and exactly one of its neighbors broadcasts, then the node receives the message from that neighbor. Finally, if a node chooses to listen and at least two of its neighbors broadcast, then the result depends on the availability of a \emph{collision detection} mechanism: if collision detection is available, then the listening node hears noise; otherwise, the listening node hears nothing. As a result, without collision detection, a listening node cannot tell whether there are no neighbors broadcasting or there are multiple neighbors broadcasting.

In this paper, we are interested in the \emph{approximate neighbor counting} problem. This problem requires selected node(s) to obtain a constant factor approximation of their neighborhood size(s). In more detail, let constant $\tilde{c}\geq 1$ be the fixed approximation threshold for this problem. Each node $u$ that produces an estimate $\hat{n}_u$ must satisfy $n_u \leq \hat{n}_u \leq \tilde{c}\cdot n_u$. We consider three variations of this problem that differ with respect to the allowable network topologies and requirements on which nodes produce an estimate. The first variant assumes $G$ is single-hop and all nodes must produce an \emph{identical} estimate. The second variant assumes $G$ is multi-hop, but only a \emph{single} designated node $w$ must produce an estimate. The third variant is the same as the second, except that now {\em every} node must produce an estimate. We study randomized algorithms that are proved to be correct with a given probability $p$. In the single-hop variant, $p$ describes the probability of the event in which all nodes generate a single good approximation. In the multi-hop variants, by contrast, $p$ is the probability that an individual counting node generates a good approximation.

Throughout this paper, we cite the related {\em contention resolution} problem. In single-hop networks, the contention resolution problem is solved once some node broadcasts alone. Later in the paper, we consider a version of multi-hop contention resolution in which a single designated node must receive a message from a neighbor to solve the problem.

Finally, in the following, we say an event occurs {\em with high probability in parameter $k$} (or ``w.h.p.\ in $k$'') if it occurs with probability at least $1-1/k^\gamma$, for some constant $\gamma\geq 1$.

\section{Lower Bounds}

In this section, we presents our lower bounds for the approximate neighbor counting problem. We begin, in Section \ref{subsec-lower-bound-reduction} by looking at lower bounds that can be proved by reducing from the contention resolution problem. That is, in that subsection, we prove lower bounds by arguing that solving neighbor counting fast implies an efficient algorithm to contention resolution, allowing the relevant contention resolution lower bounds to apply.

We employ this approach to derive bounds for constant probability and high probability counting with no collision detection in both single-hop and designated node multi-hop settings. We also apply this approach to derive bounds for constant probability counting with collision detection in these settings. We cannot, however, apply this approach to high probability counting with collision detection, as the reduction itself is too slow compared to the desired bounds. We note that for the single-hop arguments, we leverage existing contention resolution bounds from \cite{newport14}. For the multi-hop arguments, however, we must first generalize the results from \cite{newport14} to hold for the considered network topology.

In Section \ref{subsec-lower-bound-new}, we look at lower bounds for high probability approximate neighbor counting with collision detection in both single-hop and designated node multi-hop settings. Unlike in Section \ref{subsec-lower-bound-reduction}, we cannot deploy a reduction-based argument. We instead prove a new lower bound that directly argues a sufficiently accurate estimate requires the stated rounds.

Finally, in Section \ref{subsec-lower-bound-carry-over} we look at lower bounds for the remaining case of multi-hop all nodes counting. We establish these bounds by reduction from designated node multi-hop bounds, as solving all nodes counting trivially also solves designated node counting.

\subsection{Lower Bounds via Reduction from Contention Resolution}\label{subsec-lower-bound-reduction}

We begin with our lower bound arguments that rely on reductions from contention resolution. For the single-hop scenario, we can reduce from single-hop  contention resolution and apply existing lower bounds from~\cite{newport14}. (Due to space constraint, see Appendix \ref{subsec-appx-omit-lower-contention-resolve-singlehop} for details on contention resolution lower bounds in single-hop networks.) For multi-hop designated node counting, however, we must first prove new contention resolution lower bounds.

In particular, consider the definition of multi-hop contention resolution in which there is a well-defined designated node $w$, and the goal is for exactly one of $w$'s neighbors---which is a size $n_w$ subset drawn from a size $N_\Delta$ universe---to broadcast alone in some time slot. At first glance, this problem might seem easier than single-hop contention resolution as we are provided with a designated node $w$ that could coordinate its neighbors in their quest to break symmetry among themselves. We prove, however, that this is not the case: the lower bounds are the same as their single-hop counterparts. In more detail, we prove the following two lemmas by adapting the techniques from~\cite{newport14} to this new set of assumptions (see Appendix \ref{subsec-appx-omit-lower-proof} for the omitted proofs of this section):

\begin{lemma}\label{lemma-contention-resolution-multihop-no-cd}
Let $\mathcal{A}$ be an algorithm that solves contention resolution in $g(N_{\Delta})$ time slots with probability $p$ in multi-hop networks with no collision detection. It follows that: (a) if $p$ is some constant, then $g(N_{\Delta})\in\Omega(\lg{N_{\Delta}})$; and (b) if $p\geq 1-1/N_{\Delta}$, then $g(N_{\Delta})\in\Omega(\lg^2{N_{\Delta}})$.
\end{lemma}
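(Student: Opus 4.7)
The plan is to adapt the single-hop contention resolution lower bound framework of~\cite{newport14} to a carefully chosen family of multi-hop instances. I would consider star topologies in which the designated node $w$ sits at the center and a universe $U$ of $N_{\Delta}$ candidate leaves surrounds it, with no edges among the leaves; the adversary then selects an unknown subset $S \subseteq U$ of size $k$ to be the actual neighbors of $w$, and the task is for exactly one node of $S$ to broadcast alone in some round. The key structural observation is that every leaf $v \in U$ has $w$ as its only possible neighbor, so $v$'s transcript, and therefore its per-round broadcast probability, depends on $S$ only indirectly through the broadcast decisions of $w$ itself.

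For part~(a), I would work against an input distribution that selects $|S|$ uniformly from the geometric ladder $\{1, 2, 4, \ldots, N_{\Delta}\}$ and then picks $S$ uniformly among subsets of that size. After conditioning on the execution trace of $w$, the success event in each round reduces to ``exactly one leaf in $S$ broadcasts'', at which point the core probability calculation of~\cite{newport14} applies: no fixed vector of per-leaf broadcast probabilities can simultaneously deliver noticeable success probability on more than an $O(1/\lg N_{\Delta})$ fraction of the input-size ladder. Summing over $g(N_{\Delta})$ rounds and averaging over $|S|$ then forces $g(N_{\Delta}) = \Omega(\lg N_{\Delta})$. For part~(b), I would invoke the two-phase decomposition of~\cite{newport14}: partition the rounds into those in which the total expected number of leaf broadcasts is at most $1$ (ineffective for large $|S|$) and those in which it exceeds $1$ (ineffective for small $|S|$ because of collisions), and for each of the $\Theta(\lg N_{\Delta})$ geometric candidate sizes argue that $\Omega(\lg N_{\Delta})$ rounds must be dedicated to drive the failure probability below $1/N_{\Delta}$, summing to $g(N_{\Delta}) = \Omega(\lg^2 N_{\Delta})$.

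The main obstacle, not present in the single-hop setting, is that $w$ is an active participant whose broadcasts can serve as coordination signals to the leaves, giving them information that depends on $w$'s aggregate observations of $S$. My plan for handling this is a conditioning argument: after invoking Yao's principle to fix a deterministic realization of the algorithm, iterate over all possible traces of $w$ --- its listen/broadcast schedule together with one Boolean observation per listening round in the no-CD setting --- and verify that for each fixed trace the leaves' broadcast decisions factor into independent functions of their own coins alone. Each fixed trace thus reduces to the single-hop regime in which the Newport-style calculations apply; averaging over the distribution of traces induced by the random $S$ preserves the asymptotic bounds, once one checks that the at-most-one-bit-per-round of side information $w$ can inject is insufficient to improve the leaf-side symmetry-breaking beyond what the slack in those inequalities absorbs.
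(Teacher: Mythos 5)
Your overall plan---fix a star with universe $[N_\Delta]$ around the designated node $w$, condition on $w$'s trace so that each leaf's behavior becomes a function of its own coins, and then rerun the single-hop probabilistic calculations of \cite{newport14} (geometric ladder of candidate sizes for part (a), the two-phase round decomposition for part (b))---is a viable route, and it genuinely differs from the paper's: the paper instead constructs a fresh reduction from multi-hop contention resolution to the $N_\Delta$-hitting game and invokes that game's $\Omega(\lg k)$ and $\Omega(\lg^2 k)$ bounds as a black box, so it never has to reopen the probability estimates at all. The decisive issue in either route is the one obstacle you correctly single out, namely the feedback loop through $w$, and this is where your proposal has a gap.

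You propose to ``check that the at-most-one-bit-per-round of side information $w$ can inject is insufficient to improve the leaf-side symmetry-breaking beyond what the slack in those inequalities absorbs.'' That check cannot succeed as a general principle: if $w$ really extracted one informative bit about $|S|$ per listening round and relayed it, the leaves could binary-search for $\lg|S|$ and finish in $O(\lg\lg N_\Delta)$ rounds---this is exactly why the collision-detection bound (Lemma \ref{lemma-contention-resolution-multihop-cd}) is exponentially weaker---so no slack in the Newport inequalities can absorb a genuine bit per round. What actually rescues the argument in the no-CD case, and what your write-up never states, is that $w$'s feedback carries \emph{zero} information about $S$ before the first success: in every listening slot prior to success either no neighbor in $S$ broadcasts or at least two do, and without collision detection both events produce silence. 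Hence $w$'s entire trace is a function of its own coins alone, independent of $S$; the conditional distribution of $S$ given the trace is unchanged, and the averaging step you want becomes legitimate (this is precisely what the paper's slot-by-slot induction establishes when it shows each node's view is identical in $G_*$ and in the full star $G'_*$). Without that observation your conditioning argument is unjustified; with it, the ``slack'' discussion is unnecessary. A secondary point to pin down: the success event here is ``exactly one neighbor in $S$ broadcasts while $w$ listens,'' and you should verify that conditioning on non-success up to round $t$ coincides with the event that keeps all of $w$'s observations silent, so that the indistinguishability and the probability bookkeeping refer to the same conditioning.
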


\begin{lemma}\label{lemma-contention-resolution-multihop-cd}
Let $\mathcal{A}$ be an algorithm that solves contention resolution in $g(N_{\Delta})$ time slots with probability $p$ in multi-hop networks with collision detection. It follows that if $p$ is some constant, then $g(N_{\Delta})\in\Omega(\lg\lg{N_{\Delta}})$.
\end{lemma}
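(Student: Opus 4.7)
My plan is to adapt the single-hop collision-detection lower bound of Newport~\cite{newport14} to the multi-hop designated-node setting, mirroring the way Lemma~\ref{lemma-contention-resolution-multihop-no-cd} generalizes the corresponding no-CD single-hop bounds. I would fix a carefully chosen hard topology: a star centered at the designated node $w$ with $N_\Delta$ potential neighbors as leaves (with a dummy pendant edge, if necessary, to make the graph formally multi-hop). Since a lower bound on any particular topology implies one on the general multi-hop problem, it suffices to prove the bound in this star. By Yao's minimax principle I would then fix a worst-case deterministic algorithm $\mathcal{A}$ and exhibit an input distribution over active neighbor subsets $S\subseteq[N_\Delta]$ forcing the claim.

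The key structural observation is that in the star, each leaf's only feedback is the sequence of $w$'s broadcasts, while $w$'s feedback is its CD sequence in $\{\text{silence},\text{single},\text{noise}\}$. Hence $\mathcal{A}$ is captured by a ternary decision tree of depth $T$ indexed by $w$'s CD-history; each root-to-leaf path fixes $w$'s broadcast pattern and, via the usual symmetrization over leaf identities, a common probability schedule $(p_1,\dots,p_T)$ used by the leaves. At this point the instance has essentially been collapsed into the single-hop CD setting of~\cite{newport14}, with $w$ playing the role of the shared-channel CD sensor. The argument then follows the Newport covering recipe: for $\mathcal{A}$ to succeed with constant probability on input size $n_w$, the branch consistent with that input must contain some round $i$ with $p_i\in[\alpha/n_w,\beta/n_w]$ for absolute constants $\alpha,\beta$, since outside this range the probability that exactly one leaf broadcasts is $o(1)$. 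Each probability value covers at most a constant number of doubling-scales of $n_w$, and the ternary tree has at most $3^T$ paths, so covering the $\Theta(\lg N_\Delta)$ relevant scales in $\{1,2,4,\dots,N_\Delta\}$ forces $T=\Omega(\lg\lg N_\Delta)$.

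The main obstacle is arguing that $w$'s freedom to broadcast, absent in the original single-hop problem, does not open a shortcut. Two worries must be addressed: first, $w$ could try to use its own rounds to ``steer'' the leaves' probability schedules based on accumulated CD feedback; second, in rounds where $w$ broadcasts rather than listens, the leaves' successes cannot be observed. I would resolve the first by noting that any information $w$ broadcasts is a deterministic function of its CD-history, which is already represented by the branching of the decision tree, so allowing $w$ to transmit adds no new expressive power beyond a larger constant hidden in the schedule bookkeeping. The second is a constant-factor issue: at most half the rounds can be rounds where $w$ listens (and hence where any leaf can succeed), so the covering argument loses only a factor of two. I would close by verifying that a uniform input distribution over the dyadic scales $\{2^0,2^1,\dots,2^{\lg N_\Delta}\}$ concentrates enough mass on each scale to force the counting lower bound on every feedback branch, a direct adaptation of the corresponding adversarial step in~\cite{newport14}.
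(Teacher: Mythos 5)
Your plan is sound and reaches the right quantitative conclusion, but it is a genuinely different route from the paper's. The paper does not run a covering argument directly on the star: it reduces the $N_\Delta$-hitting game to multi-hop CD contention resolution by having the player build a complete binary tree of depth $g(N_\Delta)-1$ whose $2^{g(N_\Delta)}-1$ nodes enumerate every possible silence/noise feedback sequence for $w$, simulating $\mathcal{A}$ on the full star $G'_*$ along each path and submitting the resulting broadcast set as a proposal; since one guess must match the real feedback sequence for the true target set, the player wins in $2^{g(N_\Delta)}-1$ rounds, and the $\Omega(\lg k)$ hitting-game bound (Lemma~\ref{lemma-k-hitting}) then forces $g(N_\Delta)\in\Omega(\lg\lg N_\Delta)$. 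Your direct decision-tree argument inlines exactly the same counting ($3^T$ or $2^T$ feedback branches, each contributing schedules that serve only a few dyadic scales, against $\Theta(\lg N_\Delta)$ scales to cover), and your key structural observation---that in the star every leaf's view is a function of $w$'s broadcasts, hence of $w$'s CD history, so the whole execution branches only on that history---is precisely the insight the paper's simulation tree exploits. What the paper's reduction buys is that all the delicate probabilistic bookkeeping is absorbed by the already-proved hitting-game bound; your version has to redo it, and two spots in your sketch need tightening. First, you cannot simultaneously derandomize via Yao and speak of a ``common probability schedule $(p_1,\dots,p_T)$'': after fixing the randomness each branch-round pair yields a fixed \emph{subset} of the ID universe, and the covering must be argued over subset densities against a distribution on active sets (which is essentially re-proving the hitting-game lemma). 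Second, the union bound over branch-round pairs requires care with conditioning: the event ``branch $b$ is followed'' is determined by aggregate leaf behavior and correlates the leaves' later decisions, which is exactly the adaptivity issue the enumerate-all-feedback-sequences trick sidesteps. Also note that since success requires $w$ to \emph{receive} a message, rounds in which $w$ broadcasts cannot succeed at all, so they only help the bound---no factor-of-two accounting is needed. Finally, ``each probability value covers a constant number of scales'' should be $O(\lg(T\cdot 3^T))=O(T)$ scales once the per-term success threshold drops to $p/(T\cdot 3^T)$, but this does not change the $\Omega(\lg\lg N_\Delta)$ conclusion.
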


With the needed contention resolution lower bounds in hand, we turn our attention to reducing this problem to approximate neighbor counting. Take the single-hop scenario as an example, the basic idea behind the reduction is that once nodes have an estimate $\hat{n}$ of $n$, they can simply broadcast with probability $1/\hat{n}$ in each time slot. If this estimate is good, then in each time slot, they have a constant probability of isolating a broadcaster, thus solving contention resolution. Moreover, repeating this step multiple times increases the chance of success proportionally. Building on these basic observations, we prove the following:

\begin{lemma}\label{lemma-counting-and-contention-link}
Assume there exists an algorithm $\mathcal{A}$ that solves approximate neighbor counting in $h(N)$ (or, $h(N_\Delta)$ in the multi-hop scenario) time slots with probability $p$. Then, there exists an algorithm $\mathcal{B}$ that solves contention resolution in $2(h(N)+k)$ (resp., $2h(N_\Delta)+k$ in the multi-hop scenario) time slots with probability at least $(1-e^{-k/(4\tilde{c})})\cdot p$. Here, $k\geq 1$ is an integer, and $\tilde{c}\geq 1$ is the constant defined in Section \ref{sec-model}.
\end{lemma}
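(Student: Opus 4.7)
The plan is to construct $\mathcal{B}$ by composing one execution of $\mathcal{A}$ with a short \emph{isolation phase}. First I would run $\mathcal{A}$ for $h(N)$ (resp., $h(N_\Delta)$) slots so that, with probability at least $p$, every participating node in the single-hop setting, or the designated node $w$ in the multi-hop setting, obtains an estimate $\hat{n}$ satisfying $n\leq \hat{n}\leq \tilde{c}\cdot n$. In the multi-hop setting I would also insert a short announcement sub-phase in which $w$, now silent on the channel because $\mathcal{A}$ has terminated, broadcasts $\hat{n}_w$ cleanly to its neighbors. Each remaining slot of $\mathcal{B}$ then belongs to the isolation phase, in which every candidate broadcaster independently transmits with probability $1/\hat{n}$.

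The analysis reduces to a single-round isolation calculation. Conditioning on the event $\mathcal{E}$ that $\mathcal{A}$ returned a valid estimate, the probability that exactly one of the $n$ (resp., $n_w$) candidate broadcasters transmits in a given isolation round is
\[
n\cdot \frac{1}{\hat{n}}\cdot \left(1-\frac{1}{\hat{n}}\right)^{n-1}\;\geq\;\frac{n}{\hat{n}}\cdot \left(1-\frac{1}{\hat{n}}\right)^{\hat{n}}\;\geq\;\frac{1}{\tilde{c}}\cdot \frac{1}{4}\;=\;\frac{1}{4\tilde{c}},
\]
where the first inequality uses $n-1< n\leq \hat{n}$, and the second uses $n/\hat{n}\geq 1/\tilde{c}$ together with the standard bound $(1-1/x)^x\geq 1/4$ for $x\geq 2$. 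Because the coin flips across distinct isolation rounds are mutually independent, the probability that no isolation round succeeds is at most $(1-1/(4\tilde{c}))^k\leq e^{-k/(4\tilde{c})}$. Multiplying by $\Pr[\mathcal{E}]\geq p$ yields the target success probability $(1-e^{-k/(4\tilde{c})})\cdot p$.

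The main obstacle is the multi-hop case, in which the natural broadcasters for contention resolution, namely $w$'s neighbors, do not themselves compute the estimate. I would resolve this via the announcement sub-phase sketched above: since $\mathcal{A}$ has already terminated, $w$'s broadcast of $\hat{n}_w$ reaches every neighbor without collision, after which the neighbors execute the isolation phase exactly as in the single-hop case. A secondary, more mundane difficulty is bookkeeping the slot counts, since $\mathcal{A}$ need not explicitly signal termination to $\mathcal{B}$; absorbing this into a constant multiplicative slack (together with the announcement slot(s) needed in the multi-hop case) accounts for the factor of two in front of $h(N)$ and $h(N_\Delta)$ in the statement, and for the $2k$ versus $k$ discrepancy between the two settings.
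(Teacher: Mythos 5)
Your probability calculation is exactly the paper's: conditioning on a valid estimate $\hat{n}$ with $n\leq\hat{n}\leq\tilde{c}n$, each isolation slot succeeds with probability $n\cdot\frac{1}{\hat{n}}\cdot(1-\frac{1}{\hat{n}})^{n-1}\geq\frac{1}{4\tilde{c}}$, and $k$ independent such slots give the stated bound. The gap is in the construction of $\mathcal{B}$ itself. You propose to run $\mathcal{A}$ for $h(N)$ slots and \emph{then} switch to an announcement/isolation phase, but the nodes cannot execute this schedule: the model explicitly withholds $N$ (and $N_\Delta$) from the algorithms, so no node can count off $h(N)$ slots, and a generic $\mathcal{A}$ provides no global termination signal that all nodes observe simultaneously. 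Your closing remark that this is ``mundane bookkeeping'' absorbed into ``a constant multiplicative slack'' names the right difficulty but does not supply a mechanism --- a multiplicative constant does not tell a node \emph{when} to stop simulating $\mathcal{A}$ and start isolating, and if different nodes switch at different times the isolation rounds are corrupted by residual $\mathcal{A}$-traffic.

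The paper's fix, which is where the factors of two actually come from, is to \emph{interleave} rather than concatenate: each ``step'' of $\mathcal{B}$ has two slots, the odd slot runs $\mathcal{A}$ and the even slot is reserved for isolation (single-hop) or for $w$'s announcement followed by isolation (multi-hop). A node joins the even-slot activity as soon as it individually holds an estimate and stays silent there otherwise; no node ever needs to know whether $\mathcal{A}$ has finished globally. The analysis then only credits the even slots of steps $h(N)+1,\dots,h(N)+k$, which are guaranteed (with probability $p$) to be clean isolation rounds, giving $2(h(N)+k)$ slots in the single-hop case and $2h(N_\Delta)+k$ in the multi-hop case. You should replace your sequential phase structure with this interleaving; the rest of your argument then goes through unchanged.
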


Combining the reduction described in Lemma \ref{lemma-counting-and-contention-link} with the single-hop lower bounds for contention resolution from~\cite{newport14} and the new multi-hop lower bounds proved above, we get the following lower bounds for approximate neighbor counting:

\begin{theorem}\label{thm-counting-lower-bound-part1}
In a single-hop radio network containing at most $N$ nodes:

\begin{itemize}[itemsep=0.5pt, topsep=0.5pt, parsep=0.5pt]
	\item When collision detection is not available, solving approximate neighbor counting with constant probability requires $\Omega(\lg{N})$ time in the worst case; solving approximate neighbor counting with high probability in $N$ requires $\Omega(\lg^2{N})$ time in the worst case.
	\item When collision detection is available, solving approximate neighbor counting with constant probability requires $\Omega(\lg\lg{N})$ time in the worst case.
\end{itemize}

\noindent In a multi-hop radio network in which the designated node has at most $N_{\Delta}$ neighbors:

\begin{itemize}[itemsep=0.5pt, topsep=0.5pt, parsep=0.5pt]
	\item When collision detection is not available, solving approximate neighbor counting with constant probability requires $\Omega(\lg{N_{\Delta}})$ time in the worst case; solving approximate neighbor counting with high probability in $N_{\Delta}$ requires $\Omega(\lg^2{N_{\Delta}})$ time in the worst case.
	\item When collision detection is available, solving approximate neighbor counting with constant probability requires $\Omega(\lg\lg{N_{\Delta}})$ time in the worst case.
\end{itemize}
\end{theorem}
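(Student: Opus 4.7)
The plan is to prove each bound by contraposition using the reduction of Lemma \ref{lemma-counting-and-contention-link}. Specifically, for each of the four sub-items I assume a counting algorithm runs asymptotically faster than the claimed bound, feed it into Lemma \ref{lemma-counting-and-contention-link} with an appropriately chosen $k$, and argue that the resulting contention resolution algorithm violates either the single-hop contention resolution lower bounds of~\cite{newport14} (summarized in the appendix) or one of the new multi-hop bounds stated in Lemmas \ref{lemma-contention-resolution-multihop-no-cd} and \ref{lemma-contention-resolution-multihop-cd}. All cases therefore boil down to choosing $k$ so as to balance the running-time cost $2(h(N)+k)$ against the probability loss factor $1-e^{-k/(4\tilde{c})}$.

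For the constant-probability items---the $\Omega(\lg N)$ and $\Omega(\lg\lg N)$ single-hop bounds and the $\Omega(\lg N_\Delta)$ and $\Omega(\lg\lg N_\Delta)$ multi-hop bounds---I would take $k$ to be a sufficiently large constant (depending only on $\tilde{c}$). Then $1-e^{-k/(4\tilde{c})}$ is itself a positive constant, so the reduced contention resolution algorithm also succeeds with constant probability, and its time bound $2(h(N)+k)$ is asymptotically $O(h(N))$. Any counting algorithm achieving $o(\lg N)$ (respectively $o(\lg\lg N)$) would thus yield a contention resolution algorithm breaking the matching constant-probability lower bound.

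For the two high-probability-in-$N$ (resp.\ $N_\Delta$) items without collision detection, I would set $k = c\lg N$ (resp.\ $c\lg N_\Delta$) for a sufficiently large constant $c$, so that $e^{-k/(4\tilde{c})} \leq 1/N$ (resp.\ $\leq 1/N_\Delta$). The reduced contention resolution algorithm then succeeds with probability at least $(1-1/N)^2 \geq 1 - 2/N$, which, after absorbing the factor of $2$ into a slightly larger choice of $c$, satisfies the high-probability hypothesis of Lemma \ref{lemma-contention-resolution-multihop-no-cd}(b) and its single-hop analogue. Its running time is $2h(N) + O(\lg N)$, so a sub-$\lg^2 N$ bound on $h(N)$ would contradict the $\Omega(\lg^2 N)$ contention resolution lower bound. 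The only real subtlety is ensuring that the additive $O(\lg N)$ slack introduced by amplification does not eat into the quadratic target; since $\lg N$ is strictly lower-order than $\lg^2 N$, this is automatic, and I expect no serious obstacle beyond careful constant-tuning in the choice of $c$.
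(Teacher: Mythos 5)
Your overall strategy is exactly the paper's: reduce via Lemma~\ref{lemma-counting-and-contention-link} and invoke the contention resolution lower bounds, choosing $k$ constant for the constant-probability items and $k=\Theta(\lg N)$ for the high-probability items. The four constant-probability bounds are handled correctly and identically to the paper.

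There is, however, a genuine gap in the high-probability cases. After the reduction your contention resolution algorithm succeeds with probability $\bigl(1-e^{-k/(4\tilde{c})}\bigr)\cdot p$, and you claim the resulting $1-2/N$ can be pushed up to the threshold $1-1/N$ required by Lemma~\ref{lemma-contention-resolution-singlehop-no-cd}(b) (resp.\ Lemma~\ref{lemma-contention-resolution-multihop-no-cd}(b)) ``by a slightly larger choice of $c$.'' This is false: enlarging $c$ only shrinks the amplification loss $e^{-k/(4\tilde{c})}$; it does nothing to the factor $p$, which the hypothesis only guarantees to be at least $1-1/N^{\epsilon}$ and which, in the worst case $\epsilon=1$, already sits at $1-1/N$. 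The product is therefore at best $1-1/N-o(1/N)$, strictly below the $1-1/N$ threshold of the black-box lemma, for every choice of $c$. The paper closes this gap not by tuning constants but by opening the black box: it descends to the underlying $k$-hitting game (Lemma~\ref{lemma-k-hitting}) with a rescaled universe size $N'=N/2$, observes that $1-2/N^{\epsilon}\geq 1-1/(N')^{\epsilon}$ for $\epsilon\geq 1$ while $\lg N'=\Theta(\lg N)$, and derives the contradiction there. So your proof needs either this rescaling step or an explicit argument that the $\Omega(\lg^2 N)$ contention resolution bound is robust to a constant-factor degradation of the failure probability; as written, the ``absorb the factor of $2$ into $c$'' step does not go through.
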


\subsection{Custom Lower Bounds for High Probability and Collision Detection}\label{subsec-lower-bound-new}

At this point, for single-hop and designated node multi-hop variants of the approximate neighbor counting problem, the only lower bounds missing are the ones of ensuring high success probability with collision detection. As we detail in Appendix \ref{subsec-appx-omit-lower-why-reduc-fail}, our previous reduction-based approach no longer works in these scenarios. (Roughly speaking, the reduction itself takes at least as long as the lower bound we intend to prove.) Therefore, we must construct custom lower bounds for this problem and exact set of assumptions.

We start by proving the following combinatorial result:

\begin{lemma}\label{lemma-comb}
Let $c$ and $k$ be two positive integers such that $c\leq k$. Let $\mathcal{R}$ be the set containing all size $c$ subsets from $[k]=\{1,2,\cdots,k\}$. Let $\mathcal{H}$ be an arbitrary set of size less than $\lg(k/c)$ such that each element in $\mathcal{H}$ is a subset of $[k]$. Then, there exists some $R\in\mathcal{R}$ such that for each $H\in\mathcal{H}$, either $R\subseteq H$ or $R\cap H=\emptyset$.
\end{lemma}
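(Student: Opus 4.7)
The plan is to recast the condition ``for each $H\in\mathcal{H}$, either $R\subseteq H$ or $R\cap H=\emptyset$'' in terms of the partition of $[k]$ induced by $\mathcal{H}$, and then apply a pigeonhole argument.

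First, I would define, for each sign vector $\mathbf{b}\in\{0,1\}^{\mathcal{H}}$, the \emph{atom}
\[
A_{\mathbf{b}} \;=\; \bigcap_{H\in\mathcal{H}} H^{b(H)}, \qquad \text{where } H^{1}=H \text{ and } H^{0}=[k]\setminus H.
\]
The nonempty atoms form a partition of $[k]$, and there are at most $2^{|\mathcal{H}|}$ of them. The critical observation (which I would verify in one sentence) is that $R$ satisfies the desired property if and only if $R$ is contained in a single atom: indeed, if $R$ contains two elements $i,j$ lying in different atoms, there must exist some $H\in\mathcal{H}$ with $i\in H$ and $j\notin H$, which simultaneously forbids $R\subseteq H$ and $R\cap H=\emptyset$; conversely, if $R$ lies in one atom, then for every $H\in\mathcal{H}$ either $R\subseteq H$ or $R\subseteq [k]\setminus H$ by construction.

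Next I would apply pigeonhole. Since $|\mathcal{H}|<\lg(k/c)$, the number of atoms is at most $2^{|\mathcal{H}|}<k/c$. Because the atom sizes sum to $k$, the average atom size is strictly greater than $c$, so some atom $A$ satisfies $|A|\geq c$. Pick any size-$c$ subset $R\subseteq A$; this $R$ lies in $\mathcal{R}$ and, by the observation above, satisfies the required property for every $H\in\mathcal{H}$.

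There is no real obstacle here beyond identifying the right viewpoint: the only content is noticing that the sets in $\mathcal{H}$ interact with $R$ only through the atoms they cut $[k]$ into, after which a one-line pigeonhole on atom sizes finishes the argument. The whole proof is two short paragraphs once this atom structure is set up.
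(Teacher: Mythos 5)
Your proof is correct and is essentially the paper's own argument in different clothing: the atoms $A_{\mathbf{b}}$ are exactly the equivalence classes of the binary labels $s_i$ the paper assigns to elements of $[k]$, and the pigeonhole step (fewer than $k/c$ classes covering $k$ elements forces a class of size at least $c$) is identical. No substantive difference.
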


Intuitively, a set $\mathcal{H}$ can be interpreted as a \emph{broadcast schedule} generated by an algorithm $\mathcal{A}$: a node labeled $i$ broadcasts in slot $j$ if and only if it is activated, and $i$ is in the $j$\textsuperscript{th} set in $\mathcal{H}$. Given this interpretation, Lemma \ref{lemma-comb} suggests: for both the single-hop and the multi-hop designated node scenario, for any approximate neighbor counting algorithm $\mathcal{A}$, and for any broadcast schedule generated by $\mathcal{A}$ of length less than $\lg{(k/c)}$, there exists a set of $c$ nodes (or a set of $c$ neighbors of the designated node in the multi-hop scenario) such that if these $c$ nodes are activated and execute $\mathcal{A}$, then during each of the first $\lg{(k/c)}-1$ time slots, either none of them broadcast or all of them broadcast. This further implies, if only two of these $c$ nodes are activated, then their view (of the first $\lg{(k/c)}-1$ time slots of the execution) is indistinguishable from the case in which all of these $c$ nodes are activated.

Now, imagine an adversary who samples a size $c$ subset from $\mathcal{R}$ with uniform randomness, and then flips a fair coin to decide whether to activate all these $c$ nodes, or just two of them. If the adversary happens to have chosen the set $R$ proved to exist in Lemma \ref{lemma-comb}, then by the end of slot $\lg{(k/c)}-1$, algorithm $\mathcal{A}$ cannot distinguish between two and $c$ nodes. Notice, if $c$ is large compared to the approximation threshold $\tilde{c}$, then this difference matters: outputting two when the real count is $c$ (or vice versa) is unacceptable. Thus, in such case, the algorithm gets the right answer with only probability $1/2$---not enough for high success probability.

A complete and rigorous proof for the above intuition is actually quite involved, again see Appendix \ref{subsec-appx-omit-lower-proof} for more details. In the end, we obtain the following lower bounds:

\begin{theorem}\label{thm-counting-lower-bound-part2}
Assume collision detection is available, then:

\begin{itemize}[itemsep=0.5pt, topsep=0.5pt, parsep=0.5pt]
	\item In a single-hop radio network containing at most $N$ nodes, solving approximate neighbor counting with high probability in $N$ requires $\Omega(\lg{N})$ time in the worst case.
	\item In a multi-hop radio network in which the designated node has at most $N_{\Delta}$ neighbors, solving approximate neighbor counting with high probability in $N_{\Delta}$ requires $\Omega(\lg{N_{\Delta}})$ time in the worst case.
\end{itemize}
\end{theorem}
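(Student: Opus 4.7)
My plan is to prove both bounds by an adversarial indistinguishability argument that combines Yao's minimax principle with a staged refinement packaging Lemma~\ref{lemma-comb} into an online pigeonhole for the adaptive setting. Fix a constant $c > 2\tilde{c}$ large enough that the valid estimate intervals $[2, 2\tilde{c}]$ and $[c, \tilde{c} c]$ for the multi-hop case (and the analogous $[1, \tilde{c}]$ versus $[c-1, \tilde{c}(c-1)]$ for the single-hop case) are disjoint. It then suffices to exhibit, for every deterministic algorithm $\mathcal{A}$ running in $T < \lg(N/c)$ rounds, an input distribution on which $\mathcal{A}$ errs with probability at least $1/2$; Yao's principle lifts this to randomized algorithms and contradicts the $1 - 1/N^{\gamma}$ success requirement, forcing $T = \Omega(\lg N)$ (respectively $\Omega(\lg N_{\Delta})$).

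Given $\mathcal{A}$ in the single-hop setting, I build a nested chain $[N] = Q_0 \supseteq Q_1 \supseteq \cdots \supseteq Q_T$ together with a hypothesised transcript $\tau^{(0)}, \ldots, \tau^{(T)}$ as follows. At stage $t$, let $S_t \subseteq [N]$ be the set of IDs that $\mathcal{A}$ directs to broadcast in round $t$ given observation history $\tau^{(t-1)}$; split $Q_{t-1}$ into $Q_{t-1} \cap S_t$ and $Q_{t-1} \setminus S_t$, keep the larger part as $Q_t$, and extend $\tau$ by either ``broadcast'' (yielding observation $\bot$) or ``silence'' according to which half was kept. Since $|Q_t| \geq |Q_{t-1}|/2$, after $T < \lg(N/c)$ stages we have $|Q_T| > c$, so I may fix any size-$c$ subset $R \subseteq Q_T$ and any pair $\{i, j\} \subseteq R$. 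A straightforward induction on $t$ then shows that, when $\mathcal{A}$ is run on the activated set $R$, every node of $R$ actually observes precisely $\tau^{(t-1)}$ at the start of round $t$: by construction the ``all or nothing'' property forces $R$ to either uniformly broadcast (so everyone records $\bot$) or uniformly listen, and in the latter case each listener hears silence because no node outside $R$ is active. The identical induction applied to the activated set $\{i, j\} \subseteq R$ produces the same transcript at $i$ and $j$---their listen rounds are silent and their broadcast rounds give $\bot$---so $\mathcal{A}$'s local state at $i$ is indistinguishable between the two executions and forces the same output, contradicting the disjointness of the acceptable intervals.

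For the multi-hop designated-node variant I replay the construction inside a star graph with $w$ at the centre and the universe of $N_{\Delta}$ potential leaves playing the role of $[N]$. The staged refinement now also records $w$'s own action in each transcript entry; this is well-defined because $w$'s behaviour depends only on $w$'s own observation history, and in each listen round $w$ hears collective silence while in each broadcast-by-$R$ round $w$ hears noise (the $c \geq 2$ broadcasters in $R$ trigger a collision with CD exactly as the two broadcasters in the pair scenario do, so $w$'s transcript is identical across the two executions). Each leaf's only neighbour is $w$, so its observations are determined entirely by $w$'s actions and match across the two executions, letting the induction extend verbatim to give $T = \Omega(\lg N_{\Delta})$. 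The hardest part I anticipate is precisely the interplay between adaptivity and the ``all or nothing'' structure: the sets $S_t$ realised by $\mathcal{A}$ depend on the transcript built so far, so one cannot invoke Lemma~\ref{lemma-comb} on a single precomputed schedule, nor can one read off the desired transcript from a canonical all-nodes execution, because silence in the $R$-execution need not match silence in that canonical run once broadcasts from non-$R$ nodes are present. Interleaving the pigeonhole step with the hypothesised transcript is what reconciles both issues and is, I expect, the technical heart of the argument.
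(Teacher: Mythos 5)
Your staged-refinement construction and the indistinguishability induction are sound and match the substance of the paper's proof. The paper obtains the same ``all-or-nothing'' set $R$ by running a canonical \emph{simulation} in which every listener is fed silence regardless of what other nodes do (so the broadcast schedule is well defined despite adaptivity) and then applying Lemma~\ref{lemma-comb} once; your interleaved halving produces the same object with the same quantitative loss, and your single-transcript treatment of the star (recording $w$'s action and using the fact that $c\geq 2$ broadcasters and $2$ broadcasters both register as noise at a listening $w$) is a clean equivalent of the paper's $k$ per-leaf simulations $\alpha_u$. Your stated worry that ``one cannot invoke Lemma~\ref{lemma-comb} on a single precomputed schedule'' is therefore unfounded---the silence-fed canonical run resolves exactly the mismatch you describe---but your interleaved alternative is equally valid.

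The genuine gap is in the reduction from randomized to deterministic algorithms. You propose to exhibit, \emph{for every deterministic} $\mathcal{A}$, an input distribution on which $\mathcal{A}$ errs with probability $1/2$, and then ``lift'' this via Yao's principle. That is the wrong quantifier order: Yao requires a \emph{single, algorithm-independent} hard distribution against which every deterministic algorithm errs; a distribution tailored to $\mathcal{A}$ (here, concentrated on your $R$ and a pair inside it, both of which depend on $\mathcal{A}$ through the sets $S_t$) yields no conclusion about randomized algorithms. Once you fix the distribution in advance---sample a uniformly random size-$c$ subset of a $k$-element sub-universe and flip a fair coin between activating all $c$ nodes or just $2$ of them---the probability of landing on the bad set $R$ for the realized algorithm and randomness is only about $1/k^{c}$, so the error guarantee degrades from $1/2$ to roughly $1/(2k^{c})$. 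This is still sufficient, but it forces the extra accounting the paper performs: set $1/(2k^{c})=1/N^{\epsilon}$, check that $k\leq N$, and verify that $\lg(k/c)-1$ remains $\Theta(\lg N)$ for constant $c>2\tilde{c}$ and constant $\epsilon$. As written, your argument does not actually contradict a success probability of $1-1/N^{\gamma}$; with that accounting added, it becomes the paper's proof.
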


\subsection{All Nodes Multi-Hop Lower Bounds}\label{subsec-lower-bound-carry-over}

If an algorithm can solve multi-hop all nodes approximate neighbor counting, then clearly the same algorithm can be used to solve multi-hop designated node approximate neighbor counting, with same time complexity and success probability. Therefore, the lower bounds we previously proved for the latter variant naturally carries over to the former variant:

\begin{theorem}\label{thm-counting-lower-bound-part3}
In a multi-hop radio network containing at most $N$ nodes:

\begin{itemize}[itemsep=0.5pt, topsep=0.5pt, parsep=0.5pt]
	\item When collision detection is not available, solving approximate neighbor counting with high probability in $N$ requires $\Omega(\lg^2{N})$ time in the worst case.
	\item When collision detection is available, solving approximate neighbor counting with high probability in $N$ requires $\Omega(\lg{N})$ time in the worst case.
\end{itemize}
\end{theorem}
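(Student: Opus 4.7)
The plan is to observe that any algorithm $\mathcal{A}$ that solves multi-hop all nodes approximate neighbor counting in time $T$ with probability $p$ (per node) also solves multi-hop designated node approximate neighbor counting in time $T$ with the same probability: in the latter problem, we simply run $\mathcal{A}$ and return the estimate that the designated node $w$ computes for itself. Consequently, every lower bound proved in Theorems \ref{thm-counting-lower-bound-part1} and \ref{thm-counting-lower-bound-part2} for the designated node variant immediately transfers to the all nodes variant, modulo relating the parameters $N_\Delta$ and $N$.

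To deliver the bound in terms of the network size $N$ (rather than the designated node's maximum neighborhood $N_\Delta$), I would use the specific hard instances underlying the designated node lower bounds. Inspecting the inputs constructed for Lemmas \ref{lemma-contention-resolution-multihop-no-cd}, \ref{lemma-contention-resolution-multihop-cd}, and \ref{lemma-comb}, each bad instance is effectively a star: a designated node $w$ together with a size-$n_w$ subset drawn from a universe of $N_\Delta$ candidate neighbors, all embedded in a network of at most $N_\Delta+1$ nodes. Thus, with a total network-size budget of $N$, we may take $N_\Delta = N-1 = \Theta(N)$, giving $\lg N_\Delta = \Theta(\lg N)$ and $\lg^2 N_\Delta = \Theta(\lg^2 N)$. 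Plugging these into the high-probability designated node bounds yields $\Omega(\lg^2 N)$ rounds without collision detection and $\Omega(\lg N)$ rounds with collision detection, matching the theorem.

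There is essentially no technical obstacle: the reduction itself is a one-line observation, and the only point that requires care is verifying that the worst-case families witnessing Theorems \ref{thm-counting-lower-bound-part1} and \ref{thm-counting-lower-bound-part2} can be realized in multi-hop networks whose total node count is $\Theta(N_\Delta)$, so that the $N_\Delta$-parameterized bounds translate cleanly into $N$-parameterized bounds. Since those bad instances are precisely the star-shaped networks described above, this final step is immediate.
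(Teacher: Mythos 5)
Your proposal is correct and matches the paper's own argument: the paper likewise observes that any all-nodes counting algorithm trivially solves designated-node counting, and carries over the high-probability bounds from Theorems \ref{thm-counting-lower-bound-part1} and \ref{thm-counting-lower-bound-part2}. Your additional remark that the hard instances are stars on $N_\Delta+1$ nodes, so one may take $N_\Delta=\Theta(N)$ to restate the bounds in terms of the total network size, is a correct (and slightly more careful) handling of a parameter translation the paper leaves implicit.
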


\section{Upper Bounds}

In this section, we describe and analyze several randomized algorithms that solve the approximate neighbor counting problem. We will begin with single-hop all nodes counting. Specifically, four algorithm are presented for this variant, each based on a different approach. Though most of the strategies used are previously known, extensions to design and analysis are often needed. We then introduce three algorithms for multi-hop all nodes counting, including one which is particularly interesting, as it uses a ``double counting'' trick that is not related to contention resolution at all to obtain high success probability. Finally, we briefly discuss solutions for multi-hop designated node counting, as most of these algorithm are simple variations of their counterparts for single-hop all nodes counting.

Due to space constraint, if not otherwise stated, complete proofs for lemma and theorem statements are provided in the appendix. Nonetheless, we will usually discuss the intuitions or high-level strategies for proving them.

\subsection{Single-Hop Networks: No Collision Detection}

Our algorithms often adopt a classical technique inspired by the contention resolution literature: ``\emph{guess and verify}''. In more detail, take a \emph{guess} about the count, and then \emph{verify} its accuracy; if the guess is good enough then we are done, otherwise take another guess and repeat. This simple approach is versatile: depending on how the guesses are made and verified, many variations exist, resulting in efficient algorithms suitable for different settings.

A standard approach to this guessing is to use a geometric sequence with common ratio two, which is usually called (exponential) \emph{decay}~\cite{bar-yehuda87}. This sequence leverages the fact that we only need a constant factor estimate to speed things up. Particularly, if the real count is $n$, then only $O(\lg{n})$ iterations are needed before reaching an accurate estimate.

Once a guess is made, we need to verify its accuracy. To accomplish this, it is sufficient to let each participating node broadcast with a probability proportional to the reciprocal of the guess, and then observe the status of the channel. The intuition is simple: underestimate will result in collision and overestimate will result in silence; and we expect one node to broadcast alone---a distinguishable event---iff the estimate is accurate enough. The algorithms described below adapt this general approach to their specific constraints.

\subparagraph*{Constant probability of success.} We now present \CountSHnoCDConst. (In the algorithm's name, \textsc{SH} means ``single-hop'', \textsc{noCD} means ``no collision detection'', and \textsc{Const} means ``success with constant probability''.) This algorithm applies the most basic form of the ``guess and verify'' strategy. It provides a correct estimate with constant probability in $O(\lg{n})$ time for single-hop radio networks, when collision detection is not available.

\CountSHnoCDConst contains multiple iterations, each of which has two time slots. In the $i$\textsuperscript{th} iteration, nodes assume $n\approx 2^{i}$, and verify whether this estimate is accurate or not. More specifically, in the first time slot within the $i$\textsuperscript{th} iteration, each node will broadcast a \textsf{beacon} message with probability $1/2^i$ and listen otherwise. If a node decides to listen and hears a \textsf{beacon} message in the first time slot, then it will set its estimate to $2^{i+2}$ and terminate after this iteration. That is, if a single node $u$ broadcasts alone in the first time slot of iteration $i$, then all listening nodes---which is all nodes except $u$---will terminate by the end of this iteration, with $2^{i+2}$ being their estimate. Notice, we still need to inform $u$ about this estimate, which is the very purpose of the second time slot within each iteration. More specifically, in the second time slot within the $i$\textsuperscript{th} iteration, for each node $u$, if it has heard a \textsf{beacon} message in the first time slot of this iteration, then it will broadcast a \textsf{stop} message with probability $1/(2^i-1)$. Otherwise, if $u$ has broadcast in the first time slot of this iteration, then it will listen in this second time slot. Moreover, it will terminate with its estimate set to $2^{i+2}$, if it hears a \textsf{stop} message in this second time slot.

Despite its simplicity, proving the correctness of \CountSHnoCDConst requires efforts beyond what would suffice for basic contention resolution. First, by carefully calculating and summing up the failure probabilities, we show no node will terminate during the first $\lg{n}-3$ iterations, with at least constant probability. Then, we show during iteration $i$ where $\lg{n}-2\leq i\leq \lg{n}$, either no node terminates or all nodes terminate, with at least constant probability. Finally, we prove that if all nodes are still active by iteration $\lg{n}$, then all of them will terminate by the end of it, again with at least constant probability. The full analysis can be found in Appendix \ref{subsec-appx-omit-CountSHnoCDConst}. Here we present only the main theorem:

\begin{theorem}\label{thm-CountSHnoCDConst}
The \CountSHnoCDConst approximate neighbor counting algorithm guarantees the following properties with constant probability when executed in a single-hop network with no collision detection: (a) all nodes terminate simultaneously within $O(\lg{n})$ slots; and (b) all nodes obtain the same estimate of $n$, which is in the range $[n,4n]$.
\end{theorem}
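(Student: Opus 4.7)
The plan is to partition the $O(\lg n)$ iterations of \CountSHnoCDConst into two regimes and bound the failure probability in each separately before combining via a union bound. The regimes are the \emph{early} iterations $i \le \lg n - 3$, in which the current guess $2^i$ dangerously underestimates $n$, and the \emph{good} iterations $i \in \{\lg n - 2,\, \lg n - 1,\, \lg n\}$, in which $2^i$ is already a constant-factor estimate. Once the failure events are controlled, the output claims come for free: any termination in a good iteration produces the common estimate $2^{i+2} \in \{n, 2n, 4n\} \subseteq [n, 4n]$, and the $O(\lg n)$ time bound is immediate because each iteration occupies only two slots.

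For the early regime, the only way the algorithm can misbehave is if some node's \textsf{beacon} happens to be the unique slot-1 broadcast in some iteration $i \le \lg n - 3$, causing all $n - 1$ listeners to terminate with the underestimate $2^{i+2} < n$. The probability that exactly one of $n$ nodes broadcasts in slot 1 of iteration $i$ is $n \cdot 2^{-i}(1 - 2^{-i})^{n-1}$; since $n/2^i \ge 8$ in this range, a standard Poisson-type estimate bounds this quantity by $(n/2^i)\exp(-(n-1)/2^i)$, which decays geometrically in the gap $\lg n - i$. Summing over $i = 1, \dots, \lg n - 3$ yields a convergent geometric series whose total mass can be driven below any prescribed constant, say $1/10$.

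For the good regime I need two statements. First, an \emph{all-or-nothing} property: in each good iteration, whenever a node broadcasts a \textsf{beacon} alone in slot 1, some listener in turn broadcasts a \textsf{stop} alone in slot 2, so that the original broadcaster joins its neighbors with the common estimate $2^{i+2}$. Since $n/2^i \in [1, 4]$ is $\Theta(1)$, the slot-1 beacon-alone event has constant probability; conditional on it, the $n - 1$ listeners each broadcast \textsf{stop} independently with probability $1/(2^i - 1)$, whose expected total $(n - 1)/(2^i - 1)$ is also $\Theta(1)$, and another Poisson-type estimate gives the conditional slot-2 stop-alone event constant probability. Second, a \emph{guaranteed success} statement: the slot-1 and slot-2 success probabilities in iteration $i = \lg n$ are independent (across independent coin flips) and each constant, so their joint occurrence has a fixed positive probability (around $1/e^2$), and this event is independent of everything that happens in earlier iterations.

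Combining via a union bound, the joint event ``no early misbehavior, all-or-nothing holds in each good iteration, and iteration $\lg n$ fully succeeds'' has constant probability, and on this event every node terminates simultaneously in a single good iteration with a common estimate in $[n, 4n]$ within $O(\lg n)$ slots. The hard part will be the probabilistic bookkeeping: the early-regime geometric sum, together with the per-iteration error probabilities from the good regime, must combine to leave the surviving success probability strictly positive. The subtlest case is the middle good iterations $\lg n - 2$ and $\lg n - 1$, in which a beacon-alone followed by a failed stop-alone would strand the original broadcaster while its neighbors have already terminated, destroying simultaneity; the constant conditional slot-2 success, enabled by the careful choice of broadcast probability $1/(2^i - 1)$, is exactly what keeps this bad case rare.
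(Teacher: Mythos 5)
Your proposal is correct and follows essentially the same route as the paper: the paper likewise splits the analysis into (i) a lemma showing no node terminates during iterations $1,\dots,\lg n-3$ by summing the lone-broadcast probabilities $n\cdot 2^{-i}(1-2^{-i})^{n-1}$ into a convergent series, (ii) an all-or-nothing lemma for iterations $\lg n-2,\dots,\lg n$ resting on the conditional constant probability that the stranded lone broadcaster hears a \textsf{stop} in slot two, and (iii) a lemma that iteration $\lg n$ succeeds outright with constant probability, combined exactly as you describe. (Only minor wording quibbles: the early-regime sum is a fixed constant of the algorithm rather than something that can be ``driven below any prescribed constant,'' and the stranding event in the middle iterations need not be \emph{rare}---it merely needs to be avoided with constant probability, which is all the theorem requires.)
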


\subparagraph*{High probability of success.} Observe that in the aforementioned simplest form of ``guess and verify'', as the estimate increases, the probability that multiple nodes broadcast decreases, and the probability that no node broadcasts increases. A more interesting metric is the probability that a single node broadcasts alone: it first increases, and then decreases; not surprisingly, the peak value is reached when the estimate is the real count. These facts suggest, for each estimate $\hat{n}$, we could repeat the procedure of broadcasting with probability $1/\hat{n}$ multiple times, and use the \emph{fraction} of noisy/silent/clear-message slots to determine the accuracy of the estimate. This method can provide stronger correctness guarantees, but complicates \emph{termination detection} (i.e., when should a node stop), as different nodes may observe different fraction values. For example, if some nodes have already obtained a correct estimate but terminate too early, then remaining nodes might never get correct estimates, since there are fewer nodes remaining.
%(By contrast, in contention resolution, it suffices so long as some node broadcasts alone.)
The situation becomes more challenging when an upper bound of the real count is not available. To resolve this issue, sometimes, we have to carefully craft and embed a ``consensus'' mechanism.

\CountSHnoCDHigh highlights our above discussion. This algorithm contains multiple iterations, each of which has three phases. In the $i$\textsuperscript{th} iteration, nodes assume $n\approx 2^{i}$. The first phase of each iteration $i$---which contains $\Theta(i)$ time slots---is used to verify the accuracy of the current estimate. In particular, in each slot within the first phase, each node will broadcast a \textsf{beacon} message with probability $1/2^i$, and listen otherwise. By the end of the first phase, each node will calculate the fraction of time slots (among all its listening slots in this phase) in which it has heard a \textsf{beacon} message. For each node, if at the end of the first phase of some iteration $j$, this fraction value has reached $1/2e$ for the first time since the start of execution, the node will set $2^j$ as its \emph{private estimate} for $n$. Recall nodes might not obtain private estimates simultaneously, thus they cannot simply terminate and output private estimates as the final estimate. This is the place where the latter two phases come into play. More specifically, in the second phase, nodes that have already obtained private estimates will try to broadcast \textsf{informed} messages to signal other nodes to stop. In fact, hearing an \textsf{informed} message is the only situation in which a node can safely terminate, even if the node has already obtained its private estimate. On the other hand, the third phase is used to deal with the case in which one single ``unlucky'' node successfully broadcasts an \textsf{informed} message during phase two (thus terminate all other nodes), but never gets the chance to successfully receive an \textsf{informed} message (thus cannot terminate along with other nodes). The complete description of \CountSHnoCDHigh is given in Appendix \ref{subsec-appx-omit-CountSHnoCDHigh}.

To prove the correctness of \CountSHnoCDHigh, we need to show: (a) nodes can correctly determine the accuracy of their estimates; and (b) all nodes terminate simultaneously and output identical estimate. Part (b) follows from our careful protocol design, as phase two and three in each iteration act like a mini ``consensus'' protocol, allowing nodes to agree on when to stop. Proving part (a), on the other hand, needs more effort. Recall we use the fraction of clear message slots to determine the accuracy of an estimate, and the expected fraction value should be identical for all nodes. However, due to random chances, the actual fraction value observed by each node might deviate from expectation. If we have an upper bound $N$ of $n$, then by making the first phase to contain $\Theta(\lg{N})$ slots, Chernoff bounds~\cite{mitzenmacher05} will enforce the observed fraction value to be tightly concentrated around its expectation. In our case, $N$ is not available, and we rely on more careful analysis. Specifically, during iterations one to $\lg(n/(a\ln{n}))$ where $a$ is some sufficiently large constant, in each time slot in phase one, at least two nodes will broadcast (since the estimate is too small), thus no node will obtain private estimate in these iterations. Starting from iteration $\lg(n/(a\ln{n}))$, the length of phase one is long enough so that concentration inequalities will ensure the observed fraction value is close to its expectation. Building on these observations, we can eventually conclude the following theorem (the full analysis is in Appendix \ref{subsec-appx-omit-CountSHnoCDHigh}):

\begin{theorem}\label{thm-CountSHnoCDHigh}
The \CountSHnoCDHigh approximate neighbor counting algorithm guarantees the following properties with high probability in $n$ when executed in a single-hop network with no collision detection: (a) all nodes terminate simultaneously within $O(\lg^2{n})$ slots; and (b) all nodes obtain the same estimate of $n$, which is in the range $[n,4n]$.
\end{theorem}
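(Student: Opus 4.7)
The plan is to prove correctness by decomposing it into three coupled claims and then combining them by union bound. First, no node sets a private estimate during the ``early'' iterations $i = 1, \ldots, \lg(n/(a\ln n))$, where $a$ is a suitably large constant. Second, when $i$ enters the regime $2^i = \Theta(n)$, each node's observed beacon-fraction concentrates tightly around its expectation, so the $1/(2e)$ threshold is first crossed at an iteration $j$ that yields a final estimate in $[n,4n]$. Third, the phase-2 and phase-3 ``consensus'' mechanism forces all surviving nodes to terminate simultaneously with a common estimate. Summing the phase-1 lengths $\Theta(i)$ across the $O(\lg n)$ iterations needed gives the $O(\lg^2 n)$ bound.

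For the first claim, in each phase-1 slot of iteration $i \leq \lg(n/(a\ln n))$, the expected number of broadcasters is $n/2^i \geq a\ln n$. A standard Chernoff bound shows that with probability $1 - O(1/n^c)$ at least two nodes broadcast, so every listener's slot contributes nothing to the beacon fraction. Union-bounding over the $O(\lg n)$ slots in each of the $O(\lg n)$ early iterations, with probability $1 - n^{-\Omega(1)}$ no node ever hears a beacon in these iterations, and so no node sets a private estimate prematurely.

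For the second claim, let $p_i = (n-1)(1/2^i)(1-1/2^i)^{n-2}$ denote the probability a given listener hears a lone beacon in a single phase-1 slot of iteration $i$. Writing $2^i = \beta n$, one computes $p_i \approx (1/\beta)e^{-1/\beta}$, which rises monotonically from $0$, crosses $1/(2e)$ somewhere in the window $\beta \in (0,1]$, peaks near $1/e$ at $\beta = 1$, and then decreases again. This identifies a narrow band of iterations in which the first upcrossing of $1/(2e)$ must occur; the threshold is calibrated so that the resulting estimate is guaranteed to fall in $[n,4n]$. Once $i \geq \lg(n/(a\ln n))$, phase 1 contains $\Theta(i) = \Theta(\lg n)$ independent listening trials, so a Chernoff bound shows that every listener's observed fraction lies in a $(1 \pm o(1))$-factor neighborhood of $p_i$ with probability $1 - n^{-\Omega(1)}$. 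Consequently, all listening nodes reach the threshold at the same iteration $j$ and produce identical private estimates, and $j$ lies in the intended window.

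The main obstacle will be the phase-2/phase-3 agreement argument. It is not enough to show that nodes independently arrive at identical private estimates: they must also terminate in the same iteration, because any straggler who keeps broadcasting beacons perturbs the effective population size and risks derailing the concentration argument in subsequent iterations. Here I have to handle carefully the edge case flagged in the algorithm description: a single node $u$ may successfully deliver an \textsf{informed} message during phase 2 (terminating every listener in that iteration), yet itself fail to hear any \textsf{informed} in return because it was broadcasting. Phase 3 is designed precisely so that among the listeners that just terminated in phase 2, at least one of them must retransmit with constant probability on an occupied slot, delivering the stop signal and the estimate back to $u$ before the iteration ends. Proving this via a short case analysis on which subset of listeners heard the \textsf{informed}, and then combining it through a union bound over the $O(\lg n)$ iterations, promotes the per-iteration guarantee to the high-probability simultaneous-termination claim required by part (a) of the theorem.
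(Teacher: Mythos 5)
Your decomposition tracks the paper's almost lemma-for-lemma: the paper shows (i) no private or final estimates arise through iteration $\lg(n/(a\ln n))$, (ii) none arise through iteration $\lg n-3$ either, via a Chernoff-style bound once phase~1 has $\Theta(\lg n)$ slots, (iii) in each of iterations $\lg n-2,\dots,\lg n$ termination is all-or-nothing with common estimate $2^{i+2}$, handled by the phase-2/phase-3 mechanism including exactly the lone-\textsf{informed}-broadcaster edge case you describe, and (iv) every node has terminated by iteration $\lg n$. Your claims~1 and~3 are (i) and (iii) verbatim. The one place you overreach is claim~2: with only $\Theta(\lg n)$ listening slots per iteration, a Chernoff bound concentrates the observed fraction within a \emph{constant} factor of its mean w.h.p.\ in $n$, not within $(1\pm o(1))$; so your conclusion that all nodes first cross the $1/(2e)$ threshold at the \emph{same} iteration $j$ secretly requires the additional fact that $1/(2e)$ is separated by a constant factor from the neighboring expected fractions $4e^{-4}$ (iteration $\lg n-2$) and $2e^{-2}$ (iteration $\lg n-1$). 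That separation does hold ($0.073 < 0.184 < 0.271$, with ratios about $2.5$ and $1.47$), so your route can be completed, but as written the $(1\pm o(1))$ step would fail. The paper deliberately avoids needing simultaneity of the private estimates at all: it only proves that no node crosses the threshold up to iteration $\lg n-3$ and that every node crosses it by iteration $\lg n$, and lets the all-or-nothing termination lemma absorb whatever mixture of informed and uninformed nodes arises in the borderline iterations. If you either verify the constant-factor separation explicitly or weaken claim~2 to the paper's two one-sided statements, the proof goes through.
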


\subsection{Single-Hop Networks: Collision Detection}

Without collision detection, the feedback to a listening node is either silence or a message. Failing to receive a message, therefore, does not hint the nature of the failure: either no node is sending, or multiple nodes are sending. As a result, in the two previous algorithms, when nodes ``guess'' the count, they have to do it in a \emph{linear} manner: start with a small estimate, and double if the guess is incorrect. With collision detection, by contrast, listening nodes can distinguish whether too few (i.e., zero) or too many (i.e., at least two) nodes are broadcasting. As first pointed out back in the 1980's~\cite{willard86}, this extra power enables an exponential improvement over linear searching, since nodes can now perform a {\em binary search}.

\subparagraph*{Constant probability of success.} Here we leverage the aforementioned binary search strategy to return a constant factor estimate of $n$ in $O(\lg\lg{n})$ time, with at least some constant probability. Recall efficient binary search requires a rough upper bound of $n$ as input. To this end, we first introduce an algorithm called \EstUpperSH: it can provide a polynomial upper bound of $n$ within $O(\lg\lg{n})$ time. At a high level, \EstUpperSH is doing a linear ``guess and verify'' search to estimate $\lg{n}$. (Notice, it is \emph{not} estimating $n$.) This strategy, to the best of our knowledge, is first discussed by Willard in the seminal paper \cite{willard86}, and has later been used in other works (see, e.g., \cite{chen13,brandes17}). Due to space constraint, detailed description and analysis of \EstUpperSH are provided in Appendix \ref{subsec-appx-omit-EstUpperSH}.

Once this estimate is obtained, we switch to the main logic of \CountSHCDConst. This algorithm contains multiple iterations, each of which has four time slots. In each iteration $i$, all nodes have a lower bound $a_i$ and an upper bound $b_i$, and will test whether the median $m_i=\lfloor (a_i+b_i)/2\rfloor$ is close to $\lg{n}$ or not.
%(Initially, $a_1$ is set to one, and $b_1$ is set to the estimate of $\lg{n}$ returned by \EstUpperSH.)
More specifically, in the first time slot in iteration $i$, each node will broadcast a \textsf{beacon} message with probability $1/2^{m_i}$, and listen otherwise. Listening nodes will use the channel status they observed to adjust $a_i$ (or $b_i$), or terminate and output the final estimate. On the other hand, the other three time slots in each iteration allow nodes that have chosen to broadcast in the first time slot to learn the channel status too, with the help of the nodes that have chosen to listen in the first time slot. (See Appendix \ref{subsec-appx-omit-CountSHCDConst} for complete description of \CountSHCDConst.)

To prove \CountSHCDConst can provide a correct estimate, we demonstrate that during one execution of \CountSHCDConst: (a) whenever $m_i$ is too large or too small, all nodes can correctly detect this and adjust $a_i$ or $b_i$ accordingly; and (b) when $m_i$ is a good estimate, all nodes can correctly detect this as well and stop execution. The full analysis is presented in Appendix \ref{subsec-appx-omit-CountSHCDConst}, here we state only the main theorem:

\begin{theorem}\label{thm-CountSHCDConst}
The \CountSHCDConst approximate neighbor counting algorithm guarantees the following properties when executed in a single-hop network with collision detection: (a) all nodes terminate simultaneously; and (b) with at least constant probability, all nodes obtain the same estimate of $n$ in the range $[n,4n]$ within $O(\lg\lg{n})$ time slots.
\end{theorem}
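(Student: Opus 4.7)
The plan is to split the argument into the preprocessing phase, a synchronization claim, and a correctness analysis of the binary search under a constant-probability good event. First I would apply \EstUpperSH to obtain, within $O(\lg\lg n)$ slots and with constant probability, a value $B$ such that $\lg n \le B$ and $B = O(\lg n)$. Conditioned on this event, the binary-search interval $[a_i, b_i]$ has initial width $B = O(\lg n)$, so it collapses after $O(\lg B) = O(\lg\lg n)$ iterations; since each iteration consists of only four slots, this yields the claimed $O(\lg\lg n)$ running time.

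Next I would prove part (a) by induction on $i$: at the start of every iteration every active node shares the same pair $(a_i, b_i)$ and hence computes the same median $m_i$. In slot one only the listeners directly observe the channel outcome in $\{\text{silence}, \text{noise}, \text{clear message}\}$; I would argue that the three auxiliary slots implement a simple information-spreading protocol (listeners re-encode the outcome through broadcast/silence patterns while the slot-one broadcasters now listen) so that every node knows the same outcome by the end of the iteration, updates $(a_{i+1}, b_{i+1})$ identically, and either terminates together with a common estimate or continues together.

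The heart of the argument is part (b). In iteration $i$ the number of broadcasters is $\mathrm{Bin}(n, 1/2^{m_i})$ with mean $\mu_i = n/2^{m_i}$. I would partition iterations into three regimes by the gap $m_i - \lg n$: a high regime with $m_i \ge \lg n + c$ in which silence occurs with probability $\ge 1 - e^{-\Theta(1)}$ (so the move $b_{i+1} \gets m_i$ is correct), a low regime with $m_i \le \lg n - c$ in which at least two broadcasters appear so that noise is heard with similar probability (so $a_{i+1} \gets m_i$ is correct), and a boundary regime with $|m_i - \lg n| < c$ in which a single node broadcasts alone with at least a constant probability, producing a clear message and termination with an estimate that by construction lies in $[n, 4n]$.

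The hard part will be controlling the failure probabilities across all $O(\lg\lg n)$ iterations while retaining only a constant overall success probability. Outside the $O(1)$ boundary iterations the correct outcome occurs with exponentially small failure probability, so a union bound over non-boundary iterations contributes only constant failure probability. The delicate step is the boundary regime: an incorrect silence or noise reading there could push the interval so that later medians never again land near $\lg n$. I would handle this by showing that once $|m_i - \lg n|$ first becomes $O(1)$, either a clear message is detected with constant probability and the algorithm terminates correctly, or an incorrect interval update still leaves $\lg n$ within an interval of width $O(1)$, so the next $O(1)$ iterations keep producing boundary-regime medians until the clean-broadcast event is eventually realized. Combining the constant success of \EstUpperSH, the constant probability of correct behavior in every non-boundary iteration, and the constant probability of a clean broadcast in the $O(1)$ boundary iterations yields the constant overall success probability asserted in the theorem.
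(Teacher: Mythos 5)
Your proposal follows essentially the same route as the paper: run \EstUpperSH to get a polynomial upper bound, prove a synchronization lemma showing all nodes maintain identical $(a_i,b_i)$ and terminate together (the paper's Lemma~\ref{lemma-CountSHCDConst-1}), and then split the binary search into a low regime, a high regime, and a constant-width boundary regime around $\lg n$ in which a lone broadcaster is isolated with probability at least a constant (the paper's Lemmas~\ref{lemma-CountSHCDConst-2}--\ref{lemma-CountSHCDConst-4}). The structure and the final combination of constant probabilities match the paper's argument.

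The one place where your justification is too loose is the claim that outside the boundary ``the correct outcome occurs with exponentially small failure probability, so a union bound over non-boundary iterations contributes only constant failure probability.'' For the medians closest to the boundary (e.g., $m_i=\lg n+2$), the per-iteration failure probability is a fixed constant such as roughly $1-e^{-1/4}$, not negligible, so a naive union bound over the $O(\lg\lg n)$ iterations would not stay constant. What actually makes the bound work --- and what the paper computes explicitly --- is that binary search tests each median value at most once, and the failure probabilities decay geometrically in the distance $|m_i-\lg n|$ (i.e., in $n/2^{m_i}$ or $2^{m_i}/n$), so the \emph{sum} over all distinct median values in each regime converges to a constant strictly below one (the paper gets $0.25$ and $0.55$ respectively). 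You should make this ``sum over distinct medians of a geometrically decaying series'' step explicit. A second, smaller point: after a failed boundary iteration the interval may collapse to $a_i>b_i$ and the algorithm aborts, so you cannot wait ``until the clean-broadcast event is eventually realized''; you get only $O(1)$ boundary attempts, but since each succeeds with probability at least $1/8$, a single attempt already suffices for the constant overall success probability, which is how the paper concludes.
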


\subparagraph*{High probability of success.} Our last algorithm for the single-hop scenario is called \CountSHCDHigh. It significantly differs from the other algorithms studied so far in that it does {\em not} use a ``guess and verify'' strategy. Instead, it deploys a {\em random walk} to derive an estimate. The use of random walks for contention resolution was introduced by Nakano and Olariu~\cite{nakano02}, in the context of leader election in radio networks. It was later adopted by Brandes et al.~\cite{brandes17} for solving approximate counting in \textsc{beeping} networks.

Prior to executing \CountSHCDHigh, nodes will first use $O(\lg\lg{n})$ time slots to run \EstUpperSH to obtain a polynomial upper bound of $n$. Call this upper bound $\hat{N}$. All nodes then perform a random walk, the state space of which consists of potential estimates of $n$. More specifically, \CountSHCDHigh contains $\Theta(\lg{\hat{N}})$ iterations, each of which has three time slots. In each iteration, all nodes maintain a current estimate on $n$ which is denoted by $\hat{n}$. (Initially, $\hat{n}$ is set to $\hat{N}$.) In the first slot in a iteration, each node will broadcast a \textsf{beacon} message with probability $1/\hat{n}$, and listen otherwise. If a node hears silence, it will decrease $\hat{n}$ by a factor of four; if a node hears noise, it will increase $\hat{n}$ by a factor of four; and if a node hears a \textsf{beacon} message, it will keep $\hat{n}$ unchanged. Similar to what we have done in \CountSHCDConst, in each iteration, the nodes that have chosen to listen in the first time slot will use the latter two slots to help nodes that have chosen to broadcast in the first time slot to learn the channel status of the first time slot. After these $\Theta(\lg{\hat{N}})$ iterations, all nodes will use $4\tilde{n}$ to be the final estimate of $n$, where $\tilde{n}$ is the most frequent estimate used by the nodes during the $\Theta(\lg{\hat{N}})$ iterations.

The high-level intuition of \CountSHCDHigh is: when the estimate is too large or too small, it will quickly shift towards correct estimates; and when the estimate is correct, it will remain unchanged. Therefore, the most frequent estimate will likely to be a correct one. The full analysis is deferred to Appendix \ref{subsec-appx-omit-CountSHCDHigh}, here we provide only the main theorem:

\begin{theorem}\label{thm-CountSHCDHigh}
The \CountSHCDHigh approximate neighbor counting algorithm guarantees the following properties when executed in a single-hop network with collision detection: (a) all nodes terminate simultaneously; and (b) with high probability in $n$, all nodes obtain the same estimate of $n$ in the range $[n,64n]$ within $O(\lg{n})$ time slots.
\end{theorem}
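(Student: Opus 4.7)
The plan is to split the theorem into its structural claim (simultaneous termination, all nodes agree) and its stochastic claim (accuracy with high probability in $n$), handling the latter via a biased random walk on $\log_4 \hat n$.

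For the structural part, I first invoke the stated guarantee of \EstUpperSH, which in $O(\lg\lg n)$ rounds returns an $\hat N$ polynomial in $n$ with high probability in $n$, so the main loop contributes $\Theta(\lg \hat N) = O(\lg n)$ additional three-slot iterations. Within each iteration, the second and third slots form an echo mechanism, directly analogous to the one already analyzed inside \CountSHCDConst, that lets every broadcaster learn the ternary outcome---silence, single beacon, or noise---observed by the listeners in the first slot. Since all nodes then apply the same update to $\hat n$, they maintain identical estimates throughout, compute the same $\tilde n$, and terminate simultaneously at the end of iteration $\Theta(\lg \hat N)$. This establishes property (a) and the ``same estimate'' half of (b) in $O(\lg n)$ rounds, conditional on the high-probability success of \EstUpperSH.

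For accuracy, I model $X = \log_4 \hat n$ as a Markov chain. Letting $s = n/\hat n$, the per-iteration transition probabilities are $\Pr[X \to X-1] = (1-1/\hat n)^n \approx e^{-s}$, $\Pr[X \to X] = s(1-1/\hat n)^{n-1} \approx s e^{-s}$, and $\Pr[X \to X+1] \approx 1 - (1+s) e^{-s}$. Three regimes emerge: for $\hat n \gg n$ the walk steps down with probability near $1$; for $\hat n \ll n$ it steps up with probability near $1$; and for $\hat n = \Theta(n)$ all three transitions have $\Theta(1)$ probability. The analysis then has two parts. A \emph{hitting-time} part: starting from $\hat n = \hat N$, the walk reaches the good window $W = \{X : \hat n / n \in [1/4, 16]\}$ within $O(\lg \hat N)$ iterations with high probability in $n$, by coupling to a random walk with constant drift outside $W$ and applying a Chernoff tail bound. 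An \emph{excursion} part: each excursion that leaves $W$ and travels to distance $k$ has probability decaying exponentially in $k$ and returns in $O(k)$ rounds, so a union bound over the $\Theta(\lg n)$ iterations shows the total number of visits outside $W$ is $o(\lg n)$ with high probability in $n$. Combining, the plurality state $\tilde n$ lies in $W$, so $4\tilde n \in [n, 64n]$.

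The main obstacle will be pinning down the plurality when the stay-put probability $s e^{-s}$ is $\Theta(1)$ over the width-$\Theta(1)$ band around $\log_4 n$: visits pile up at several good states simultaneously, so the margin between the plurality and its rivals inside $W$ is not automatically large, and a rare excursion could in principle accrue many stay-put visits at a boundary. I will address this by a two-level concentration argument: a Chernoff bound applied at each individual state inside $W$ shows it accrues $\Omega(\lg n)$ visits with high probability in $n$, whereas the exponential tail on excursion depths shows that every fixed state outside $W$ accrues only $o(\lg n)$ visits with high probability in $n$, so some good state beats every bad state on visit count. A minor technical wrinkle is that the Poisson-type approximations $(1-1/\hat n)^n \approx e^{-s}$ degrade when $\hat n$ is close to $1$; this regime is handled separately by noting that the strong upward drift escapes it in $O(1)$ iterations and, by the same excursion bound, the walk does not return with the required probability.
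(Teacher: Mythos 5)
Your overall architecture---echo slots for agreement, a biased random walk on $\log_4\hat{n}$ with a good window $W=[n/4,16n]$, and a plurality argument for the output---is the same as the paper's, and the structural part (a) and the drift/hitting-time analysis are sound. The gap is in the step that pins down the plurality. Your claim that ``every fixed state outside $W$ accrues only $o(\lg n)$ visits'' is false: the state immediately above the window, $\hat{n}=64n$, is entered from $16n$ with probability equal to the noise probability at $\hat{n}=16n$, which is a (small but non-vanishing) constant, so over the $l=\Theta(\lg\hat{N})=\Theta(\lg n)$ iterations it can collect a constant fraction (up to roughly $0.07l$) of all visits---that is $\Theta(\lg n)$, not $o(\lg n)$; the same holds for $\hat{n}=n/16$. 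Consequently the comparison ``$\Omega(\lg n)$ inside versus $o(\lg n)$ outside'' cannot close the argument; one must compare explicit constants. A secondary issue: visits to a fixed state of this Markov chain are not independent across iterations, so ``a Chernoff bound applied at each individual state inside $W$'' needs a regeneration or Markov-chain concentration argument, and it is not even clear that every state of $W$ receives a large constant fraction of visits---e.g.\ the boundary state $\hat{n}=16n$ is left downward with probability greater than $0.93$ and is entered from $4n$ only with probability about $0.03$.

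The paper sidesteps both problems with a global counting scheme: it classifies each of the $l$ iterations as good, improving, stationary, or bad; shows $B<0.072l$ and $S<0.082l$ w.h.p.\ by Chernoff bounds on per-iteration probabilities (which \emph{are} uniformly bounded regardless of the current state); bounds $I\leq B+d$ with $d=\lceil\log_4(\hat{N}/(16n))\rceil$; deduces that any fixed state outside $\mathcal{G}$ is used in at most $\lceil B/2\rceil+\lceil I/2\rceil+S<0.16l$ iterations; and then pigeonholes the at least $0.76l$ good iterations over the at most four states of $\mathcal{G}$ to obtain one state used at least $0.19l$ times. To repair your proof you would need to replace the $o(\lg n)$ claim with such an explicit constant bound on the occupancy of the boundary states and a matching lower bound for the best good state (pigeonhole suffices; per-state concentration inside $W$ is unnecessary).
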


\subsection{Multi-Hop with All Nodes Counting: No Collision Detection}

All nodes counting in a multi-hop network is challenging as different nodes may have significantly different number of neighbors. In this part, we present two algorithms that attempt to overcome this obstacle, the second of which is particularly interesting.

We begin with the first algorithm---called \CountAllnoCDa---which still relies on the linear ``guess and verify'' approach. However, it requires the upper bound $N_{\Delta}$ as an input parameter to enforce termination. Nonetheless, for each node, \CountAllnoCDa always returns an accurate estimate, even when knowledge of $N_{\Delta}$ is absent.

In more detail, \CountAllnoCDa contains $\lg{N_\Delta}$ iterations, and the $i$\textsuperscript{th} iteration contains $\Theta(i)$ time slots. In each slot in iteration $i$, each node will choose to be a broadcaster or a listener each with probability $1/2$. If a node chooses to be a listener in a time slot, it will simply listen. Otherwise, if a node chooses to be a broadcaster, it will broadcast a \textsf{beacon} message with probability $1/2^i$, and do nothing otherwise. After an iteration $i$, for a node $u$, if for the first time since the beginning of protocol execution, it has heard \textsf{beacon} messages in at least $1/10$ fraction of slots among the listening slots (within this iteration), then $u$ will use $2^{i+3}$ as its estimate for $n_u$. Proving the correctness of \CountAllnoCDa borrows heavily from our analysis of \CountSHnoCDHigh (see Appendix \ref{subsec-appx-omit-CountAllnoCDa} for more details), here we only state the main theorem:

\begin{theorem}\label{thm-CountAllnoCDa}
For each node $u$, the \CountAllnoCDa approximate neighbor counting algorithm guarantees the following with high probability in $n_u$ when executed in a multi-hop network with no collision detection: $u$ will obtain an estimate of $n_u$ in the range $[n_u,4n_u]$ within $O(\lg^2{n_u})$ time. Moreover, $u$ will terminate after $O(\lg^2{N_\Delta})$ time when $N_\Delta$ is known.
\end{theorem}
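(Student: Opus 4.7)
The plan is to analyze the algorithm from the perspective of a fixed node $u$, exploiting the fact that all coin flips across nodes and across slots are mutually independent. In any slot of iteration $i$, each of $u$'s neighbors independently broadcasts with probability $p_i = 1/2^{i+1}$ (the product of the broadcaster coin flip with probability $1/2$ and the beacon coin flip with probability $1/2^i$). Conditioned on $u$ being a listener in that slot, the probability that $u$ hears a beacon -- i.e., that exactly one neighbor broadcasts -- is $q_i = n_u p_i (1-p_i)^{n_u-1}$, which is well-approximated by $\lambda_i e^{-\lambda_i}$ with $\lambda_i = n_u/2^{i+1}$. This expression is super-exponentially small for $i \ll \lg n_u$, bounded by $8 e^{-8} \approx 0.003$ for all $i \leq \lg n_u - 4$, and at least $e^{-1}$ when $i = \lg n_u - 1$.

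The proof then establishes two complementary events, each holding with high probability in $n_u$. The first event is that $u$ does not trigger in any iteration $i \leq \lg n_u - 4$. Letting $B_i$ and $L_i$ denote the number of slots in iteration $i$ in which $u$ hears a beacon and in which $u$ listens, respectively, the triggering condition $B_i/L_i \geq 1/10$ demands an atypical deviation because $\mathbb{E}[B_i]/\mathbb{E}[L_i] = q_i$ is well below $1/10$. Applying Chernoff bounds separately to $B_i$ (upper tail) and $L_i$ (lower tail), together with the $\Theta(i)$ slots of the iteration and a sufficiently large hidden constant, yields per-iteration failure probability at most $n_u^{-\Omega(1)}$; a union bound over the $O(\lg n_u)$ such iterations preserves high probability in $n_u$. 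The earliest iterations, where $q_i$ is super-exponentially small, can be dispatched even more cheaply by applying Markov's inequality to $B_i$. The second event is that $u$ does trigger by iteration $\lg n_u - 1$: here $\mathbb{E}[B_i]/\mathbb{E}[L_i] \geq e^{-1}$ and the iteration contains $\Theta(\lg n_u)$ slots, so a standard Chernoff bound drives the observed fraction above $1/10$ with high probability in $n_u$.

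Combining the two events, the triggering iteration $i^*$ lies in $\{\lg n_u - 3, \lg n_u - 2, \lg n_u - 1\}$, producing an estimate $2^{i^* + 3} \in \{n_u, 2n_u, 4n_u\} \subseteq [n_u, 4n_u]$. The time to produce the estimate is $\sum_{i=1}^{i^*} \Theta(i) = O(\lg^2 n_u)$, and termination once $N_\Delta$ is known requires executing all $\lg N_\Delta$ iterations, consuming $\sum_{i=1}^{\lg N_\Delta} \Theta(i) = O(\lg^2 N_\Delta)$ time.

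The hard part will be the boundary iteration $i = \lg n_u - 3$, where the expected beacon fraction $\approx 0.073$ is uncomfortably close to the triggering threshold $1/10$. The Chernoff exponent scales only with the (constant) relative gap between these two quantities, so the constant hidden inside the $\Theta(i)$ iteration length must be tuned large enough to convert this gap into a $n_u^{-\Omega(1)}$ failure bound. All other iterations are comfortable: the very early ones because $q_i$ is negligible, and iteration $\lg n_u - 1$ because the mean exceeds the threshold by a constant factor. A secondary subtlety, that the triggering criterion uses the random denominator $L_i$ rather than a deterministic quantity, is handled by the two-sided Chernoff argument sketched above.
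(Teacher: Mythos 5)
Your proof is correct and follows the same three-regime skeleton as the paper's argument (Lemmas \ref{lemma-CountAllnoCDa-1}--\ref{lemma-CountAllnoCDa-3}): no trigger through iteration $\lg{n_u}-4$, guaranteed trigger by iteration $\lg{n_u}-1$, hence a triggering iteration in $\{\lg{n_u}-3,\lg{n_u}-2,\lg{n_u}-1\}$ and an estimate in $\{n_u,2n_u,4n_u\}$; your split between a Markov/union-bound argument for the earliest iterations (where the per-slot probability is polynomially small but the iteration is too short for concentration) and a Chernoff argument for the remaining $\Theta(\lg{n_u})$-length iterations mirrors exactly the paper's division between Lemma \ref{lemma-CountAllnoCDa-1} and Lemma \ref{lemma-CountAllnoCDa-2}. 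The one genuine streamlining is that you fold the role coin ($1/2$) and the beacon coin ($1/2^i$) into a single per-neighbor broadcast probability $p_i=1/2^{i+1}$ and bound the per-slot hearing probability $q_i=n_u p_i(1-p_i)^{n_u-1}$ directly, whereas the paper first concentrates the number of broadcaster-role neighbors into a constant-fraction window ($[n_u/3,n_u]$, resp.\ $[0.41n_u,0.6n_u]$) via a separate Chernoff bound and only then bounds the conditional beacon probability; your route saves that extra conditioning step and yields sharper constants (e.g., roughly $8e^{-8}$ versus the paper's $9.6e^{-6.4}$ at the boundary iteration, and $e^{-1}$ versus $0.15$ at the trigger iteration). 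One small miscue in your commentary: you identify iteration $\lg{n_u}-3$ (expected fraction $4e^{-4}\approx 0.073$ versus threshold $1/10$) as the delicate boundary, but your own decomposition never needs to control that iteration---triggering there yields the estimate $2^{\lg{n_u}}=n_u$, which is in range. The true boundary of your no-trigger event is $\lg{n_u}-4$, where the gap (about $0.003$ versus $0.1$) is wide and the Chernoff bound is entirely comfortable, so no constant-tuning difficulty actually arises.
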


Notice that in \CountAllnoCDa, for a node $u$, the high correctness guarantee is with respect to $n_u$. This implies, when $n_u$ is some constant, the probability that the obtained estimate is desirable is also a constant. Sometimes, we may want \emph{identical} and \emph{high} correctness guarantees for \emph{all} estimates, such as high probability in $n$. Our second algorithm---which is called \CountAllnoCDb---achieves this goal, at the cost of accessing $N$ and demanding longer execution time. (\CountAllnoCDa only needs $N_\Delta$ to enforce termination, while $\CountAllnoCDb$ needs $N$ to work properly.)

Careful readers might suspect \CountAllnoCDb just extends the length of each iteration of \CountAllnoCDa to $\Theta(\lg{N})$. Unfortunately, this simple modification is not sufficient: we still cannot change the fact that when a node $u$ listens, the number of broadcasters among its neighbors is concentrated to $n_u/2$ only with high probability in $n_u$.

Instead, \CountAllnoCDb takes a different approach, the core of which is a ``double counting'' trick. Specifically, \CountAllnoCDb contains $L=\Theta(\lg{N})$ iterations, each of which has $\Theta(\lg^2{N})$ time slots. At the beginning of each iteration, each node chooses to be a broadcaster or a listener each with probability $1/2$. Then, by applying the ``guess and verify'' strategy, each listener will spend the $\Theta(\lg^2{N})$ time slots in this iteration to obtain a constant factor estimate on the number of neighboring broadcasters. When all $\Theta(\lg{N})$ iterations are done, each node will sum the estimates it has obtained, divide it by $L/4$, and output the result as its estimate for the neighborhood size.

Due to space constraint, detailed description for each iteration is deferred to Appendix \ref{subsec-appx-omit-CountAllnoCDb}. We only note here that the estimates obtained by the listeners are quite accurate:

\begin{lemma}\label{lemma-CountAllnoCDb-1}
Consider an arbitrary iteration during the execution of \CountAllnoCDb, assume node $u$ is a listener with $m$ neighboring broadcasters. By the end of this iteration, node $u$ will obtain an estimate of $m$ in the range $[m,4m]$, with high probability in $N$.
\end{lemma}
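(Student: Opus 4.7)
The plan is to analyze the inner "guess-and-verify" sub-routine that listener $u$ executes within a single iteration of \CountAllnoCDb, adapting the argument used for \CountSHnoCDHigh to the single-listener multi-hop setting. Let $m$ denote the number of $u$'s neighbors that are broadcasters in this iteration. The sub-routine sweeps through a geometric sequence of guesses $\hat{m} \in \{2, 4, \ldots, N\}$ and, for each guess, spends $\Theta(\lg N)$ slots in which every broadcasting neighbor of $u$ independently transmits a beacon with probability $1/\hat{m}$. Based on the fraction of slots at each guess in which $u$ receives a clean message, a fixed decision rule (e.g.\ "pick the first guess whose observed fraction crosses a calibrated threshold $\tau$," possibly multiplied by a small constant) identifies the returned estimate, which the analysis must show lies in $[m, 4m]$.

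The first step of the analysis is to compute the per-slot probability of a clean reception at $u$ under guess $\hat{m}$, which equals
$$p(m, \hat{m}) \;=\; \frac{m}{\hat{m}}\left(1 - \frac{1}{\hat{m}}\right)^{m-1}.$$
A short calculation shows that $p(m,\cdot)$ is unimodal with maximum $\Theta(1)$ near $\hat{m} = m$, and that there exist absolute constants $0 < c_1 < c_2$ (independent of $m$) such that $p(m,\hat{m}) \geq c_2$ throughout $\hat{m} \in [m, 4m]$, while $p(m,\hat{m}) \leq c_1$ whenever $\hat{m}$ lies outside some slightly larger interval (e.g.\ $[m/8, 16m]$): for $\hat{m}$ too small, collisions choke off clean receptions, and for $\hat{m}$ too large, broadcasts themselves become too rare. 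I would then calibrate $\tau$ inside the gap $(c_1, c_2)$ and pin down the decision rule so that, deterministically, any guess satisfying the rule lies within $[m, 4m]$ provided every observed fraction is within an additive constant of its expectation $p(m, \hat{m})$.

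To upgrade this deterministic statement to a high-probability one, I would apply an additive Chernoff bound to each guess: with $\Theta(\lg N)$ independent Bernoulli trials per guess and the hidden constant sufficiently large, the observed fraction deviates from $p(m,\hat{m})$ by more than the prescribed constant with probability at most $1/N^{\gamma+1}$, for any desired constant $\gamma$. A union bound over the $O(\lg N)$ distinct guesses then shows that every observed fraction is simultaneously faithful with probability at least $1 - 1/N^{\gamma}$, delivering the required high-probability-in-$N$ guarantee. The main technical obstacle is precisely that the bound must be taken with respect to $N$ rather than $m$: when $m$ is as small as a constant, one cannot lean on concentration in the broadcaster population itself, so the $\Theta(\lg N)$ slots per guess (and hence the $\Theta(\lg^2 N)$ budget per iteration) must carry the entire concentration argument on their own. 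The delicate point is verifying that the gap $(c_1, c_2)$ between the values of $p(m, \hat{m})$ at good versus bad guesses is bounded below by constants \emph{uniform in} $m$; once this uniform gap is established, the rest of the argument reduces to standard Chernoff-and-union-bound bookkeeping very similar to that used for \CountSHnoCDHigh.
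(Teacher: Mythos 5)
Your proposal follows essentially the same route as the paper's proof: both compute the per-slot clean-reception probability $\frac{m}{\hat m}\left(1-\frac{1}{\hat m}\right)^{m-1}$, exploit its unimodality to separate good guesses from bad ones by a constant gap (the paper uses $8e^{-7} < 1/40 < 2e^{-4}$), and then apply a Chernoff bound over the $\Theta(\lg N)$ slots per guess plus a union bound over the $O(\lg N)$ guesses, so that the concentration is carried entirely by $N$ rather than by $m$. The only divergence is constants bookkeeping: in the paper the first guess to cross the threshold lands in roughly $(m/8, m]$ and is rescaled by $4$ to yield an estimate in $[m,4m]$, whereas you assert the accepted guess itself lies in $[m,4m]$ --- an imprecision already covered by your own ``multiplied by a small constant'' caveat.
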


We can now state and prove the guarantees provided by \CountAllnoCDb:

\begin{theorem}\label{thm-CountAllnoCDb}
The \CountAllnoCDb approximate neighbor counting algorithm guarantees the following properties when executed in a multi-hop network with no collision detection: (a) all nodes terminate after $O(\lg^3{N})$ time slots; and (b) with high probability in $N$, for each node $u$, the node will obtain an estimate of $n_u$ in the range $[n_u,5n_u]$.
\end{theorem}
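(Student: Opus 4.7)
Part (a) is immediate from the structure of \CountAllnoCDb: the algorithm consists of $L = \Theta(\lg N)$ iterations, each of fixed length $\Theta(\lg^2 N)$, so every node halts after $O(\lg^3 N)$ slots regardless of its local state. For part (b), my plan is to combine the single-iteration accuracy from Lemma \ref{lemma-CountAllnoCDb-1} (which already holds with high probability in $N$) with a Chernoff concentration argument over the $L$ iterations. The reason this works is that each iteration's estimate has controlled range and mean proportional to $n_u$, so averaging $\Theta(\lg N)$ of them upgrades the variance-dominated per-iteration bound to high-probability-in-$N$ concentration uniformly over all values of $n_u$.

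Fix a node $u$. For iteration $i \in [L]$, let $B_i \in \{0,1\}$ indicate whether $u$ chooses to listen, let $m_i$ be the number of $u$'s neighbors that choose to broadcast, and let $X_i$ denote the value $u$ adds to its running sum in iteration $i$ (so $X_i = 0$ whenever $B_i = 0$). By Lemma \ref{lemma-CountAllnoCDb-1} and a union bound over $L = O(\lg N)$ iterations, the event $E_{\mathrm{acc}} = \{\forall i : B_i = 1 \Rightarrow m_i \leq X_i \leq 4 m_i\}$ holds with probability $1 - N^{-\Omega(1)}$. Under $E_{\mathrm{acc}}$, since $\Pr[B_i = 1] = 1/2$ and $E[m_i \mid B_i = 1] = n_u/2$, I have $E[X_i] \in [n_u/4,\, n_u]$, and hence $E[\hat{n}_u] = (4/L)\,E[\sum_i X_i] \in [n_u,\, 4 n_u]$. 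Each $X_i$ is bounded in $[0, 4 n_u]$ under $E_{\mathrm{acc}}$ and the $X_i$'s are independent across iterations (both the listener coin flip and the subroutine's internal randomness are), so applying a multiplicative Chernoff bound to the normalized sum $\sum_i X_i/(4 n_u)$, whose expectation is $\Omega(L)$, yields a $(1 \pm \epsilon)$ concentration around $E[\sum_i X_i]$ with probability $1 - N^{-\Omega(1)}$, for any desired constant $\epsilon > 0$, provided the constant hidden inside $L = \Theta(\lg N)$ is chosen sufficiently large.

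The last step is to convert this $(1 \pm \epsilon)$ concentration into the stated $[n_u,\, 5 n_u]$ range. Direct substitution gives $\hat{n}_u \in [(1-\epsilon) n_u,\, 4(1+\epsilon) n_u]$, so selecting $\epsilon$ small (e.g.\ $\epsilon \le 1/9$) and applying a small compensating scaling of the divisor (or rounding up; recall from Section \ref{sec-model} that $n_u$ is assumed to be a power of two, which is convenient for the lower endpoint) fits the output into $[n_u,\, 5 n_u]$ without pushing past the upper endpoint. The main obstacle I anticipate is this concentration step, specifically making the final success probability high in $N$ rather than in $n_u$: on its own, $m_i$ is Binomial$(n_u, 1/2)$ conditional on $u$ listening and only concentrates around $n_u/2$ with probability depending on $n_u$, which is too weak when $n_u$ is small. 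The ``double counting'' trick is precisely the mechanism that sidesteps this, because the cumulative expectation $\Theta(L n_u)$ of the sum is $\Omega(\lg N)$ regardless of $n_u$, so the Chernoff exponent always scales with $\lg N$. One minor subtlety to track is that conditioning on $E_{\mathrm{acc}}$ slightly perturbs the distribution of the $X_i$'s, but since $\Pr[\neg E_{\mathrm{acc}}] = N^{-\Omega(1)}$, this failure mode can be absorbed into the overall error budget.
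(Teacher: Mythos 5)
Your proof is correct and follows the paper's overall structure---part (a) from the iteration count, part (b) from Lemma \ref{lemma-CountAllnoCDb-1} plus a Chernoff argument over the $L=\Theta(\lg N)$ iterations---but the concentration step is decomposed differently. The paper splits the ideal sum $\sum_i m_i B_i$ neighbor by neighbor: for each neighbor $v$, the number of iterations in which $u$ listens and $v$ broadcasts is Binomial$(L,1/4)$, which concentrates to $(1\pm\delta)L/4$ w.h.p.\ in $N$; a union bound over the at most $N$ neighbors then pins the ideal sum to $(1\pm\delta)(L/4)n_u$, and the multiplicative $[1,4]$ error from Lemma \ref{lemma-CountAllnoCDb-1} is layered on deterministically. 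You instead treat each iteration's contribution $X_i$ as a single bounded random variable in $[0,4n_u]$ and apply one Hoeffding-type multiplicative bound to the whole sum, whose normalized expectation is $\Omega(\lg N)$ independently of $n_u$---which is indeed the crux of why the ``double counting'' trick yields correctness w.h.p.\ in $N$ rather than in $n_u$. Both routes work; yours avoids the union bound over neighbors but pays by needing $X_i\le 4n_u$, which holds only on the accuracy event $E_{\mathrm{acc}}$, i.e.\ the conditioning subtlety you flag. That subtlety is real but standard: replace $X_i$ by a clipped variable $Y_i\in[m_iB_i,4m_iB_i]$, which is a function of iteration $i$'s randomness alone (hence independent across iterations, always bounded by $4n_u$, with the same expectation window) and agrees with $X_i$ on $E_{\mathrm{acc}}$. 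The paper's per-neighbor decomposition sidesteps this entirely because its Chernoff bound touches only the coin flips, never the estimator outputs. Finally, both arguments are equally loose at the lower endpoint---$(1-\epsilon)n_u$ sits below $n_u$---and you are in fact more explicit than the paper about needing a compensating rescaling or rounding to land in $[n_u,5n_u]$.
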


\begin{proof}[Proof sketch.]
Consider a node $u$ and one of its neighbor $v$. Assume \CountAllnoCDb contains $a\lg{N}$ iterations, where $a$ is a sufficiently large constant. In expectation, in $(a/4)\cdot\lg{N}$ iterations, $u$ will be listener and $v$ will be broadcaster. Apply a Chernoff bound, we know $u$ will be listener and $v$ will be broadcaster in at least $(1-\delta)\cdot(a/4)\cdot\lg{N}$ iterations, and at most $(1+\delta)\cdot(a/4)\cdot\lg{N}$ iteration, w.h.p.\ in $N$. Here, $0<\delta<1$ is a small constant determined by $a$. Take a union bound over all $O(N)$ neighbors of $u$, we know this claim holds true for them as well. Therefore, if $u$ were able to accurately count the number of broadcasting neighbors without any error in each listening iteration, the sum it will obtain would be in the range $[(1-\delta)\cdot(a/4)\cdot\lg{N}\cdot n_u, (1+\delta)\cdot(a/4)\cdot\lg{N}\cdot n_u]$, w.h.p.\ in $N$.

Now, due to Lemma \ref{lemma-CountAllnoCDb-1}, we know the actual sum of counts $u$ will obtain is in the range $[(1-\delta)\cdot(a/4)\cdot\lg{N}\cdot n_u, 4\cdot(1+\delta)\cdot(a/4)\cdot\lg{N}\cdot n_u]$, w.h.p.\ in $N$. As a result, according to our algorithm description, the final estimate $u$ will obtain is in the range $[n_u, 5n_u]$, w.h.p.\ in $N$. Take a union bound over all nodes, the theorem is proved.
\end{proof}

\subsection{Multi-Hop with All Nodes Counting: Collision Detection}

In \CountAllnoCDa, we resolve the termination detection problem by accessing $N_\Delta$. This allows nodes to run long enough so that they could be sure that everyone has a chance to learn what it needed to learn. With the addition of collision detection, however, the assumption that nodes know $N_{\Delta}$ can be removed for many network topologies. In particular, we can leverage the idea that neighbors of $u$ that {\em have not} obtained estimates yet can use noise to \emph{reliably} inform $u$ that they wish $u$ to continue. We call this algorithm \CountAllCDa.

\CountAllCDa contains multiple iterations, each of which has two parts. The first part of any iteration $i$ is identical to iteration $i$ of \CountAllnoCDa. The second part, on the other hand, helps nodes to determine when to stop. In particular, the second part of iteration $i$ contains a single slot. For a node $u$, if it has not obtained an estimate of $n_u$ by the end of the first part of iteration $i$ yet, then in the second part, it will broadcast a \textsf{continue} message. On the other hand, if $u$ has already obtained an estimate of $n_u$ by the end of the first part of iteration $i$, it will simply listen in part two. Moreover, $u$ will continue into the next iteration iff it hears \textsf{continue} or noise during part two. The guarantees provided by \CountAllCDa are stated below, and the proof of it is provided in Appendix~\ref{subsec-appx-omit-CountAllCDa}.

\begin{theorem}\label{thm-CountAllCDa}
The \CountAllnoCDa approximate neighbor counting algorithm guarantees the following properties for each node $u$ when executed in a multi-hop network with collision detection: (a) $u$ will obtain an estimate of $n_u$ in the range $[n_u,4n_u]$ within $O(\lg^2{n_u})$ time slots, with high probability in $n_u$; and (b) if $\sum_{v\in\Gamma_u\cup\{u\}}(1/{n_v}) < 1$, then $u$ will terminate within $(\max_{v\in\Gamma_u\cup\{u\}}\{\lg{n_v}\})^2$ time slots, with probability at least $1-\sum_{v\in\Gamma_u\cup\{u\}}(1/{n_v})$.
\end{theorem}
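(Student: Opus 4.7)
\medskip
\noindent\textbf{Proof proposal.} The plan is to prove the two parts separately, using Theorem~\ref{thm-CountAllnoCDa} as a black box for the estimation guarantee and analyzing only the extra part-two slot that \CountAllCDa adds to each iteration.

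For part (a), I would observe that the first part of each iteration of \CountAllCDa is literally one iteration of \CountAllnoCDa, and that the new part-two rule never stops a node before that node has obtained an estimate: a node without an estimate broadcasts \textsf{continue} in part two and is thereby implicitly carried into the next iteration (the ``continue iff continue-or-noise'' decision only has content for listeners). Consequently, the behaviour of $u$ up to the moment it first obtains an estimate is statistically identical under \CountAllCDa and \CountAllnoCDa, so the estimation guarantee of Theorem~\ref{thm-CountAllnoCDa} transfers verbatim. Since each iteration of length $\Theta(i)$ is lengthened by only one slot, the first $\Theta(\lg n_u)$ iterations still cost $O(\lg^2 n_u)$ total time, yielding (a).

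For part (b), set $T = \bigl(\max_{v\in\Gamma_u\cup\{u\}}\lg n_v\bigr)^2$. By part (a) applied to each $v\in\Gamma_u\cup\{u\}$, node $v$ has obtained an estimate within $O(\lg^2 n_v)\le T$ slots with probability at least $1-1/n_v$. A union bound over the closed neighbourhood $\Gamma_u\cup\{u\}$ then shows that, with probability at least $1-\sum_{v\in\Gamma_u\cup\{u\}} 1/n_v$, \emph{every} such node has its estimate by time $T$. In the first iteration whose part-two slot occurs after time $T$, every node in $\Gamma_u\cup\{u\}$ is a listener rather than a \textsf{continue}-broadcaster, so the channel at $u$ carries neither a message nor noise; the part-two rule therefore causes $u$ to terminate. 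The hypothesis $\sum_v 1/n_v<1$ is exactly what keeps this probability bound nonvacuous.

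The one subtlety worth flagging, and the closest thing to an obstacle, is the interpretation of the part-two rule for nodes that are themselves broadcasting \textsf{continue}. Such a node cannot hear anything, because the transceivers are half-duplex, so a strictly literal reading of ``continue iff continue-or-noise'' would force every still-counting node to terminate immediately---clearly not the intent. The only coherent interpretation, and the one implicitly used in both parts above, is that the rule applies only to listeners; broadcasters, which by construction still need an estimate, continue by default. With this reading pinned down, part (a) is a syntactic transfer from Theorem~\ref{thm-CountAllnoCDa} and part (b) is a single union-bound application, so I don't expect either step to require heavy machinery.
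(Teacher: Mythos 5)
Your overall structure matches the paper's: part (a) is obtained by transferring Theorem~\ref{thm-CountAllnoCDa}, and part (b) is the same union bound over $\Gamma_u\cup\{u\}$ that the paper performs (the paper phrases it per-iteration, terminating $u$ by the end of iteration $\max_{v\in\Gamma_u\cup\{u\}}\{\lg n_v\}$, but the content is identical). Your reading of the part-two rule for broadcasters is also the intended one.

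There is, however, a gap in your justification of part (a). You argue that ``the part-two rule never stops a node before that node has obtained an estimate'' and conclude that $u$'s behaviour up to its first estimate is statistically identical to \CountAllnoCDa. That premise does not yield that conclusion: what threatens the transfer of Theorem~\ref{thm-CountAllnoCDa} is not $u$ being stopped early, but a \emph{neighbor} $v$ of $u$ being stopped after $v$ has its own estimate but before $u$ has one. If that happened, $v$ would cease broadcasting \textsf{beacon} messages in part one, the effective number of participating neighbors of $u$ would shrink, and the concentration arguments behind Theorem~\ref{thm-CountAllnoCDa} (which assume all $n_u$ neighbors keep flipping coins) would no longer apply to $u$. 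The missing—and essential—observation, which is the first line of the paper's proof, is that as long as $u$ lacks an estimate it broadcasts \textsf{continue} in every part-two slot, so every neighbor of $u$ hears either that message or noise and is thereby forced to continue; hence no neighbor of $u$ can terminate before $u$ obtains its estimate. This is precisely the purpose of the collision-detection mechanism, and without stating it your ``Consequently'' is a non sequitur. Once this sentence is added, the rest of your argument (including part (b), which additionally needs the same observation applied to each $v\in\Gamma_u$ so that Theorem~\ref{thm-CountAllnoCDa} is valid for $v$ as well) goes through as in the paper.
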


A key point about the above termination bound is that it requires $\sum_{v\in\Gamma_u\cup\{u\}}(1/{n_v}) < 1$. (In fact, this constraint can be relaxed to $\sum_{v\in\Gamma_u\cup\{u\}}(1/{n^\gamma_v}) < 1$, for an arbitrarily chosen constant $\gamma\geq 1$.) If this is not the case (e.g., in a dense star network), the termination detection mechanism might not work properly. In that situation, the default dependence on $N_{\Delta}$ from the no collision detection case can be applied as a back-up.

\subsection{Multi-Hop with Designated Node Counting}

Compared to the all nodes counting variant, multi-hop neighbor counting with only the designated node is easier: the strategies we previously used for single-hop counting are still applicable, and the introduction of the designated node can actually make coordination easier. (In particular, this node can greatly simplify termination detection). Due to space constraint, we defer the upper bounds for this variant to Appendix \ref{sec-appx-omit-alg-center}.

We note that one interesting algorithm in this variant is \CountCenternoCDConst, which achieves constant success probability without collision detection. \CountCenternoCDConst differs from its single-hop counterpart (i.e., \CountSHnoCDConst) in that it uses fraction of clear message slots to determine the accuracy of nodes' estimate. The primary reason we develop this algorithm is that the success probability of \CountSHnoCDConst is \emph{fixed}. In contrast, in \CountCenternoCDConst, by tweaking the running time (up to some constant factor), the success probability---despite being a constant---can be adjusted accordingly. More details of this algorithm are presented in Appendix \ref{subsec-appx-omit-alg-center-noCD}.

%\section{Conclusion and Discussion}

%In this paper, we conducted a comprehensive survey of the approximate neighbor counting problem in the radio network model. We considered several common scenarios, and derived lower and upper bounds for all of them. In all but one case, these bounds match. Along the path, we demonstrated various strategies, some of which are extensions to known techniques or results, while others require technical novelty. We believe this collection of results deepens our understandings of the approximate neighbor counting problem, and provides a useful toolbox for algorithm designers tackling higher level problems.

\section{Discussion}

We see at least two problems that worth further exploration. First, how do termination requirements affect the complexity of the problem? This is particularly interesting in the multi-hop all nodes counting scenario, when knowledge of $N_\Delta$ or $N$ is not available. \CountAllnoCDa shows lower bound can be achieved at the cost of no termination, but how much time must we spend if termination needs to be enforced, or is simply impossible without knowing $N_\Delta$ or $N$? Another open problem concerns the gap between the lower and upper bounds, in the multi-hop all nodes counting scenario with collision detection. On the one hand, the lower bound might be loose as it is a simple carry over, ignoring the possibility that all nodes counting could be fundamentally harder than designated node counting. Yet on the other hand, we have not found a way to leverage collision detection to reduce algorithm runtime (e.g., it seems hard to run multiple instances of binary search in parallel). Currently, our best guess is that \emph{both} the lower bound and the upper bound are not tight.

% references
\bibliographystyle{plainurl}
\bibliography{arxiv-full}

% appendix
\clearpage
\appendix
\section*{Appendix}

\section{Omitted Description and Analysis of Lower Bound Results}\label{sec-appx-omit-lower}

\subsection{The $k$-hitting game and lower bounds for contention resolution in single-hop networks}\label{subsec-appx-omit-lower-contention-resolve-singlehop}

Newport~\cite{newport14} introduced the simple combinatorial {\em $k$-hitting game}, which acts as a flexible and generic ``wrapper'' for the notion of ``hitting sets'' proposed by Alon et al.\ in their seminal paper on centralized broadcast lower bounds \cite{alon91}. Through carefully-crafted reduction arguments, one can reduce $k$-hitting game to many different variations of contention resolution---allowing a fixed lower bound on hitting to carry over to many different contention resolution assumptions. Because we utilize and modify these existing bounds to study neighbor counting, we review the relevant definitions and results here.

In the $k$-hitting game where $k>1$ is an integer, there is one player and one referee. Before the game starts, the referee privately selects a \emph{target set} $T\subseteq\{1,2,\cdots,k\}$. The game then proceeds in rounds. In each round, the player submits a \emph{proposal} $P\subseteq\{1,2,\cdots,k\}$ to the referee. If $|P\cap T|=1$, the player wins. Otherwise, the referee simply tells the player that $P$ is incorrect, and the game proceeds into the next round.

Intuitively, in a radio network, $T$ denotes the $n$ nodes that are activated from ``the universe of $k$ possible nodes''. The set of nodes that would broadcast in a given round if all nodes were running the given algorithm implicitly defines a proposal $P$. In this case, $|P \cap T| = 1$ is equivalent to isolating a broadcaster among $T$, solving contention resolution.

In \cite{newport14}, the author proves that winning the $k$-hitting game with constant probability requires at least $\Omega(\lg{k})$ rounds, while solving it with probability at least $1-1/k$ requires at least $\Omega(\lg^2{k})$ rounds.

\begin{lemma}[\cite{newport14}]\label{lemma-k-hitting}
Fix some player $\mathcal{P}$ that guarantees, for all $k>1$, to win the $k$-hitting game in $f(k)$ rounds with probability at least $p$. It follows that: (a) if $p$ is some constant, then $f(k)\in\Omega(\lg{k})$; and (b) if $p\geq 1-1/k$, then $f(k)\in\Omega(\lg^2{k})$.
\end{lemma}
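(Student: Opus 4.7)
My plan is to apply Yao's minimax principle: exhibit a distribution $\mathcal{D}$ over target sets $T \subseteq [k]$ and lower bound the rounds any \emph{deterministic} player needs to win against $T \sim \mathcal{D}$ with probability $\geq p$, which transfers to the worst-case bound on any randomized player. Since the only feedback before winning is the uninformative ``incorrect'', a deterministic player collapses to a fixed, non-adaptive sequence of proposals $P_1, P_2, \ldots$, and the entire analysis reduces to hypergeometric tail estimates.

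I would take $\mathcal{D}$ to first draw a size $s$ uniformly from the dyadic set $\mathcal{S} = \{1, 2, 4, \ldots, k\}$ of cardinality $L = \Theta(\lg k)$, then take $T$ uniformly among the size-$s$ subsets of $[k]$. The core combinatorial estimate is
\[
\Pr\bigl[\,|P \cap T| = 1 \,\big|\, |T| = s\,\bigr] \;=\; \frac{p\binom{k-p}{s-1}}{\binom{k}{s}},
\]
for any proposal $P$ with $|P| = p$, which, viewed as a function of $s$ for fixed $p$, peaks near $s^\star(p) = k/p$ at value $\Theta(1)$ and decays geometrically on either side. Consequently each proposal is ``tuned'' (winning probability $\geq c_0$ for some fixed $c_0 > 0$) to only $O(1)$ dyadic scales in $\mathcal{S}$, and $\sum_{s \in \mathcal{S}} \Pr[|P \cap T|=1 \mid |T|=s] = O(1)$.

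Part (a) then follows immediately by averaging over $s$: each round has unconditional winning probability $O(1/L) = O(1/\lg k)$, so a union bound across $t$ rounds gives total winning probability $O(t/\lg k)$, forcing $t = \Omega(\lg k)$ for any constant success threshold.

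Part (b) is substantially more delicate: the union bound from part (a) yields only $\Omega(\lg k)$, so a sharper argument is required. My plan is to adapt the Alon--Bar-Noy--Linial--Peleg hitting-set machinery (originally developed for the $\Omega(\lg^2 n)$ deterministic broadcast lower bound) to the $k$-hitting game. The intuition is that with only $t = o(\lg^2 k)$ proposals, one can iteratively carve out of $[k]$ a residual family of ``evasive'' target sets spread across all $L$ dyadic scales: roughly $\Omega(\lg k)$ rounds are needed to ``neutralize'' each of the $L$ scales, so $t$ falls short by a $\lg k$ factor and a nontrivial evasive family remains unhit by every $P_i$. Loading weight $\geq 1/k$ onto this family in $\mathcal{D}$ then forces any deterministic player to fail with probability $> 1/k$. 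The chief technical obstacle, and what makes part (b) much harder than part (a), is the combinatorial construction of the evasive family: I must maintain an inductive invariant on the residual set structure across $L$ scale-by-scale ``peelings'', carefully tracking at each stage how many proposals have been consumed by the currently active scale and how much of $[k]$ remains free. This is exactly the ABLP argument transferred into the hitting-game framework, and its rigorous execution consumes most of the proof's technical content.
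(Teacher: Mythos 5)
First, a point of reference: the paper does not prove Lemma~\ref{lemma-k-hitting} at all --- it is imported verbatim from \cite{newport14} and used as a black box --- so there is no in-paper argument to compare yours against. Judged on its own terms, your part~(a) is essentially correct and matches the standard route: the ``incorrect''-only feedback indeed collapses any deterministic player to a non-adaptive proposal sequence, the dyadic-scale mixture makes each fixed proposal win with unconditional probability $O(1/\lg k)$ (since the hypergeometric term behaves like $x e^{-x}$ with $x = |P|s/k$, and summing $x e^{-x}$ over a geometric sequence of $x$-values gives $O(1)$), and a union bound over $t$ rounds finishes. (Minor: you overload $p$ for both the success probability and $|P|$.)

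Part~(b) is where the genuine gap lies: it is a plan rather than a proof, and the plan as stated has a structural problem beyond mere incompleteness. In Yao's principle the distribution $\mathcal{D}$ must be fixed \emph{before} quantifying over deterministic players, whereas your ``evasive family'' is constructed \emph{from} a given player's proposal sequence; you therefore cannot ``load weight $\geq 1/k$ onto this family in $\mathcal{D}$'' after the fact. What is actually required is a counting strengthening of the ABLP-style construction: under a distribution fixed in advance (say, uniform dyadic scale, then a uniform set of that size), every sequence of $o(\lg^2 k)$ proposals must leave a collection of targets of $\mathcal{D}$-measure greater than $1/k$ that no proposal hits uniquely --- equivalently, at some scale $s$ the evasive targets must constitute an $\Omega(\lg k/k)$ fraction of all size-$s$ subsets. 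The ABLP argument you invoke is existential (it produces one bad instance, which suffices for deterministic broadcast lower bounds but here only yields failure probability greater than zero), so the measure bound is precisely where the $\Omega(\lg^2 k)$ content lives, and it is exactly the step your sketch defers. Until that lemma is stated and proved --- including the bookkeeping that proposals ``tuned'' to one scale cannot simultaneously shrink the evasive mass at $\Omega(\lg k)$ other scales --- part~(b) is not established.
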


Then, by reduction arguments, the author is able to extend these lower bounds to handle contention resolution in a single-hop radio network with and without collision detection.

\begin{lemma}[\cite{newport14}]\label{lemma-contention-resolution-singlehop-no-cd}
Let $\mathcal{A}$ be an algorithm that solves the contention resolution problem in $g(N)$ time slots with probability $p$ in the single-hop network model with no collision detection. It follows that: (a) if $p$ is some constant, then $g(N)\in\Omega(\lg{N})$; and (b) if $p\geq 1-1/N$, then $g(N)\in\Omega(\lg^2{N})$.
\end{lemma}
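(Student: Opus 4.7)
The plan is to prove both parts by reducing from the $k$-hitting game, for which Lemma~\ref{lemma-k-hitting} already supplies the matching quantitative bounds. Given any algorithm $\mathcal{A}$ that solves single-hop contention resolution without collision detection in $g(N)$ slots with probability $p$, I would construct, for each integer $k>1$, a player $\mathcal{P}_k$ for the $k$-hitting game that wins in $g(k)$ rounds with probability at least $p$. Then setting $k=N$ and invoking Lemma~\ref{lemma-k-hitting}(a) gives $g(N)\in\Omega(\lg N)$ when $p$ is constant, while Lemma~\ref{lemma-k-hitting}(b) (instantiated at $p\geq 1-1/N = 1-1/k$) gives $g(N)\in\Omega(\lg^2 N)$.

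The construction is the natural simulation. The player $\mathcal{P}_k$ identifies the universe $[k]=\{1,\ldots,k\}$ with the set of potential radio nodes, and thinks of the referee's hidden target set $T\subseteq[k]$ as the (unknown) set of activated nodes in a single-hop execution of $\mathcal{A}$. Before the game begins, $\mathcal{P}_k$ samples independent random coins for all $k$ virtual nodes. Then, in each round $t$, it simulates $\mathcal{A}$ internally by feeding every virtual listener the feedback ``silence''; it computes the set $B_t\subseteq[k]$ of virtual nodes that choose to broadcast in this round according to $\mathcal{A}$'s specification, and submits the proposal $P_t=B_t$ to the referee.

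The key observation is that in the no-collision-detection model, the actual channel feedback a listener would receive in round $t$ of a real execution on target set $T$ is exactly ``silence'' whenever $|B_t\cap T|\neq 1$, and broadcasters receive no feedback at all. Therefore, up to and including the first round $t^\star$ at which $|B_{t^\star}\cap T|=1$, the joint distribution of the simulated transcript is identical to that of a genuine execution of $\mathcal{A}$ on activated set $T$. By the assumed correctness of $\mathcal{A}$, such a $t^\star\leq g(k)$ exists with probability at least $p$, and this event is precisely the event that $\mathcal{P}_k$ wins the $k$-hitting game.

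The main technical point to get right is the faithfulness of the simulation: one must verify that every broadcast decision of every virtual node depends only on information the simulator can reproduce, namely its own private randomness and the silence feedback seen so far. This is exactly where the absence of collision detection is essential -- if listeners could distinguish silence from noise, the simulator would need to know $T$ in order to supply the correct feedback on rounds with $|B_t\cap T|\geq 2$, and the reduction would collapse. The remaining details (quantifying uniformly over $k$, and matching the two probability regimes of Lemma~\ref{lemma-k-hitting} to parts (a) and (b) of the present lemma) are routine bookkeeping once the simulation's faithfulness is established.
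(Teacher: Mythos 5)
Your reduction is correct and is essentially the same argument the paper relies on: this lemma is cited from \cite{newport14}, and the paper's own proof of the multi-hop analogue (Lemma~\ref{lemma-contention-resolution-multihop-no-cd}) uses exactly your simulation --- run $\mathcal{A}$ on the full universe of $N$ virtual nodes, feed every listener silence, propose the broadcast set each round, and observe that the transcript of the activated nodes is faithful up to the first slot in which exactly one activated node broadcasts. Your identification of the no-collision-detection assumption as the point where faithfulness would otherwise fail is precisely the issue the paper's CD version (Lemma~\ref{lemma-contention-resolution-multihop-cd}) must work around with its binary tree of guessed feedback sequences.
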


\begin{lemma}[\cite{newport14}]\label{lemma-contention-resolution-singlehop-cd}
Let $\mathcal{A}$ be an algorithm that solves the contention resolution problem in $g(N)$ time slots with probability $p$ in the single-hop network model with collision detection. It follows that: if $p$ is some constant, then $g(N)\in\Omega(\lg\lg{N})$.
\end{lemma}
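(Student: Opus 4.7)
The plan is to mirror Newport's reduction-based strategy from Lemma~\ref{lemma-contention-resolution-singlehop-no-cd}, but with both the combinatorial game and its lower bound strengthened to accommodate collision detection. With CD, the feedback alphabet expands from binary (``solved / didn't solve'') to ternary (``silence / single / collision''), and correspondingly the lower bound weakens from $\Omega(\lg N)$ to $\Omega(\lg\lg N)$.

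First I would define a \emph{CD-aware $k$-hitting game}: after each proposal $P$, the referee announces whether $|P\cap T|=0$, $|P\cap T|=1$, or $|P\cap T|\ge 2$, and the player wins exactly in the middle case. The reduction from single-hop contention resolution with CD to this game is immediate. Given an algorithm $\mathcal{A}$ that solves contention resolution in $g(N)$ slots with probability $p$, the player instantiates $\mathcal{A}$ on a universe of $k=N$ nodes, interprets the set of broadcasting nodes in each slot as the proposal $P$, and feeds the ternary channel outcome back into $\mathcal{A}$. Every activated subset of nodes on which $\mathcal{A}$ must succeed corresponds to a valid target set $T$, so the correctness of $\mathcal{A}$ yields a player that wins the CD-aware game in $g(N)$ rounds with probability $p$.

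Next I would establish that winning the CD-aware game with constant probability requires $\Omega(\lg\lg k)$ rounds. By Yao's minimax principle it suffices to exhibit a distribution over $T$ that is hard for every deterministic player. A convenient construction first samples a \emph{scale} $j$ uniformly from $\{1,2,\ldots,\lg k\}$ and then draws $T=T_j\subseteq[k]$ uniformly among subsets of size $2^j$. Against such a $T_j$, the ternary feedback for a proposal $P$ is essentially deterministic outside a narrow window of scales: silence when $|P|\cdot 2^j\ll k$, collision when $|P|\cdot 2^j\gg k$, and the ``single'' outcome arises with non-negligible probability only when $|P|\cdot 2^j=\Theta(k)$, which encompasses only $O(1)$ consecutive scales. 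An adversary that answers silence or collision whenever one of those covers a majority of the surviving scales can therefore shrink the live set of scales by at most a factor of $2$ per round while postponing the player's win until the live set fits inside the narrow ``single'' window. Starting from $\Theta(\lg k)$ candidate scales, this forces the game length to be $\Omega(\lg\lg k)$, even for constant success probability.

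The main obstacle I expect is making the ``essentially deterministic'' claim quantitative enough to drive the adversary argument. Concretely, for each fixed feedback history---and hence each fixed next proposal $P$---I need to show that the conditional probability of silence (respectively collision) is close to $1$ at all but an $O(1)$-sized window of scales $j$, so that the adversary can commit to answering silence or collision without being forced into a premature ``single''. Handling the coupling between the $T_j$ randomness and the $3^t$-way branching of the deterministic strategy, while keeping the per-round elimination truly a factor of $2$ rather than something larger, is the delicate combinatorial core. Once this is in place, composing the game lower bound with the reduction above at $k=N$ yields $g(N)=\Omega(\lg\lg N)$ and completes the proof.
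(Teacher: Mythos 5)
Your approach is genuinely different from the one used in \cite{newport14} (and mirrored in this paper's proof of Lemma~\ref{lemma-contention-resolution-multihop-cd}). The paper keeps the original \emph{binary}-feedback $k$-hitting game and its $\Omega(\lg k)$ bound as a black box: a player simulates the CD algorithm along every leaf of a complete binary tree of depth $g(N)-1$, each root-to-node path encoding one possible silence/collision feedback sequence, so that one of the $2^{g(N)}-1$ simulations is consistent with the true execution; the hitting-game bound then forces $2^{g(N)}-1=\Omega(\lg N)$, i.e.\ $g(N)=\Omega(\lg\lg N)$. You instead enrich the game to ternary feedback and prove a fresh $\Omega(\lg\lg k)$ bound for it directly via a hard distribution over scales. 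Your route is more self-contained and arguably more illuminating (it explains \emph{why} binary search over $\lg k$ scales is optimal), whereas the paper's route is shorter because all the probabilistic work is already amortized into the existing hitting-game lemma.

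That said, there is a genuine gap: the entire difficulty of your route lives in the step you defer. Two points in particular need repair. First, the framing ``postpone the player's win until the live set of scales fits inside the single-window'' is not the right accounting. The ``single'' outcome has nonzero probability at \emph{every} round (constant probability conditioned on the true scale lying in the $O(1)$-width critical window of the current proposal), so you cannot prevent the win outright; you must bound the \emph{cumulative} win probability. The correct bookkeeping is that the per-round win probability is $O(1)\cdot\max_j\Pr[\text{scale}=j\mid\text{history}]$ plus exponentially small tails, the adversary's majority answer at most doubles this maximum posterior each round, and hence $t$ rounds yield win probability $O(2^t/\lg k)$ --- which is what actually forces $t=\Omega(\lg\lg k)$ for constant success probability. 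Without this, your argument as written would seem to give a bound that is either too weak or (if one only counts the chance of guessing the scale) spuriously as strong as $\Omega(\lg k)$. Second, conditioning on past silence answers restricts $T_j$ to subsets avoiding earlier proposals, and conditioning on collision answers ($|P\cap T_j|\ge 2$) biases $T_j$ toward earlier proposals; both shift the location and width of the critical window for subsequent proposals. You must show these distortions stay controlled (e.g.\ that the surviving universe stays of size $\Theta(k)$ and the window stays $O(1)$ scales wide) across all $3^t$ branches of the deterministic strategy. These are fixable, but they constitute essentially the whole proof, so as it stands the argument is a plausible plan rather than a proof.
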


\subsection{Discussions on why the reduction approach fail in certain cases}\label{subsec-appx-omit-lower-why-reduc-fail}

Recall the two remaining cases are: all nodes counting in single-hop radio networks and designated node counting in multi-hop radio networks, when collision detection is available and success with high probability is required. Careful readers might have already realized that the reduction approach can still be applied here. Unfortunately, however, this approach would no longer give the desired results.

More specifically, an $\Omega(\lg{N})$ lower bound for contention resolution with high success probability in single-hop scenario with collision detection is already given in \cite{newport14}, and we believe an $\Omega(\lg{N_\Delta})$ lower bound for contention resolution (of the designated node's neighbors) with high probability in multi-hop scenario with collision detection could also be derived. Moreover, with Lemma \ref{lemma-counting-and-contention-link}, in these two cases, we can still link the complexities of contention resolution and approximate neighbor counting together. However, the critical issue is that given an efficient algorithm for approximate neighbor counting, the overhead incurred by utilizing this algorithm to solve contention resolution is too high. In particular, the lower bounds we intend to prove (for approximate neighbor counting) are $\Omega(\lg{N})$ and $\Omega(\lg{N_\Delta})$, respectively; yet the overhead incurred by the reduction process have already reached $O(\lg{N})$ and $O(\lg{N_\Delta})$ (see Lemma \ref{lemma-counting-and-contention-link}), respectively.

\subsection{Omitted proofs}\label{subsec-appx-omit-lower-proof}

\begin{proof}[Proof of Lemma \ref{lemma-contention-resolution-multihop-no-cd}.]
We claim if there exists an algorithm $\mathcal{A}$ that solves the contention resolution problem in $g(N_{\Delta})$ time slots with probability $p$, then there exists an algorithm $\mathcal{B}$ that allows a player $\mathcal{P}$ to win the $N_{\Delta}$-hitting game in $g(N_{\Delta})$ rounds with probability at least $p$. By Lemma \ref{lemma-k-hitting}, this claim implies our lemma.

We now prove the above claim. Assume the target set chosen by the referee is $T\subseteq\{1,2,\cdots,N_{\Delta}\}$, further assume $T$ is of size $l$ and contains elements $t_1,t_2,\cdots,t_l$. Imagine a star network $G_*$ in which the designated node is $w$, and $w$ has $l$ neighbors whose identities are $t_1,t_2,\cdots,t_l$. (These identities are generated and used by the player, and nodes in the network do not have access to them.) Now, consider the following strategy. The player simulates running $\mathcal{A}$ in $G_*$. In each round, if any neighbors of $w$ decides to broadcast, then the player generates the proposal according to the identities of these nodes. If the proposal is correct, then we are done. If the proposal is incorrect and $w$ is listening, the player simulates $w$ hearing nothing. If no neighbor of $w$ broadcasts and $w$ is listening, then the player again simulates $w$ hearing nothing. On the other hand, if in a round $w$ broadcasts, then the player simulates listening neighbors hear the message. Otherwise, if in a round $w$ remains silent, then the player simulates listening neighbors hear nothing. Clearly, this simulation correctly reflects how $\mathcal{A}$ would proceed if $G_*$ is a real radio network.

Now, consider the proposals generated by the player. If in a round $\mathcal{A}$ solves the contention resolution problem, then the proposal $P$ in that round must be of size one, which implies $|P\cap T|=1$. That is, this proposal will also let the player win the $N_{\Delta}$-hitting game.

By now, we know if $\mathcal{P}$ can construct $G_*$ and knows an algorithm $\mathcal{A}$ that solves the contention resolution problem in $g(N_{\Delta})$ time slots with probability $p$, then $\mathcal{P}$ also has a strategy to win the $N_{\Delta}$-hitting game in $g(N_{\Delta})$ rounds with probability at least $p$.

However, a critical issue is that the player $\mathcal{P}$ cannot construct $G_*$ directly! In particular, he does not know the size of the target set; he also does not know $t_1,t_2,\cdots,t_l$. In fact, once he knows $t_1,t_2,\cdots,t_l$, he can simply propose $t_1$ and wins the game in a single round.

To overcome this difficulty, $\mathcal{P}$ will simulate running $\mathcal{A}$ on a different star network $G'_*$---one that he can construct directly. In $G'_*$, the designated node $w$ has $N_{\Delta}$ neighbors named $1$ to $N_{\Delta}$. In each round, if any neighbor of $w$ decides to broadcast, $\mathcal{P}$ generates the proposal according to the identities of these nodes. Moreover, for each round, the simulation rules are the same with the ones described above for $G_*$.

Now, the crucial observation is, until the proposal $\mathcal{P}$ submits contains exactly one element in $T$ (i.e., by which point $\mathcal{P}$ wins), for each node in $G_*$, the two execution histories it sees in $G_*$ and $G'_*$ are identical, assuming the node uses the same random bits in each of these two executions. (An interesting point worth noting is, the simulation of $G'_*$ might be \emph{inconsistent} with how $\mathcal{A}$ would proceed in $G'_*$ in real. For example, when a node not in $T$ broadcasts alone in $G'_*$, we still simulate $w$ as receiving nothing. Nonetheless, such inconsistency is fine, so long as we ensure for each node in $G_*$, its views in $G_*$ and $G'_*$ are identical.)

To prove the above crucial observation, we do an induction on the simulated time slots. Prior to the first simulated slot, the claim trivially holds. Assume by the end of slot $s\geq 0$ the claim still holds, we now consider slot $s+1$. First, focus on the designated node $w$. According to the induction hypothesis, $w$ has identical execution histories in $G_*$ and $G'_*$, till the end of slot $s$. Thus, in slot $s+1$, $w$ will perform same action (i.e., broadcast or listen). Particularly, if $w$ broadcasts, then it sends identical messages in $G_*$ and $G'_*$. On the other hand, if $w$ listens, since $s+1$ is a slot prior to $\mathcal{P}$ winning the game, we know either no neighbor of $w$ in $G_*$ broadcasts or at least two neighbors of $w$ in $G_*$ broadcast. In both cases $w$ hears nothing, in both $G_*$ and $G'_*$. Next, consider an arbitrary neighbor $u$ of $w$ that is in $G_*$. According the induction hypothesis, $u$ has identical execution histories in $G_*$ and $G'_*$, till the end of slot $s$. Thus, in slot $s+1$, $u$ will perform same action. Particularly, if $u$ broadcasts, then it sends identical messages in $G_*$ and $G'_*$. On the other hand, if $u$ listens, then the situation depends on whether the designated node $w$ broadcasts in slot $s+1$. In case $w$ remains silent, $u$ hears silence in both $G_*$ and $G'_*$. If $w$ broadcasts, then according to our previous analysis, $w$ will send identical messages in $G_*$ and $G'_*$. This implies $u$ will receive identical messages in $G_*$ and $G'_*$. The proof for the inductive step thus completes.

Since the execution histories in $G_*$ and $G'_*$ are identical for nodes in $G_*$, and since $\mathcal{A}$ solves the contention resolution problem in $g(N_{\Delta})$ time slots with probability $p$ in $G_*$, we know by the end of round $g(N_{\Delta})$, with probability at least $p$, the player must have generated a proposal $P=P_1\cup P_2$, where $P_1\subseteq T$ and $|P_1\cap T|=1$, and $P_2\subseteq\{1,2,\cdots,N_{\Delta}\}\backslash T$. This proposal will let $\mathcal{P}$ win the $N_{\Delta}$-hitting game.
\end{proof}

\begin{proof}[Proof sketch of Lemma \ref{lemma-contention-resolution-multihop-cd}.]
We claim if there exists an algorithm $\mathcal{A}$ that solves the contention resolution problem with collision detection in $g(N_{\Delta})$ time slots with probability $p$, then there exists an algorithm $\mathcal{B}$ that allows a player $\mathcal{P}$ to win the $N_{\Delta}$-hitting game in $2^{g(N_{\Delta})}-1$ rounds with probability at least $p$. Together with Lemma \ref{lemma-k-hitting}, and the fact that $2^{g(N_\Delta)}-1\in o(\lg{N_\Delta})$ if $g(N_\Delta)\in o(\lg\lg{N_\Delta})$, our lemma is immediate. The reminder of this proof is dedicated to proving this claim.

Assume the target set chosen by the referee is $T=\{t_1,t_2,\cdots,t_l\}\subseteq\{1,2,\cdots,N_{\Delta}\}$, where $1\leq l\leq N_{\Delta}$. Imagine the following two star networks: the first one is called $G_*$ in which the designated node $w$ has $l$ neighbors with identities $t_1,t_2,\cdots,t_l$; and the second one is called $G'_*$ in which the designated node $w$ has $N_{\Delta}$ neighbors with identities $1,2,\cdots,N_{\Delta}$.

To win the $N_{\Delta}$-hitting game, the player $\mathcal{P}$ first builds a complete binary tree $B_{g(N_{\Delta})}$ of depth $g(N_{\Delta})-1$, where the root is at level zero. For each non-root node in the tree, we attach a binary label to it according to the following rule: if the node is the left child of its parent then it has label zero, otherwise it has label one. For the ease of presentation, for a node $v$ in the tree, we use $d(v)$ to denote its depth, and $p(v)$ to denote the binary string generated by concatenating the labels of the nodes on the path from root to $v$. For the root node, this binary string is simply an empty string.

Now, for each node $v$ in $B_{g(N_{\Delta})}$, the player $\mathcal{P}$ simulates running algorithm $\mathcal{A}$ in $G'_*$ for $d(v)+1$ time slots, with same sequence of random bits (for each node in $G'_*$), according to the following rules. In the $i$\textsuperscript{th} slot where $1\leq i\leq d(v)$, if the $i$\textsuperscript{th} bit in $p(v)$ is zero, then $\mathcal{P}$ simulates $w$ in $G'_*$ hearing silence in case it is listening; and if the $i$\textsuperscript{th} bit in $p(v)$ is one, then $\mathcal{P}$ simulates $w$ in $G'_*$ hearing collision in case it is listening. One the other hand, for each listening neighbor in $G'_*$, $\mathcal{P}$ simulates it hears whatever $w$ broadcasts, or silence in case $w$ does not broadcast anything. In the $(d(v)+1)$\textsuperscript{st} time slot, $\mathcal{P}$ proposes the identities of neighbors of $w$ in $G'_*$ that decide to broadcast to the referee.

Recall that if we run $\mathcal{A}$ in a real radio network with topology identical to $G_*$, we can solve contention resolution in $g(N_{\Delta})$ time slots with probability $p$. Since the network is fixed, the only uncertainty comes from the random choices made by $\mathcal{A}$ during the execution. Assume the sequence of random bits used by $\mathcal{A}$ is $\pi$. If this execution indeed solves contention resolution within $g(N_{\Delta})$ time slots, then without loss of generality, assume it solves contention resolution at slot $g'(N_{\Delta})\leq g(N_{\Delta})$. This implies, for each time slot prior to $g'(N_{\Delta})$, the designated node $w$ hears either silence or noise, if it is listening. As a result, this further implies, there must exist a node $v$ in tree $B_{g(N_{\Delta})}$, for slots up to (and including) $g'(N_{\Delta})-1$, for nodes in $G_*$, the simulation of $\mathcal{A}$ (in $G'_*$) according to $p(v)$ is consistent with running $\mathcal{A}$ in $G_*$ in real (when using $\pi$ as the source of randomness). (A more rigorous proof for this claim can be obtained via induction on simulated time slots, which is very similar to the one we have done in the proof for Lemma \ref{lemma-contention-resolution-multihop-no-cd}.) Hence, when processing $v$ in tree $B_{g(N_{\Delta})}$, the proposal generated by $\mathcal{P}$ will win the $N_{\Delta}$-hitting game.

(In essence, to correctly simulate running $\mathcal{A}$ in $G_*$ while the actual topology is $G'_*$, the hard scenario is when multiple neighbors of $w$ broadcast yet the resulting proposal does not win the game, since the simulator cannot distinguish between the case where none of the broadcasters were in $T$ (in which case, the correct thing is to simulate $w$ hearing silence) and the case where multiple broadcasters were in $T$ (in which case, the correct thing is to simulate $w$ hearing noise). The binary tree $B_{g(N_{\Delta})}$ enables the simulator to essentially guess at the sequence of $w$'s collision detection information. One of these guesses must be right for the particular definition of $T$ and random bits used in the execution.)

Since a complete binary tree of depth $g(N_{\Delta})-1$ contains $2^{g(N_{\Delta})}-1$ nodes, we know $\mathcal{P}$ can win the $N_{\Delta}$-hitting game with probability at least $p$ in $2^{g(N_{\Delta})}-1$ rounds. This completes the proof of the lemma.
\end{proof}

\begin{proof}[Proof sketch of Lemma \ref{lemma-counting-and-contention-link}.]
We first construct $\mathcal{B}$ in the single-hop scenario.

In this situation, $\mathcal{B}$ contains multiple steps, each of which has two time slots. In odd slots, all nodes simply run $\mathcal{A}$. In even slots, if a node has already obtained the estimate $\hat{n}$ of $n$, it will broadcast with probability $1/\hat{n}$; otherwise, it will remain silent. Since $\mathcal{A}$ solves approximate neighbor counting in $h(N)$ time with probability $p$, we know starting from step $h(N)+1$, with probability $p$, in each even slot, each node will broadcast with probability $1/\hat{n}$, where $n\leq\hat{n}\leq\tilde{c}\cdot n$.

Assume starting from step $h(N)+1$, in each even slot, each node indeed broadcasts with probability $1/\hat{n}$. Thus, in the second slot in each such step, the probability that some node will broadcast alone is $n\cdot(1/\hat{n})\cdot(1-1/\hat{n})^{n-1}\geq (1/\tilde{c})\cdot 4^{-n/\hat{n}}\geq (1/\tilde{c})\cdot 4^{-1}=1/(4\tilde{c})$. That is, in each such slot, the probability that some node will broadcast alone (thus solving the contention resolution problem) is a constant. Since each such slot is independent, our lemma follows in the single-hop scenario.

Next, we turn our attention to the multi-hop scenario.

In this situation, $\mathcal{B}$ contains two parts. In the first part, there are multiple steps, each of which has two slots. In odd slots, all nodes simply run $\mathcal{A}$; and in even slots, all non-designated nodes will listen while $w$ does nothing. According to the assumption, after $h(N_\Delta)$ steps, with probability $p$, the designated node $w$ will obtain a constant factor estimate $\hat{n}_w$ of $n_w$ such that $n_w\leq\hat{n}_w\leq\tilde{c}\cdot n_w$. Once $w$ obtains $\hat{n}_w$, in the next even slot, it will broadcast this estimate. Clearly, $\hat{n}_w$ will be successfully received by all neighbors of $w$. This marks the end of part one of $\mathcal{B}$. The second part of $\mathcal{B}$ is simple: in each slot, each non-designated node which knows $\hat{n}_w$ broadcasts with probability $1/\hat{n}_u$, and $w$ simply listens.

By a similar analysis as in the single-hop scenario, we know in each slot in part two of $\mathcal{B}$, the probability that some neighbor of $w$ broadcasts alone (thus solving the contention resolution problem) is at least $1/(4\tilde{c})$. Since each slot is independent, our lemma follows in the multi-hop scenario as well.
\end{proof}

\begin{proof}[Proof sketch of Theorem \ref{thm-counting-lower-bound-part1}.]
As an example, we consider the single-hop without collision detection scenario. The proofs for other scenarios are very similar.

First, consider the success with constant probability case.

For the sake of contradiction, assume there exists an algorithm $\mathcal{A}$ that solves approximate neighbor counting in $h(N)\in o(\lg{N})$ time in this scenario, with constant probability $p$. Due to Lemma \ref{lemma-counting-and-contention-link}, this implies we can devise an algorithm $\mathcal{B}$ that solves contention resolution in $2(h(N)+k)$ time, with probability at least $(1-e^{-k/(4\tilde{c})})\cdot p$. So long as $k\geq 1$ is some constant, we know $(1-e^{-k/(4\tilde{c})})\cdot p$ will be a constant, and $2(h(N)+k)\in o(\lg{N})$. However, this contradicts the lower bounds shown in Lemma \ref{lemma-contention-resolution-singlehop-no-cd}.

We next consider the success with high probability case.

For the sake of contradiction, assume there exists an algorithm $\mathcal{A}$ that solves approximate neighbor counting in $h(N)\in o(\lg^2{N})$ time in this scenario, with high probability in $N$. I.e., with a probability $p\geq 1-1/N^{\epsilon}$ for some constant $\epsilon\geq 1$. Due to Lemma \ref{lemma-counting-and-contention-link}, this implies we can devise an algorithm $\mathcal{B}$ that solves contention resolution in $2(h(N)+k)$ time, with probability at least $(1-e^{-k/(4\tilde{c})})\cdot p$. By setting $k=4\tilde{c}\cdot\ln{N^\epsilon}$, we know $\mathcal{B}$ solves contention resolution in $o(\lg^2{N})$ time, with probability at least $1-2/N^\epsilon$. Due to proof of Theorem 4 shown in \cite{newport14}, by setting $N'=N/2$, we know $\mathcal{B}$ can be used to devise an algorithm that wins $N'$-hitting game in $o(\lg^2{N'})$ time, with probability at least $1-1/(N')^\epsilon$. I.e., with high probability in $N'$. However, this contradicts the lower bounds shown in Lemma \ref{lemma-k-hitting}.
\end{proof}

\begin{proof}[Proof of Lemma \ref{lemma-comb}.]
Order the sets in $\mathcal{H}$ in some arbitrary manner. For every $i\in [k]$, we associate a binary string $s_i$ of length $|\mathcal{H}|$ in the following manner: the $j$\textsuperscript{th} bit of $s_i$ is one iff $i$ is in the $j$\textsuperscript{th} set of $\mathcal{H}$. Since $|\mathcal{H}|<\lg{(k/c)}$ we know there are less than $k/c$ distinct binary strings of length $|\mathcal{H}|$, and since we use these binary string to label $|[k]|=k$ distinct values, we know there must exist some binary string that is associated with at least $c$ distinct values in $[k]$. Let $R$ be a set containing $c$ of these values, we know $R\in\mathcal{R}$. Since the values in $R$ are associated with the same binary string, by the definition of this binary string, we know for each $H\in\mathcal{H}$, either $R\subseteq H$ (in which case the corresponding bit in the binary string is one) or $R\cap H=\emptyset$ (in which case the corresponding bit in the binary string is zero).
\end{proof}

\begin{proof}[Proof of Theorem \ref{thm-counting-lower-bound-part2}.]
We first focus on the single-hop scenario.

Let $\mathcal{A}$ be an arbitrary (and potentially randomized) distributed algorithm for approximate neighbor counting. Assume $G_\alpha$ is a single-hop radio network in which $k\leq N$ nodes are activated and each node has a unique identity in $[k]$. (These identities are for the ease of presentation, the nodes themselves cannot access these identities.) Simulate each of the first $\lg{(k/c)}-1$ time slots according to the following rules: if a node chooses to broadcast, then we simulate it broadcasting the content specified by $\mathcal{A}$; if a node chooses to listen, then we simulate it hearing silence. We call this execution segment of $\mathcal{A}$ as $\alpha$. Let $H_\alpha^i$ denote the identities of the nodes that broadcast in slot $i$, define set $\mathcal{H}_\alpha=\{H_\alpha^1,H_\alpha^2,\cdots,H_\alpha^{\lg{(k/c)}-1}\}$.

According to Lemma \ref{lemma-comb}, there exists a size $c$ subset $R$ of $[k]$ such that for each $H_\alpha^i\in\mathcal{H}_\alpha$, either $R\subseteq H_\alpha^i$ or $R\cap H_\alpha^i=\emptyset$. Assume $G_\beta$ is a single-hop radio network in which the nodes with identities in $R$ are activated. Run $\mathcal{A}$ in $G_\beta$ for $\lg{(k/c)}-1$ time slots, call the resulting execution segment $\beta$. Notice, if $\mathcal{A}$ is randomized, then for each node, assume the random bits generated for that node are identical in $\alpha$ and $\beta$.

The third single-hop radio network $G_\gamma$ contains two arbitrary nodes with identities in $R$. We run $\mathcal{A}$ in $G_\gamma$ for $\lg{(k/c)}-1$ time slots and call the resulting execution segment $\gamma$. Again, if $\mathcal{A}$ is randomized, then for each node, assume the random bits generated for that node are identical in $\alpha$, $\beta$, and $\gamma$.

Now, the critical claim is, for each node in $G_\beta$, its views in $\alpha$ and $\beta$ are identical. I.e., for each node with an identity in $R$, execution segments $\alpha$ and $\beta$ are indistinguishable. We prove this by a slot to slot induction.

In the first time slot, if a node $u_j$ with identity $j$ chooses to broadcast in $\beta$, then clearly it will also choose to broadcast in $\alpha$. Moreover, $u_j$ will broadcast same content in $\alpha$ and $\beta$. On the other hand, in the first time slot, if $u_j$ chooses to listen in $\beta$, then $u_j$ will also choose to listen in $\alpha$. This means $R\nsubseteq H_\alpha^1$, implying $R\cap H_\alpha^1=\emptyset$. That is, all nodes with identities in $R$ will choose to listen in the first time slot in $\alpha$, which in turn means all nodes in $G_\beta$ will choose to listen in the first time slot. Thus, $u_j$ will hear silence in the first time slot in $\beta$; and so does $u_j$ in $\alpha$, according to our simulation rule. By now, we have proved the induction basis: for each node in $G_\beta$, in the first time slot, its views in $\alpha$ and $\beta$ are identical.

Assume during the first $i-1$ time slots, for each node in $G_\beta$, its views in $\alpha$ and $\beta$ are identical. We now consider time slot $i$. If a node $u_j$ with identity $j$ chooses to broadcast in $\beta$, then according to the induction hypothesis, it will also choose to broadcast in $\alpha$, with the same content. On the other hand, in time slot $i$, if $u_j$ chooses to listen in $\beta$, then according to the induction hypothesis, $u_j$ will also choose to listen in $\alpha$. This means $R\nsubseteq H_\alpha^i$, implying $R\cap H_\alpha^i=\emptyset$. That is, all nodes with identities in $R$ will choose to listen in time slot $i$ in $\alpha$, which in turn means all nodes in $G_\beta$ will choose to listen in time slot $i$ (according to the induction hypothesis). Thus, $u_j$ will hear silence in time slot $i$ in $\beta$; and so does $u_j$ in $\alpha$, according to our simulation rule. This completes our proof for the claim.

Similarly, we can also prove: for each node in $G_\gamma$, its views in $\alpha$ and $\gamma$ are identical.

Now, imagine an adversary generating a single-hop radio network in the following way: it arbitrarily picks $k$ of $N$ nodes and arbitrarily gives each of these $k$ nodes a unique identity in $[k]$, it then samples a size $c$ subset of $[k]$ uniformly at random and picks the nodes with corresponding identities. Lastly, the adversary flips a fair coin to decide whether to activate all these $c$ nodes, or just two of them (chosen arbitrarily).

Consider the scenario that the adversary chooses $R$ as the size $c$ subset of $[k]$, which happens with probability greater than $1/k^c$ since there are less than $k^c$ size $c$ subsets of $[k]$. In such case, according to our above analysis, by the end of slot $\lg{(k/c)}-1$, nodes cannot distinguish whether there are two nodes in the network or $c$ nodes in the network. If $c> 2\tilde{c}$, then we know by the end of slot $\lg{(k/c)}-1$, there is a $1/2$ chance that the approximation given by $\mathcal{A}$ is incorrect. (Recall we define $\tilde{c}$ in Section \ref{sec-model}.)

To sum up, if $\mathcal{A}$ is a (potentially randomized) distributed algorithm for approximate neighbor counting in single-hop radio network with collision detection, and if $\mathcal{A}$ guarantees to output an estimate by the end of time slot $\lg{(k/c)}-1$, then this estimate is incorrect with probability at least $(1/k^c)\cdot(1/2)$, so long as $c>2\tilde{c}$. Let $(1/k^c)\cdot(1/2)=1/N^\epsilon$ where $\epsilon\geq 1$ is some constant, we know $\lg{(k/c)}-1=(1/c)\cdot\lg{(N^\epsilon/2)}-\lg{c}$. That is, $\lg{(k/c)}-1$ will be $\Theta(\lg{N})$ if $c>2\tilde{c}$ is some constant. By now, we have shown if $\mathcal{A}$ guarantees to output an estimate by the end of time slot $o(\lg{N})$, then this estimate is incorrect with probability at least $1/N^\epsilon$. This proves the first part of the theorem.

\bigskip

Next, we turn our attention to the multi-hop scenario, which generally follows the same high-level strategy as in the single-hop case, but is more involved due to topology changes.

Let $\mathcal{A}$ be an arbitrary (and potentially randomized) distributed algorithm for approximate neighbor counting (in the multi-hop scenario). Assume $G_\alpha$ is a star network in which the designated node $w$ has $k\leq N_\Delta$ neighbors, and each neighbor has a unique identity in $[k]$. We now define $k$ simulations, each focusing on one neighbor $u$ (of the designated node $w$) along with $w$, and runs for $\lg{(k/c)}-1$ time slots. Each simulation follows the following rules. In each time slot, for the designated node $w$, if it chooses to broadcast, then we simulate it broadcasting the content specified by $\mathcal{A}$. If $w$ chooses to listen, then the result depends on whether $u$ broadcasts: $w$ hears collision if $u$ broadcasts, otherwise $w$ hears silence. On the other hand, for node $u$, if it chooses to broadcast, then we simulate it broadcasting the content specified by $\mathcal{A}$. If $u$ chooses to listen, then the result depends on the behavior of $w$: if $w$ broadcasts in this slot, then we simulate $u$ hearing the message sent by $w$; if $w$ does not broadcast in this slot, then we simulate $u$ hearing silence. For each neighbor $u$ of $w$, we call this execution segment of $\mathcal{A}$ as $\alpha_u$. Let $H_\alpha^i$ denote the identities of the neighbors of $w$ that broadcast in slot $i$ (among these $k$ simulations), define set $\mathcal{H}_\alpha=\{H_\alpha^1,H_\alpha^2,\cdots,H_\alpha^{\lg{(k/c)}-1}\}$.

According to Lemma \ref{lemma-comb}, there exists a size $c$ subset $R$ of $[k]$ such that for each $H_\alpha^i\in\mathcal{H}_\alpha$, either $R\subseteq H_\alpha^i$ or $R\cap H_\alpha^i=\emptyset$. Assume $G_\beta$ is a star network in which the neighbors of the designated node are the nodes with identities in $R$. Run $\mathcal{A}$ in $G_\beta$ for $\lg{(k/c)}-1$ time slots, call the resulting execution segment $\beta$. Notice, if $\mathcal{A}$ is randomized, then for each node in $G_\beta$, assume the random bits generated for that node are identical in $\alpha$ and $\beta$. (I.e., for each neighbor $u$ of $w$ in $G_\beta$, same random bits are used in $\alpha_u$ and $\beta$; and for designated node $w$, same random bits are used in all $\alpha_u$ and $\beta$.)

The third star network $G_\gamma$ contains the designated node and two arbitrary nodes with identities in $R$. We run $\mathcal{A}$ in $G_\gamma$ for $\lg{(k/c)}-1$ time slots and call the resulting execution segment $\gamma$. Again, if $\mathcal{A}$ is randomized, then for each node in $G_\gamma$, assume the random bits generated for that node are identical in $\alpha$, $\beta$, and $\gamma$.

We claim, for each neighbor $u$ of the designated node $w$ in $G_\beta$, its views in $\alpha_u$ and $\beta$ are identical. Moreover, for the designated node $w$, its views in all such $\alpha_u$ and $\beta$ are identical. We prove this by a slot to slot induction.

We begin with the first time slot. For the designated node $w$, if it chooses to broadcast in $\beta$, then clearly it will also choose to broadcast in all $\alpha_u$, with the same content. If $w$ chooses to listen in $\beta$, since each node with identity in $R$ will take the same action in the first time slot in (corresponding) $\alpha$ and $\beta$, and since either $R\subseteq H_\alpha^1$ or $R\cap H_\alpha^1=\emptyset$, we know $w$ will hear either silence or collision in $\beta$, and $w$ will hear same thing in all $\alpha_u$ and $\beta$ by our simulation rules. On the other hand, for a neighbor $u$ of the designated node $w$ in $G_\beta$, if $u$ chooses to broadcast in $\beta$, then clearly it will also choose to broadcast in $\alpha_u$, with the same content. If $u$ chooses to listen in $\beta$, then $u$ will also choose to listen in $\alpha_u$. In such case, what $u$ hears in $\alpha_u$ and $\beta$ depend on the behavior of $w$ in $\alpha_u$ and $\beta$. If $w$ broadcasts, then $u$ will hear this message; and if $w$ remains silent, then $u$ will hear silence. Since we have already shown the behavior of $w$ is identical in $\alpha_u$ and $\beta$ in this case, we know $u$'s view will also be identical in $\alpha_u$ and $\beta$ in this case. By now, we have proved the induction basis: for each node in $G_\beta$, in the first time slot, its views in $\alpha$ and $\beta$ are identical.

Assume during the first $i-1$ time slots, for each node in $G_\beta$, its views in $\alpha$ and $\beta$ are identical. We now consider time slot $i$. For the designated node $w$, if it chooses to broadcast in $\beta$, then by the induction hypothesis it will also choose to broadcast in all $\alpha_u$, with the same content. If $w$ chooses to listen in $\beta$, since (by induction hypothesis) each node with identity in $R$ will take same action in time slot $i$ in (corresponding) $\alpha$ and $\beta$, and since either $R\subseteq H_\alpha^i$ or $R\cap H_\alpha^i=\emptyset$, we know $w$ will hear either silence or collision in $\beta$, and $w$ will hear same thing in all $\alpha_u$ and $\beta$ by our simulation rules. On the other hand, for a neighbor $u$ of the designated node $w$ in $G_\beta$, if $u$ chooses to broadcast in $\beta$, then by the induction hypothesis it will also choose to broadcast in $\alpha_u$, with the same content. If $u$ chooses to listen in $\beta$, then the by induction hypothesis $u$ will also choose to listen in $\alpha_u$. In such case, what $u$ hears in $\alpha_u$ and $\beta$ depends on the behavior of $w$ in $\alpha_u$ and $\beta$. If $w$ broadcasts, then $u$ will hear this message; and if $w$ remains silent, then $u$ will hear silence. Since we have shown the behavior of $w$ is identical in $\alpha_u$ and $\beta$ in this case, we know $u$'s view will also be identical in $\alpha_u$ and $\beta$ in this case. This completes our proof for the claim.

Similarly, we can also prove: for each neighbor $u$ of $w$ in $G_\gamma$, its views in $\alpha_u$ and $\gamma$ are identical. Moreover, for the designated node $w$, its views in all such $\alpha_u$ and $\gamma$ are identical.

The remaining proof is almost identical to the single-hop case. Imagine an adversary generates a star network in the following way: it arbitrarily picks $k$ of $N_\Delta$ neighbors and arbitrarily gives each of these $k$ nodes a unique identity in $[k]$, it then samples a size $c$ subset of $[k]$ uniformly at random and picks the nodes with corresponding identities. Lastly, the adversary flips a fair coin to decide whether to activate all these $c$ neighbors, or just two of them (chosen arbitrarily). (The adversary always activates the designated node $w$.)

Now, consider the scenario that the adversary chooses $R$ as the size $c$ subset of $[k]$, which happens with probability at least $1/k^c$ since there are at most $k^c$ size $c$ subsets of $[k]$. In such case, according to our above analysis, by the end of slot $\lg{(k/c)}-1$, the designated node $w$ cannot distinguish whether it has two or $c$ neighbors in the network. If $c$ is a sufficiently large constant, then we know by the end of slot $\lg{(k/c)}-1$, there is a $1/2$ chance that the approximation given by $\mathcal{A}$ is incorrect.

To sum up, if $\mathcal{A}$ is a (potentially randomized) distributed algorithm for approximate neighbor counting in multi-hop radio network with collision detection, and if $\mathcal{A}$ guarantees the designated node outputs an estimate by the end of time slot $\lg{(k/c)}-1$, then this estimate is incorrect with probability at least $(1/k^c)\cdot(1/2)$. Let $(1/k^c)\cdot(1/2)=1/N_\Delta^\epsilon$ where $\epsilon\geq 1$ is some constant, we know $\lg{(k/c)}-1$ will be $\Theta(\log{N_\Delta})$ if $c$ is some constant. I.e., if the designated node always outputs an estimate by the end of time slot $o(\log{N_\Delta})$, then this estimate is incorrect with probability at least $1/N_\Delta^\epsilon$. This completes the proof.
\end{proof}

\section{Omitted Description and Analysis of Upper Bound Results}\label{sec-appx-omit-alg}

\subsection{Omitted proofs for \CountSHnoCDConst}\label{subsec-appx-omit-CountSHnoCDConst}

\begin{lemma}\label{lemma-CountSHnoCDConst-1}
During the execution of \CountSHnoCDConst, if $n\geq 2$, then by the end of iteration $\lg{n}-3$, with at least constant probability, all nodes are still active.
\end{lemma}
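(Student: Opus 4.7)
The plan is to show that, with constant probability, in none of iterations $1, 2, \ldots, \lg n - 3$ does exactly one node broadcast a \textsf{beacon} in the first slot. Inspecting the algorithm, termination of any node requires either (i) a listener hearing a clear \textsf{beacon} in the first slot of some iteration, or (ii) the sole broadcaster in such a slot subsequently hearing a \textsf{stop} message in the second slot. Both presuppose exactly one node broadcasting in the first slot of that iteration. Hence, if no singleton-broadcast event occurs in iterations $1$ through $\lg n - 3$, every node remains active at the end of iteration $\lg n - 3$, which is exactly the statement we want.

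First, I would write down the exact probability of a singleton broadcast in iteration $i$:
\[
P_i \;=\; n\cdot\frac{1}{2^i}\Big(1-\frac{1}{2^i}\Big)^{n-1} \;\le\; \frac{n}{2^i}\,\exp\!\Big(-\frac{n-1}{2^i}\Big).
\]
Setting $x_i := n/2^i$, the constraint $i \le \lg n - 3$ forces $x_i \ge 8$. Using $(n-1)/n \ge 1/2$ for $n\ge 2$, the summand is bounded by $x_i\, e^{-x_i/2}$, and reindexing with $k = \lg n - 3 - i$ gives $x_i = 8\cdot 2^k$. Hence
\[
\sum_{i=1}^{\lg n - 3} P_i \;\le\; \sum_{k\ge 0} 8\cdot 2^k\, e^{-4\cdot 2^k},
\]
which converges to an absolute constant $c$ strictly less than $1$ because the super-exponential decay of $e^{-4\cdot 2^k}$ overwhelms the factor $2^k$ already at $k=0$.

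Finally, a union bound over the (at most) $\lg n - 3$ iterations shows that the probability that some iteration exhibits a singleton broadcast is at most $c<1$, so with probability at least $1-c$ no such event occurs and every node is still active at the end of iteration $\lg n - 3$. The main obstacle, though technically mild, is verifying that the bound on $\sum P_i$ is truly an absolute constant independent of $n$; the substitution above makes this transparent. Corner cases with very small $n$ (so that the iteration range is short or empty) are handled trivially by the same estimate on the single dominant term.
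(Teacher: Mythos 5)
Your proposal is correct and follows essentially the same route as the paper: reduce the claim to the event that no iteration among the first $\lg n - 3$ has a singleton broadcaster, bound each per-iteration probability by $x_i e^{-x_i/2}$ with $x_i = n/2^i \ge 8$, and sum. The only cosmetic difference is that you sum directly over the powers of two $x_i = 8\cdot 2^k$, whereas the paper relaxes to $\sum_{j\ge 8} j\,(e^{-0.5})^j$ over all integers; both yield an absolute constant below $1$.
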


\begin{proof}
Notice, for any iteration, if in the first time slot either no node broadcasts or at least two nodes broadcast, then all remaining active nodes will not terminate by the end of this iteration. Thus, to prove the lemma, we need to bound the sum of the probabilities that a single node broadcasts alone in the first time slot, for iterations one to $\lg{n}-3$.

Consider an arbitrary iteration $i\geq 1$, define $\mathcal{E}_i$ to be the event that one node broadcasts alone during the first time slot of iteration $i$. Further define $\mathcal{E}'_i=\mathcal{E}_i~|~\bigwedge_{j=1}^{i-1}\overline{\mathcal{E}_j}$. (I.e., $\mathcal{E}'_i$ is $\mathcal{E}_i$ condition on all nodes are still active at the beginning of iteration $i$.) We know $\mathbb{P}(\mathcal{E}'_i)=n\cdot(1/2^i)\cdot(1-1/2^i)^{n-1}\leq n\cdot(1/2^i)\cdot(1-1/2^i)^{n/2}\leq (n/2^i)\cdot (e^{-0.5})^{n/2^i}$. As a result, $\mathbb{P}(\bigvee_{i=1}^{\lg{n}-3}\mathcal{E}'_i)\leq\sum_{i=1}^{\lg{n}-3}\mathbb{P}(\mathcal{E}'_i)\leq\sum_{j=8}^{n}{j\cdot(e^{-0.5})^j}$. Define $S=\sum_{j=8}^{n}{j\cdot(e^{-0.5})^j}$, we know:

\vspace{-3ex}
\begin{align*}
\left(1-e^{-0.5}\right)S & = 8\cdot\left(e^{-0.5}\right)^8 + \sum_{i=9}^{n}\left(e^{-0.5}\right)^i - n\cdot\left(e^{-0.5}\right)^{n+1} \leq 8\cdot\left(e^{-0.5}\right)^8 + \sum_{i=9}^{n}\left(e^{-0.5}\right)^i \\
& = 8\cdot e^{-4} + \frac{\left(e^{-0.5}\right)^9\cdot\left(1-\left(e^{-0.5}\right)^{n-8}\right)}{1-e^{-0.5}} \leq 8\cdot e^{-4} + \frac{e^{-4.5}}{1-e^{-0.5}}
\end{align*}

Hence $S<0.45$, implying $\mathbb{P}(\bigvee_{i=1}^{\lg{n}-1}\mathcal{E}'_i)<0.45$, and this proves the lemma.
\end{proof}

\begin{lemma}\label{lemma-CountSHnoCDConst-3}
During the execution of \CountSHnoCDConst, for $n\geq 12$, if at the beginning of an iteration $\lg{n}-2\leq i\leq\lg{n}$ all nodes are active, then by the end of this iteration, with at least constant probability, all nodes are either active or have terminated.
\end{lemma}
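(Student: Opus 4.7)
The plan is to identify exactly which pattern of broadcasts in iteration $i$ can cause some nodes to terminate while others remain active, and then show this bad pattern occurs with probability bounded away from $1$.

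First, I would observe that the only mechanism for a slot-1 listener to terminate in iteration $i$ is to hear a \textsf{beacon} in slot 1, and the only mechanism for a slot-1 broadcaster to terminate is to hear a \textsf{stop} in slot 2. Without collision detection, a listener hears a \textsf{beacon} iff exactly one node broadcasts in slot 1, and the unique slot-1 broadcaster (when it exists) hears a \textsf{stop} iff exactly one of the $n-1$ slot-1 listeners broadcasts \textsf{stop} in slot 2. A straightforward case analysis then yields three possibilities: (i) if zero or at least two nodes broadcast in slot 1, no node terminates; (ii) if exactly one node broadcasts in slot 1 and exactly one listener broadcasts \textsf{stop} in slot 2, all $n$ nodes terminate; and (iii) the only ``split'' outcome---some terminate while at least one remains active---occurs when exactly one node broadcasts in slot 1 but the number of \textsf{stop} messages in slot 2 is either zero or at least two.

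Next, I would bound the probability of this split outcome. Let $X$ be the event that exactly one node broadcasts in slot 1, and $Y$ the event that exactly one of the $n-1$ slot-1 listeners broadcasts \textsf{stop} in slot 2. Since the slot-2 coins are independent of the slot-1 coins, the bad probability factors as $\Pr[X]\cdot\Pr[\overline{Y}\mid X]$. Using $p_i = 1/2^i$ with $n/2^i \in \{1, 2, 4\}$ and the standard estimates $(1-x)^k \leq e^{-xk}$ together with a matching lower bound, I would show that $\Pr[X] = n p_i (1-p_i)^{n-1}$ is bounded above by a constant $c_1 < 1$ and that $\Pr[Y\mid X] = (n-1)\cdot\tfrac{1}{2^i-1}\cdot\bigl(1-\tfrac{1}{2^i-1}\bigr)^{n-2}$ is bounded below by a constant $c_2 > 0$, for each of the three values $i \in \{\lg n - 2, \lg n - 1, \lg n\}$. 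The hypothesis $n \geq 12$, combined with the paper's convention that $n$ is a power of two, forces $n \geq 16$ and hence $2^i - 1 \geq 2$ for the smallest relevant $i$, so the stop-broadcast probability is at most $1/2$ and the binomial estimates remain well-behaved.

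Combining, $\Pr[X]\cdot\Pr[\overline{Y}\mid X] \leq c_1(1 - c_2) < 1$, so with probability at least $1 - c_1(1 - c_2)$ the split outcome does not occur, and by the end of iteration $i$ either all nodes remain active or all $n$ have terminated. I expect the main obstacle to be the bookkeeping across the three distinct values of $i$: the ratio $n/2^i$ jumps by a factor of two each time, so each value has to be handled separately, and one must verify explicitly that $\Pr[X]\cdot\Pr[\overline{Y}\mid X]$ stays bounded away from $1$ uniformly in $n$ in all three cases, rather than degrading at either boundary.
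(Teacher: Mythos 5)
Your proposal is correct and follows essentially the same route as the paper: both identify that a ``split'' outcome can only occur when exactly one node broadcasts in slot 1 and that node then fails to hear a \textsf{stop}, and both rule this out with constant probability by lower-bounding $(n-1)\cdot\frac{1}{2^i-1}\cdot\bigl(1-\frac{1}{2^i-1}\bigr)^{n-2}$ by a positive constant for $i\in\{\lg{n}-2,\lg{n}-1,\lg{n}\}$. Your extra upper bound on $\Pr[X]$ is harmless but unnecessary, since $\Pr[X]\leq 1$ already suffices once $\Pr[Y\mid X]\geq c_2$.
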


\begin{proof}
Consider an iteration $\lg{n}-2\leq i\leq\lg{n}$ in which all nodes are active at the beginning of it. If by the end of it all nodes are still active, then we are done. So, assume a node $u$ decides to terminate by the end of this iteration. We need to show, in such case, with at least some constant probability, all other nodes must have also decided to terminate by the end of iteration $i$ as well.

Observe that $u$ will decide to terminate if one of the following two events happen: (a) $u$ hears a \textsf{beacon} message in the first time slot; or (b) $u$ hears a \textsf{stop} message in the second time slot. (Notice, according to our protocol, (a) and (b) cannot both happen.)

If it is the case that event (a) happens to $u$, then there must exist a node $v\neq u$ that broadcasts alone in the first time slot. Therefore, we know every node except $v$ will terminate after iteration $i$. As for node $v$, it will listen in the second time slot. Moreover, it will hear a \textsf{stop} message with probability at least $(n-1)\cdot(1/(2^i-1))\cdot(1-1/(2^i-1))^{n-2}\geq (n-1)\cdot(1/n)\cdot(1-1/(n/4-1))^{n-2}>1/1000$.

On the other hand, if it is the case that event (b) happens to $u$, then it must be the case that $u$ broadcasts alone in the first time slot. Thus, we know every node except $u$ will terminate after iteration $i$. Moreover, since $u$ hears a \textsf{stop} message in the second time slot, it will also terminate after iteration $i$.
\end{proof}

\begin{lemma}\label{lemma-CountSHnoCDConst-2}
During the execution of \CountSHnoCDConst, if $n\geq 3$ and all nodes are active at the beginning of iteration $\lg{n}$, then by the end of iteration $\lg{n}$, with at least constant probability, all nodes will terminate.
\end{lemma}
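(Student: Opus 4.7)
The plan is to exhibit a single favorable event in iteration $\lg n$ whose occurrence forces every node to terminate, and to show this event has constant probability. Concretely, let $\mathcal{F}_1$ be the event that exactly one node $u$ broadcasts a \textsf{beacon} in the first slot of iteration $\lg n$ (so the other $n-1$ nodes listen and hear it), and let $\mathcal{F}_2$ be the event that exactly one of those $n-1$ listeners broadcasts a \textsf{stop} in the second slot. If both occur, every listener terminates at the end of this iteration with estimate $2^{\lg n+2}=4n$ (having heard a beacon), and $u$---which listens in slot two---receives the unique \textsf{stop} and so also terminates with estimate $4n$. Thus it suffices to lower bound $\mathbb{P}(\mathcal{F}_1\cap\mathcal{F}_2)$ by a constant.

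For $\mathcal{F}_1$, each of the $n$ nodes independently broadcasts with probability $1/2^{\lg n}=1/n$, so $\mathbb{P}(\mathcal{F}_1)=n\cdot(1/n)\cdot(1-1/n)^{n-1}=(1-1/n)^{n-1}$. This expression is decreasing in $n$ and tends to $1/e$, and for $n\geq 3$ is at least $1/e$. Conditional on $\mathcal{F}_1$, each of the $n-1$ listeners independently broadcasts \textsf{stop} with probability $1/(2^{\lg n}-1)=1/(n-1)$, so $\mathbb{P}(\mathcal{F}_2\mid\mathcal{F}_1)=(n-1)\cdot(1/(n-1))\cdot(1-1/(n-1))^{n-2}=(1-1/(n-1))^{n-2}$, which is $1/2$ at $n=3$ and again tends to $1/e$. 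Since the random choices in the two slots are independent given the outcome of slot one, multiplying yields $\mathbb{P}(\mathcal{F}_1\cap\mathcal{F}_2)\geq 1/e^2$, a constant independent of $n$.

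The argument amounts to two elementary Poissonization-style calculations, so no serious obstacle is anticipated; the only care needed is to verify that the two lower bounds hold uniformly down to the boundary case $n=3$, which reduces to the small algebraic inequalities above. It is worth noting that the favorable event identified here is in fact the \emph{only} way in which termination can happen in a single iteration under this algorithm: if zero or at least two nodes broadcast in slot one, then no listener hears a \textsf{beacon} (no collision detection), hence no node broadcasts \textsf{stop} in slot two, hence nobody terminates. This observation makes bounding the probability of $\mathcal{F}_1\cap\mathcal{F}_2$ both necessary and sufficient.
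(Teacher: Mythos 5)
Your proof is correct and takes essentially the same approach as the paper's: lower-bound by a constant the probability that exactly one node broadcasts a \textsf{beacon} in the first slot and that exactly one of the listeners broadcasts a \textsf{stop} in the second slot, then multiply. The paper uses the cruder bounds $(1-1/n)^{n}\geq 1/4$ and $(1-1/(n-1))^{n-1}\geq 1/5$ to get $1/20$ where you get $1/e^{2}$, but the decomposition and reasoning are identical.
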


\begin{proof}
Since all nodes are active at the beginning of iteration $\lg{n}$, we know in the first time slot in this iteration, the probability that exactly one node broadcasts is $n\cdot(1/n)\cdot(1-1/n)^{n-1}\geq (1-1/n)^n\geq 1/4$. Assume this event indeed happens, and the node that broadcasts is $u$. In such case, in the second time slot in this iteration, the probability that $u$ will hear a \textsf{stop} message is $(n-1)\cdot(1/(n-1))\cdot(1-1/(n-1))^{n-2}\geq(1-1/(n-1))^{n-1}\geq 1/5$. Therefore, we can conclude, if all nodes are active at the beginning of iteration $\lg{n}$, then they will all terminate by the end of this iteration, with at least constant probability.
\end{proof}

\subsection{Omitted description and analysis for \CountSHnoCDHigh}\label{subsec-appx-omit-CountSHnoCDHigh}

We first give the complete description of the second and third phases. In each iteration $i$, both the second phase and third phase are of length $\Theta(i)$. More specifically, in the second phase, for a node $u$, its behavior depends on whether it has obtained its private estimate. If it has not obtained a private estimate, then it simply listens in all slots. If, however, $u$ has already obtained a private estimate, then in each time slot, $u$ will broadcast an \textsf{informed} message with probability $1/2^i$, and listen otherwise. By the end of the second phase, if $u$ has ever heard an \textsf{informed} message, it will terminate by the end of this iteration, and commit to $2^{i+2}$ as its final estimate for $n$. Finally, the third phase is used to deal with the case in which one ``unlucky'' node successfully broadcasts an \textsf{informed} message during phase two (thus terminate all other nodes), but never gets the chance to successfully receive an \textsf{informed} message (thus cannot terminate along with other nodes). Specifically, in the third phase of iteration $i$, if a node has already committed to a final estimate, then in each time slot, it will broadcast a \textsf{stop} message with probability $1/2^i$. On the other hand, if a node has not decided its final estimate yet, then it will simply listen in each time slot. Moreover, if the node ever hears a \textsf{stop} message, then it will terminate by the end of this iteration and use $2^{i+2}$ as its final estimate.

Pseudocode of \CountSHnoCDHigh is shown in Figure \ref{fig-alg-CountSHnoCDHigh}.

\begin{figure}[!t]
\hrule
\vspace{1ex}\textbf{Pseudocode of \CountSHnoCDHigh executed at node $u$:}\vspace{1ex}
\hrule
\begin{small}
\begin{algorithmic}[1]
\State $status\gets uninformed, i\gets 1$.
\While {($true$)}
	\Statex \hspace{3ex}$\triangleright$ \textsc{Phase I}
	\State $count\_listen\gets 0, count\_msg\gets 0$.
	\For {($i\gets 1$ to $\Theta(i)$)}
		\If {($\texttt{random}(1,2^i)==1$)} \Comment $\texttt{random}(x,y)$ returns a random integer in range $[x,y]$.
			\State $\texttt{broadcast}(\langle\textsf{beacon}\rangle)$.
		\Else
			\State $msg\gets\texttt{listen}()$.
			\State $count\_listen\gets count\_listen+1$.
			\If {($msg==\langle\textsf{beacon}\rangle$)}
				\State $count\_msg\gets count\_msg+1$.
			\EndIf
		\EndIf
	\EndFor
	\If {($status==uninformed$ \textbf{and} $count\_listen\neq 0$ \textbf{and} $\frac{count\_msg}{count\_listen}\geq 1/2e$)}
		\State $status\gets informed$.
		\State $est\_private\gets 2^i$.
	\EndIf
	\Statex \hspace{3ex}$\triangleright$ \textsc{Phase II}
	\For {($i\gets 1$ to $\Theta(i)$)}
		\If {($status==informed$ \textbf{and} $\texttt{random}(1, 2^i)==1$)}
			\State $\texttt{broadcast}(\langle\textsf{informed}\rangle)$.
		\Else
			\State $msg\gets\texttt{listen}()$.
			\If {($msg==\langle\textsf{informed}\rangle$)}
				\State $status\gets stopped$.
				\State $est\gets 2^{i+2}$. \Comment $est$ is the final estimate for $n$.
			\EndIf
		\EndIf
	\EndFor
	\Statex \hspace{3ex}$\triangleright$ \textsc{Phase III}
	\For {($i\gets 1$ to $\Theta(i)$)}
		\If {($status==stopped$ \textbf{and} $\texttt{random}(1,2^i)==1$)}
			\State $\texttt{broadcast}(\langle\textsf{stop}\rangle)$.
		\ElsIf {($status\neq stopped$)}
			\State $msg\gets\texttt{listen}()$.
			\If {($msg==\langle\textsf{stop}\rangle$)}
				\State $status\gets stopped$.
				\State $est\gets 2^{i+2}$.
			\EndIf
		\EndIf
	\EndFor
	\If {($status==stopped$)}
		\State \textbf{return} $est$.
	\Else
		\State $i\gets i+1$.
	\EndIf
\EndWhile
\end{algorithmic}
\end{small}
\hrule\vspace{1ex}
\caption{Pseudocode of the \CountSHnoCDHigh algorithm.}\label{fig-alg-CountSHnoCDHigh}
\vspace{-3ex}
\end{figure}

We now give detailed analysis of \CountSHnoCDHigh. To begin with, we show that by the end of iteration $\lg{(n/(a\ln{n}))}$, no node has obtained its private or final estimate (thus all nodes are still active), with high probability in $n$. Here, $a$ is a sufficiently large constant.

\begin{lemma}\label{lemma-CountSHnoCDHigh-1}
During the execution of \CountSHnoCDHigh, if $n\geq 2$, then by the end of iteration $\lg{(n/(a\ln{n}))}$, with high probability in $n$, no node has obtained its private or final estimate for $n$. Here, $a$ is a sufficiently large constant.
\end{lemma}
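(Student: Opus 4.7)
The plan is to show that when the guess $2^i$ is still much smaller than $n$, a listening node in Phase~I of iteration $i$ almost never receives a \textsf{beacon} (the ``exactly one broadcaster'' event), and then to union-bound this tiny probability over all relevant slots and nodes. Once we know no \textsf{beacon} is ever heard during the early iterations, it follows that no node sets a private estimate, and consequently no node ever broadcasts \textsf{informed} or \textsf{stop}, so no final estimate can be set either.

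First, I would compute the per-slot probability that a listening node receives a \textsf{beacon} in Phase~I of iteration $i$, which equals $p_i = (n-1)\cdot 2^{-i}\cdot (1-2^{-i})^{n-2}$. Writing $\mu = (n-2)/2^i$ and using $(1-x)^k \leq e^{-xk}$, this gives $p_i \leq 2\mu e^{-\mu}$. When $i \leq \lg(n/(a\ln n))$ we have $\mu \geq a\ln n - O(1)$, so $p_i = O(a\ln n \cdot n^{-a})$. Since the function $\mu e^{-\mu}$ is monotonically decreasing for $\mu \geq 1$, this bound uniformly dominates $p_j$ for every $j \leq i$.

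Second, I would sum over all of the listening slots available in these early iterations. Phase~I of iteration $i$ has $\Theta(i)$ slots, so the total number of slots a given node can listen in across iterations $1$ through $\lg(n/(a\ln n))$ is at most $\sum_{i=1}^{\lg n} \Theta(i) = O(\lg^2 n)$. By a union bound over these slots and a further union bound over the $n$ nodes, the probability that any node hears at least one \textsf{beacon} during these iterations is at most $n\cdot O(\lg^2 n) \cdot O(a\ln n\cdot n^{-a}) = O(\lg^3 n \cdot n^{-(a-1)})$. Choosing $a$ a sufficiently large constant (relative to the desired w.h.p.\ exponent $\gamma \geq 1$) makes this $O(n^{-\gamma})$, matching the paper's definition of ``high probability in $n$''.

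Third, I would close the argument by a simple induction on iterations. Conditioned on the event that no \textsf{beacon} is heard anywhere in Phase~I of iterations $1$ through $\lg(n/(a\ln n))$, every node has $count\_msg = 0$ in every such Phase~I, so the threshold $count\_msg/count\_listen \geq 1/(2e)$ never triggers and no private estimate is ever assigned; in Phases~II and III the messages \textsf{informed} and \textsf{stop} are broadcast only by nodes already holding a private or final estimate, so none are ever sent and neither $status$ nor $est$ ever change from their initial values. The main obstacle is really only bookkeeping: verifying that the constant $a$ is large enough to absorb both the $\Theta(\lg^3 n)$ blow-up from the slot and node union bounds and the leading $a\ln n$ factor in $\mu e^{-\mu}$, leaving a $1/n^{\gamma}$ failure probability. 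There is no genuine conceptual difficulty; the argument rests entirely on the exponential smallness of $\mu e^{-\mu}$ when $\mu = \Theta(\log n)$.
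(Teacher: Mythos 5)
Your proposal is correct and follows essentially the same route as the paper's proof: bound the per-slot probability that a listener hears a \textsf{beacon} by roughly $n\cdot e^{-\Omega(a\ln n)}=n^{-\Omega(a)}$ when $2^i\leq n/(a\ln n)$, union-bound over the $O(\lg^2 n)$ slots and $n$ nodes, and observe that with no \textsf{beacon} received no private estimate (hence no \textsf{informed}/\textsf{stop} message and no final estimate) can arise. Your version is marginally tighter in retaining the $2^{-i}$ factor and more explicit about the closing induction, but these are cosmetic differences.
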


\begin{proof}
Notice, according to our protocol description, if at the beginning of an iteration no node has obtained its private or final estimate (which implies all nodes are active), and if during the first phase of this iteration no node ever hears a \textsf{beacon} message, then by the end of this iteration, still no node will have private or final estimate (thus will remain active).

Now, consider an iteration $i\leq \lg{(n/(a\ln{n}))}$ in which all nodes are still active, and a time slot in the first phase of iteration $i$. According to the protocol, in such a time slot, the probability that a fixed node $u$ will hear a \textsf{beacon} message sent by some neighbor is at most $n\cdot(1/2^i)\cdot(1-1/2^i)^{n-1}\leq n\cdot(1-1/2^i)^{n-1}\leq n\cdot(1-(a\ln{n})/n)^{n-1}\leq n\cdot(1-(a\ln{n})/n)^{n/2}\leq n\cdot e^{-((a\ln{n})/n)\cdot(n/2)}=n\cdot e^{-(a/2)\cdot\ln{n}}=n^{-a/2+1}$. Take a union bound over the $n$ nodes, a union bound over the $\Theta(i)=O(\lg{n})$ time slots within one iteration, and another union bound over the $\lg{(n/(a\ln{n}))}=O(\lg{n})$ iterations, the lemma follows.
\end{proof}

We then focus on iterations $\lg{(n/(a\ln{n}))}$ to $\lg{n}-3$, and show that nodes will not obtain private or final estimates during these iterations as well.

\begin{lemma}\label{lemma-CountSHnoCDHigh-2}
During the execution of \CountSHnoCDHigh, assume no node has obtained its private or final estimate for $n$ by the end of iteration $\lg{(n/(a\ln{n}))}$. In such case, if $n\geq 7$, then by the end of iteration $\lg{n}-3$, with high probability in $n$, still no node has obtained its private or final estimate for $n$. Here, $a$ is a sufficiently large constant.
\end{lemma}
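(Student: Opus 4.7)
The plan is to show that for every iteration $i$ in the range $\lg(n/(a\ln n)) < i \leq \lg n - 3$, no node obtains a private estimate by the end of that iteration's first phase, with high probability in $n$. Since a node can obtain a final estimate only via an \textsf{informed} or \textsf{stop} message, and those messages are only ever broadcast by nodes that already hold a private estimate, ruling out private estimates in these iterations automatically rules out final estimates as well, so I will focus exclusively on the first phase.

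Fix an iteration $i$ in the stated range and a node $u$. I would first compute the probability $p_i$ that in a slot of phase one in which $u$ listens, $u$ actually hears a clean \textsf{beacon} message. Since the other $n-1$ nodes broadcast independently with probability $1/2^i$, we get $p_i = (n-1)\cdot(1/2^i)\cdot(1-1/2^i)^{n-2}$. Writing $k = n/2^i$, a short estimate shows $p_i \leq k e^{-k}(1+o(1))$ for all $i$ in the range. Because $i \leq \lg n - 3$ gives $k \geq 8$, and $k e^{-k}$ is decreasing for $k \geq 1$, we get $p_i \leq 8 e^{-8} + o(1)$. Crucially, $8 e^{-8} < 1/(2e)$ with a constant gap; call this gap $\epsilon > 0$. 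This is the source of the ``headroom'' that concentration will then convert into a high-probability statement.

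Next I would apply a Hoeffding-type concentration bound. Let $Y_j \in \{0,1\}$ indicate that $u$ broadcasts in the $j$-th slot of phase one, and let $Z_j \in \{0,1\}$ indicate that exactly one of the other $n-1$ nodes broadcasts in slot $j$. Then $count\_listen = \sum_j (1-Y_j)$ and $count\_msg = \sum_j (1-Y_j) Z_j$, and $u$ obtains a private estimate iff $count\_listen \neq 0$ and $count\_msg \geq count\_listen/(2e)$. Rewriting this as $\sum_j (1-Y_j)\bigl( (1/(2e)) - Z_j\bigr) \leq 0$, I observe that the summands are independent across $j$ (the $Y_j$'s use $u$'s coins, the $Z_j$'s use the other nodes' coins, and both are independent across slots), bounded in $[-1,1]$, and have mean $(1-1/2^i)(1/(2e) - p_i) \geq \epsilon/2$ for $n$ large enough. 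Since phase one has $\Theta(i) = \Omega(\lg(n/(a\ln n))) = \Omega(\lg n)$ slots, Hoeffding gives $\Pr[\text{sum} \leq 0] \leq \exp(-\Omega(\epsilon^2 \lg n)) = 1/n^{\Omega(1)}$. Finally, I take a union bound over the $n$ nodes and the $O(\lg n)$ iterations in the considered range, yielding the claim.

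The main obstacle will be the careful bookkeeping of the two independent sources of randomness (the node's own broadcast/listen choices, which determine the denominator, and the other nodes' choices, which determine the numerator) while keeping the concentration argument clean. A sloppy setup could produce a dependency between numerator and denominator that invalidates Hoeffding; the cleanest fix is to collapse both into the single sum $\sum_j (1-Y_j)(1/(2e) - Z_j)$ of independent bounded summands as above, so that the random denominator never appears explicitly. A secondary minor issue is the degenerate case $count\_listen = 0$, which occurs with probability $(1/2^i)^{\Theta(i)}$ and is absorbed into the failure probability.
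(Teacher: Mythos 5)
Your proposal is correct and follows essentially the same approach as the paper: bound the per-slot probability of hearing a clean \textsf{beacon} by a constant strictly below the $1/2e$ threshold for every $i\leq\lg{n}-3$, use the $\Theta(i)=\Omega(\lg{n})$ phase length to concentrate, and union bound over nodes and iterations. The only difference is in the concentration mechanics---the paper first fixes the number of listening slots via a Chernoff bound and then applies a multiplicative upper-tail bound to the count of heard messages, whereas you fold the random denominator into a single Hoeffding sum $\sum_j(1-Y_j)\bigl(1/(2e)-Z_j\bigr)$---which is a clean and valid simplification of the same argument.
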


\begin{proof}
Notice, according to our protocol description, if at the beginning of an iteration no node has obtained its private or final estimate (which implies all nodes are active), and if during the first phase of this iteration no node observes at least $1/2e$ fraction of clear message slots (among all listening slots), then by the end of this iteration, still no node will have private or final estimate (thus will remain active).

Now, consider a fixed iteration $i$ in which all nodes are active and no node has obtained its private or final estimate at the beginning of iteration $i$. Here, $\lg{(n/(a\ln{n}))}\leq i\leq \lg{n}-3$. Let $m_i=\Theta(i)$ denote the length of the first phase of iteration $i$. According to the protocol, for a given node $u$, in each time slot in phase one, the node will choose to listen with probability at least $1/2$. Apply a standard Chernoff bound~\cite{mitzenmacher05}, we know $u$ will choose to listen in at least $\Omega(\lg{n})$ time slots, with high probability in $n$.

Assume node $u$ in fact chooses to listen in $l_i=\Omega(\lg{n})$ time slots during phase one. Let $X_i$ be a random variable denoting the number of slots in which $u$ hears a \textsf{beacon} message among the $l_i$ listening slots. Let $\mu_i=\mathbb{E}(X_i)$. We know $\mu_i=l_i\cdot (n-1)\cdot(1/2^i)\cdot(1-1/2^i)^{n-2}$. When $\lg{(n/(a\ln{n}))}\leq i\leq\lg{n}-3$, we can further bound $\mu_i$ from above: $\mu_i\leq l_i\cdot(n-1)\cdot(8/n)\cdot(1-8/n)^{n-2}\leq l_i\cdot n\cdot(8/n)\cdot(1-8/n)^{0.7n}\leq 8l_i\cdot e^{-(8/n)\cdot 0.7n}=8e^{-5.6}\cdot l_i$. Notice, the first inequality holds since $\mu_i$ is non-decreasing when $\lg{(n/(a\ln{n}))}\leq i\leq\lg{n}-3$.

In the remaining proof, we will use the following Chernoff-style bound which is easy to derive based on the original Chernoff bounds shown in \cite{mitzenmacher05}.

\begin{claim}\label{claim-chernoff-bound-variant}
Let $X_1,\cdots,X_n$ be independent indicator random variables such that $\mathbb{P}(X_i=1)=p_i$. Let $X=\sum_{i=1}^{n}{X_i}$ and $\mu=\mathbb{E}(X)$. Then, for $\delta>1$, we have $\mathbb{P}(X\geq \delta\mu)<(e/\delta)^{\delta\mu}$.
\end{claim}

Define $t_i=24e^{-5.6}\cdot l_i$ and $\delta_i=t_i/\mu_i$. According to Claim \ref{claim-chernoff-bound-variant}, we know $\mathbb{P}(X_i\geq t_i)<(e\mu_i/t_i)^{t_i}=(e/t_i)^{t_i}\cdot{\mu_i}^{t_i}\leq (e/t_i)^{t_i}\cdot{(8e^{-5.6}\cdot l_i)}^{t_i}=(e/3)^{t_i}\leq (e/3)^{24e^{-5.6}\cdot l_i}=(e/3)^{\Omega(\lg{n})}=1/{n}^{\Omega(1)}$. Thus, we know with high probability in $n$, node $u$ will observe at most $24e^{-5.6}$ fraction of clear message slots among all $l_i$ listening slots during phase one. Since $24e^{-5.6}<1/2e$, take a union bound over the $n$ nodes, and another union bound over iterations $\lg{(n/(a\ln{n}))}$ to $\lg{n}-3$, the lemma follows.
\end{proof}

Starting from iteration $\lg{n}-2$, nodes will begin to obtain private estimates, and may terminate with non-trivial probability. Nonetheless, our algorithm guarantees either no node terminates, or all nodes terminate simultaneously with same (final) estimate:

\begin{lemma}\label{lemma-CountSHnoCDHigh-3}
During the execution of \CountSHnoCDHigh, for $n\geq 5$, if at the beginning of an iteration $\lg{n}-2\leq i\leq\lg{n}$ all nodes are active, then by the end of this iteration, with high probability in $n$, either all nodes are still active, or all nodes have terminated with final estimate $2^{i+2}$.
\end{lemma}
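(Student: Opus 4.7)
My approach is to split into two cases according to whether the per-listening-slot probability of hearing a \textsf{beacon} in Phase~I, namely $q_i := (n-1)(1/2^i)(1-1/2^i)^{n-2}$, lies below or above the threshold $1/(2e)$ used by the algorithm. A direct computation (using $n\ge 8$, the smallest power of two satisfying $n\ge 5$) shows that for $i=\lg n-2$ one has $q_i = 4\tfrac{n-1}{n}(1-4/n)^{n-2}\to 4e^{-4}\approx 0.073$, which is strictly below $1/(2e)\approx 0.184$, while for $i=\lg n-1$ and $i=\lg n$ one has $q_i\to 2e^{-2}\approx 0.271$ and $e^{-1}\approx 0.368$, each strictly above $1/(2e)$; the same separations already hold at $n=8$ (the three values are roughly $0.055$, $0.311$, $0.393$). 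Hence across every iteration in the lemma's range $q_i$ is uniformly bounded away from $1/(2e)$ by a positive constant $\epsilon$.

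First I would analyze Phase~I for a single node $u$. Because $u$ listens in each Phase~I slot independently with probability $1-1/2^i\ge 1/2$, a Chernoff bound shows that $u$'s listen count is $\Omega(\lg n)$ w.h.p.\ in $n$. Conditioning on $u$'s listening schedule, the events ``$u$ hears a \textsf{beacon} in listening slot $t$'' are independent $\mathrm{Bernoulli}(q_i)$ across $t$, since the $n-1$ other nodes use fresh coins in each slot. A second Chernoff bound then places the observed fraction within an arbitrarily small additive constant of $q_i$, w.h.p.\ in $n$. Union-bounding over the $n$ nodes yields an all-or-nothing conclusion: in \textbf{Case~A} ($q_i<1/(2e)-\epsilon$, i.e.\ $i=\lg n-2$), no node obtains a private estimate in Phase~I w.h.p., so no node broadcasts anything in Phases~II or III and every node remains active at the end of the iteration; in \textbf{Case~B} ($q_i>1/(2e)+\epsilon$, i.e.\ $i\in\{\lg n-1,\lg n\}$), every node obtains a private estimate and enters Phase~II with status \emph{informed}.

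For Case~B I would analyze Phase~II. For any fixed node $u$, in each of the $\Theta(\lg n)$ Phase~II slots, $u$ listens with probability $1-1/2^i$ and, conditioned on listening, hears exactly one \textsf{informed} with probability $q_i=\Theta(1)$---the formula is identical to Phase~I's, since in Case~B all $n-1$ other nodes now act as potential informed broadcasters. These events are independent across slots, so $u$ fails to hear any \textsf{informed} with probability $\bigl(1-(1-1/2^i)q_i\bigr)^{\Theta(\lg n)}=1/n^{\Omega(1)}$; taking the hidden constant on the phase length large enough makes this $1/n^c$ for any desired $c$. A union bound over the $n$ nodes implies every node has status \emph{stopped} and $est=2^{i+2}$ by the end of Phase~II, w.h.p.\ in $n$. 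Phase~III then runs without modifying anything, after which the outer loop's termination check fires simultaneously at all nodes and each returns $2^{i+2}$, establishing the second alternative of the lemma.

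The main technical obstacle is verifying the constant-factor separations between $q_i$ and the threshold $1/(2e)$ at the boundary iterations $i=\lg n-2$ and $i=\lg n-1$: the value $1/(2e)$ is tuned precisely so that Phase~I's behavior flips around $2^i\approx n$, and without a uniform positive gap the Chernoff step would not deliver an all-or-nothing outcome w.h.p. Beyond that elementary verification, the proof is essentially two layered Chernoff bounds followed by union bounds. Note that Phase~III plays no role in the argument here---it is needed only to reconcile a mixed Phase~II outcome, which our Phase~I analysis rules out in either case w.h.p.\ in $n$.
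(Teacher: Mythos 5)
Your proof takes a genuinely different route from the paper's. The paper never determines what happens in Phase I of these iterations; it simply conditions on the event that some node $u$ terminates, observes that this forces a unique node to have broadcast an \textsf{informed} message alone in Phase II (so every listener already agrees on $2^{i+2}$), and then uses Phase III to show that this lone broadcaster also terminates with probability $1-1/n^{\Omega(1)}$. You instead pin down the Phase I outcome at each $i\in\{\lg n-2,\lg n-1,\lg n\}$ via concentration around $q_i$ and derive the all-or-nothing behavior from that. Your numerical separations of $q_i$ from $1/(2e)$ check out, and your Case B (all nodes informed after Phase I, hence all hear an \textsf{informed} message in Phase II w.h.p.) is sound.

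However, Case A has a genuine gap. The lemma's hypothesis is only that all nodes are \emph{active} at the start of iteration $\lg n-2$, not that all are \emph{uninformed}: a node becomes inactive only upon hearing \textsf{informed} or \textsf{stop}, so a node may enter iteration $\lg n-2$ still active while already holding a private estimate from an earlier iteration. Such a node broadcasts \textsf{informed} with probability $1/2^i$ in every Phase II slot regardless of what happened in Phase I of the current iteration. If it ever broadcasts alone, every listener sets $status$ to $stopped$, so your claimed conclusion that ``every node remains active'' is the wrong disjunct; and your argument---which explicitly declares Phase III irrelevant---has no mechanism to show that the lone broadcaster then terminates along with everyone else. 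That mechanism (Phase III as the second half of a mini-consensus) is precisely the content of the paper's proof. To repair your argument you would either have to strengthen the hypothesis to ``no node has a private estimate at the start of the iteration'' and re-verify that the strengthened statement still chains correctly with Lemmas \ref{lemma-CountSHnoCDHigh-1}, \ref{lemma-CountSHnoCDHigh-2} and \ref{lemma-CountSHnoCDHigh-4}, or reinstate the paper's Phase II/III agreement argument to cover the case where some node is already informed.
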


\begin{proof}
Consider an iteration $\lg{n}-2\leq i\leq\lg{n}$ in which all nodes are active at the beginning of it. If by the end of it all nodes are still active, then we are done. So, assume a node $u$ decides to terminate by the end of this iteration. We need to show, with high probability in $n$, all other nodes must have also decided to terminate by the end of iteration $i$.

Observe that $u$ will decide to terminate if one of the following two events happen: (a) $u$ hears an \textsf{informed} message during the second phase; or (b) $u$ hears a \textsf{stop} message during the third phase. (Notice, according to our protocol, (a) and (b) cannot both happen.)

If it is the case that event (a) happens to $u$, then there must exist a node $v\neq u$ and a slot in the second phase such that in that slot $v$ broadcasts alone while all other nodes listen. Therefore, we know by the end of the second phase, every node except $v$ must have decided to terminate after iteration $i$, with $2^{i+2}$ begin their final estimate. As for node $v$, if it has not decided to terminate after iteration $i$ by the end of phase two, then it will do so during phase three. More specifically, according to protocol description, in each slot during phase three, each node but $v$ will broadcast a \textsf{stop} message with probability $1/2^i$, while $v$ listens. Thus, in each slot during phase three, $v$ will hear a \textsf{stop} message with probability at least $(n-1)\cdot(1/2^i)\cdot(1-1/2^i)^{n-2}\geq (n-1)\cdot(1/n)\cdot(1-4/n)^{n-2}>1/200$. Since phase three contains $\Theta(i)=\Omega(\lg{n})$ independent time slots, we know by the end of phase three, with high probability in $n$, node $v$ must have decided to terminate as well (with $2^{i+2}$ being its final estimate).

On the other hand, if it is the case that event (b) happens to $u$, then we know there must exist a node $v$ that has already decided to terminate by the end of phase two of iteration $i$ (otherwise no node would broadcast in phase three). This further implies during phase two, node $v$ must have heard an \textsf{informed} message. Hence, there must exist a node $u'\neq v$ and a slot in the second phase such that in that slot $u'$ broadcasts alone while all other nodes listen. Therefore, we know by the end of the second phase, every node except $u'$ must have decided to terminate after iteration $i$, with $2^{i+2}$ being their final estimate. Notice, since $u$ receives a \textsf{stop} message during phase three, we know it must be the case that $u'=u$. By now, we can conclude, by the end of phase three, all nodes will terminate with $2^{i+2}$ being their final estimate.
\end{proof}

To complete the correctness proof of \CountSHnoCDHigh, we show by the end of iteration $\lg{n}$, all nodes must have terminated and output their (identical) estimates.

\begin{lemma}\label{lemma-CountSHnoCDHigh-4}
During the execution of \CountSHnoCDHigh, if $n\geq 10$, then by the end of iteration $\lg{n}$ all nodes must have terminated, with high probability in $n$.
\end{lemma}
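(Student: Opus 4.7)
The plan is to combine Lemma \ref{lemma-CountSHnoCDHigh-3} with a direct analysis of what happens in iteration $\lg{n}$. By Lemma \ref{lemma-CountSHnoCDHigh-3}, applied to iterations $\lg{n}-2$ and $\lg{n}-1$, with high probability in $n$ we may condition on one of two exhaustive cases: either all nodes have already terminated before iteration $\lg{n}$ (in which case there is nothing to prove), or all nodes are active at the start of iteration $\lg{n}$. So it suffices to show that, conditioned on all nodes being active at the start of iteration $\lg{n}$, at least one node obtains status $stopped$ by the end of this iteration with high probability in $n$; another application of Lemma \ref{lemma-CountSHnoCDHigh-3} (this time to iteration $i=\lg{n}$) then forces \emph{all} nodes to terminate simultaneously.

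The first step is to show that, with high probability in $n$, every node obtains a private estimate by the end of phase one of iteration $\lg{n}$. Each node listens in each phase-one slot with probability $1/2$, so a standard Chernoff bound gives $\Omega(\lg{n})$ listening slots per node. In a listening slot, the probability of hearing a \textsf{beacon} is $(n-1)\cdot(1/n)\cdot(1-1/n)^{n-2}$, which exceeds $1/e$ for $n\geq 10$. A Chernoff bound then ensures the observed clear-message fraction is at least $1/(2e)$ (the threshold in the pseudocode) with probability $1-1/n^{\Omega(1)}$; a union bound over the $n$ nodes keeps the failure probability polynomially small in $n$.

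The second step is to argue that during phase two of iteration $\lg{n}$, at least one node successfully receives an \textsf{informed} message. Once every node is $informed$, each of them broadcasts an \textsf{informed} message with probability $1/n$ and listens otherwise. In any fixed slot, the probability that exactly one node broadcasts and at least one other listens is at least $n\cdot(1/n)\cdot(1-1/n)^{n-1}\cdot\bigl(1-(1/n)^{\,n-1}\bigr)$, which is bounded below by a constant for $n\geq 10$. Since phase two contains $\Theta(\lg{n})$ independent slots, the probability that no such ``clean'' slot occurs is at most $n^{-\Omega(1)}$.

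Combining the two steps: with high probability in $n$ some node's status becomes $stopped$ by the end of phase two of iteration $\lg{n}$, and then Lemma \ref{lemma-CountSHnoCDHigh-3} promotes this to the conclusion that \emph{all} nodes terminate by the end of iteration $\lg{n}$. The main delicate point is the first step, because the length of phase one is only $\Theta(\lg{n})$; we need the expected clear-message fraction to be sufficiently bounded away from the $1/(2e)$ threshold so that a Chernoff bound gives high probability in $n$. The calculation above shows $(n-1)/n\cdot(1-1/n)^{n-2}>1/e$ for $n\geq 10$, providing the constant slack that makes this go through. Everything else is a straightforward union bound over nodes and over a constant number of candidate iterations.
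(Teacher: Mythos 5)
Your proof follows essentially the same route as the paper's: condition on all nodes being active at the start of iteration $\lg{n}$, show every node obtains a private estimate in phase one via a Chernoff bound (the expected clear-message fraction exceeds the $1/(2e)$ threshold by a constant factor, exactly the slack the paper exploits with its $0.9l/4$ versus $l/5$ calculation), and then use phase two to make everyone stop. The one genuinely different detail is the ending: you show only that \emph{some} node receives an \textsf{informed} message and then invoke Lemma \ref{lemma-CountSHnoCDHigh-3} once more to promote this to global termination, whereas the paper argues directly that \emph{every} node hears an \textsf{informed} message in phase two w.h.p.\ (probability $>1/5$ per slot, $\Theta(\lg{n})$ slots, union bound over nodes). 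Both are valid; yours is slightly more economical because Lemma \ref{lemma-CountSHnoCDHigh-3} already does the ``all or nothing'' work.

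There is, however, one small but genuine gap at the very beginning. You claim that applying Lemma \ref{lemma-CountSHnoCDHigh-3} to iterations $\lg{n}-2$ and $\lg{n}-1$ makes the two cases ``all nodes terminated before iteration $\lg{n}$'' and ``all nodes active at the start of iteration $\lg{n}$'' exhaustive w.h.p. But Lemma \ref{lemma-CountSHnoCDHigh-3} applies to an iteration $i$ with $\lg{n}-2\leq i\leq\lg{n}$ only \emph{under the hypothesis that all nodes are active at the beginning of that iteration}; it says nothing about iterations $1,\dots,\lg{n}-3$, so by itself it does not rule out a partial termination occurring earlier, which would break your dichotomy (and would also void the precondition for your first application of the lemma). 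You need Lemmas \ref{lemma-CountSHnoCDHigh-1} and \ref{lemma-CountSHnoCDHigh-2}, which show that w.h.p.\ no node obtains a private or final estimate through iteration $\lg{n}-3$, to establish that all nodes are active entering iteration $\lg{n}-2$; the paper cites them explicitly for exactly this purpose. Once that citation is added, your argument goes through.
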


\begin{proof}
Due to Lemma \ref{lemma-CountSHnoCDHigh-1} and \ref{lemma-CountSHnoCDHigh-2}, we know all nodes are still active by the end of iteration $\lg{n}-3$, with high probability in $n$. If by the end of iteration $\lg{n}-1$ all nodes have already terminated, then we are done. Otherwise, assume at the beginning of iteration $\lg{n}$, at least one node is still active. Due to Lemma \ref{lemma-CountSHnoCDHigh-3}, we know in such case, all nodes must be active at the beginning of iteration $\lg{n}$, with high probability in $n$.

Assume indeed all nodes are active at the beginning of iteration $\lg{n}$. Let $m=\Theta(\lg{n})$ denote the length of the first phase of iteration $\lg{n}$. According to the protocol, for a given node $u$, in each time slot in phase one, the node will choose to listen with probability at least $1/2$. Apply a standard Chernoff bound, we know $u$ will choose to listen in at least $\Omega(\lg{n})$ time slots, with high probability in $n$.

Assume $u$ in fact chooses to listen in $l=\Omega(\lg{n})$ time slots during phase one. Let $X$ be a random variable denoting the number of slots in which $u$ hears a \textsf{beacon} message among the $l$ listening slots. We know $\mathbb{E}(X)=l\cdot (n-1)\cdot(1/n)\cdot(1-1/n)^{n-2}>0.9l\cdot (1-1/n)^{n}>0.9l\cdot 4^{-(1/n)\cdot n}=0.9l/4$. Apply a Chernoff bound we know $X$ will be at least $l/5$, with high probability in $n$. Since $1/2e<1/5$, we know $u$ must have obtained its private estimate by the end of phase one, with high probability in $n$. Take a union bound over all the $n$ nodes, we know this claim holds true for them too.

Assume in iteration $\lg{n}$, all nodes are active and each node has obtained its private estimate by the end of phase one, we now focus on phase two. Fix a node $u$, according to the protocol, we know in each slot in phase two, $u$ will hear an \textsf{informed} message with probability at least $(1-1/n)\cdot(n-1)\cdot(1/n)\cdot(1-1/n)^{n-2}>0.9\cdot (1-1/n)^{n}>0.9\cdot 4^{-(1/n)\cdot n}>1/5$. Since phase two contains $\Theta(\lg{n})$ independent time slots, we knot by the end of phase two, $u$ must have heard an \textsf{informed} message and thus decides to terminate by the end of iteration $\lg{n}$, with high probability in $n$. Take a union bound over all the $n$ nodes, we know this claim holds true for them too. By now, we have proved the lemma.
\end{proof}

Combine the above lemmas and we can immediately have Theorem \ref{thm-CountSHnoCDHigh}.

\subsection{Omitted description and analysis for \EstUpperSH}\label{subsec-appx-omit-EstUpperSH}

\EstUpperSH contains multiple iterations, and is quite similar to \CountSHnoCDConst. In the $i$\textsuperscript{th} iteration, nodes assume $\lg{n}=2^i$, and then verify the correctness of the estimate. In case the estimate is correct, all nodes terminate simultaneously, otherwise they continue into the next iteration. More specifically, the $i$\textsuperscript{th} iteration contains two slots. In the first slot, each node broadcasts a \textsf{beacon} message with probability $1/2^{2^i}$, and otherwise listens. If a node listens in the first time slot and hears silence or a \textsf{beacon} message, it will broadcast a \textsf{stop} message in the second time slot and then terminate, with $2^{i+1}$ being its estimate for $\lg{n}$. On the other hand, if a node chooses to broadcast in the first time slot, then it will listen in the second time slot. Moreover, in case it hears noise or a \textsf{stop} message in the second time slot, it will terminate and use $2^{i+1}$ as its estimate of $\lg{n}$.

To prove the correctness of \EstUpperSH, we first show that before iteration $\lfloor\lg\lg{n}\rfloor$, nodes will not terminate. (Recall we assume $n$ is a power of two, thus $\lg{n}$ is an integer, but $\lg\lg{n}$ is not necessarily an integer.) Intuitively, this is because in these iterations, estimates are too small and listening nodes are likely to always hear noise in the first time slot.

\begin{lemma}\label{lemma-EstUpperSH-1}
During the execution of \EstUpperSH, for sufficiently large $n$, by the end of iteration $\lfloor\lg\lg{n}\rfloor-1$, all nodes are still active, with high probability in $n$.
\end{lemma}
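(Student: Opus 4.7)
The plan is to reduce the claim to bounding, for each iteration $i \leq \lfloor \lg\lg n \rfloor - 1$, the probability that the channel is \emph{not} in a collision state during slot 1 of that iteration. First I would verify the equivalence between non-termination and a collision: if at least two nodes broadcast in slot 1, every listener hears noise (so no \textsf{stop} is scheduled), and every broadcaster then listens in slot 2 to silence, so no node terminates; conversely, if zero or one node broadcasts in slot 1, at least one listener hears silence or a beacon and terminates. Thus it suffices to show that with high probability in $n$, at least two nodes broadcast in slot 1 of every iteration $i = 1, \ldots, \lfloor \lg\lg n \rfloor - 1$.

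Next I would track the broadcast probability $p_i = 1/2^{2^i}$. For $i \leq \lfloor \lg\lg n \rfloor - 1$ we have $2^i \leq (\lg n)/2$, hence $2^{2^i} \leq \sqrt{n}$, so $p_i \geq 1/\sqrt{n}$ and the expected number of broadcasters $np_i$ is at least $\sqrt{n}$. Let $X_i$ be the number of broadcasters in slot 1 of iteration $i$; I would bound $\Pr[X_i \leq 1] \leq (1-p_i)^n + n p_i (1-p_i)^{n-1}$ using the elementary estimates $(1-p_i)^n \leq e^{-np_i}$ and $np_i(1-p_i)^{n-1} \leq np_i \cdot e^{-(n-1)p_i}$. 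Since the function $y \mapsto y e^{-y/2}$ is decreasing for $y \geq 2$ and we are in the regime $np_i \geq \sqrt{n}$, this yields $\Pr[X_i \leq 1] \leq 2\sqrt{n}\cdot e^{-\sqrt{n}/2}$ for sufficiently large $n$, which is super-polynomially small.

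Finally, a union bound over the $\lfloor \lg\lg n \rfloor - 1 = O(\lg\lg n)$ iterations preserves the bound: the probability that some iteration among the first $\lfloor \lg\lg n \rfloor - 1$ sees fewer than two broadcasters is at most $O(\lg\lg n)\cdot \sqrt{n}\cdot e^{-\sqrt{n}/2}$, which is at most $1/n^{\gamma}$ for any constant $\gamma \geq 1$ once $n$ is large enough. Hence with high probability in $n$, every such iteration exhibits a collision in slot 1, no node terminates, and all nodes remain active at the end of iteration $\lfloor \lg\lg n \rfloor - 1$.

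The main obstacle is careful bookkeeping rather than any deep estimate: one must check that a collision in slot 1 cleanly propagates to silence in slot 2 (so that broadcasters also stay alive), and one must handle the boundary case in which $2^{\lfloor \lg\lg n \rfloor - 1}$ is as close as possible to $(\lg n)/2$, which gives the tightest instance $p_i \approx 1/\sqrt{n}$. Everything else sits comfortably away from the threshold, so no delicate Chernoff calibration is needed.
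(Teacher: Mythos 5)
Your proposal is correct and follows essentially the same route as the paper's proof: bound the probability that at most one node broadcasts in the first slot of an iteration by $(1-p_i)^n + np_i(1-p_i)^{n-1}$, observe that the worst case over $i\leq\lfloor\lg\lg n\rfloor-1$ occurs at broadcast probability $p_i\approx 1/\sqrt{n}$, obtain a bound of order $\sqrt{n}\,e^{-\Omega(\sqrt{n})}$, and union bound over the $O(\lg\lg n)$ iterations. Your explicit check that a slot-1 collision propagates to silence in slot 2 (so broadcasters also survive) is a detail the paper leaves implicit but does not change the argument.
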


\begin{proof}
Consider an arbitrary iteration $i$ where $1\leq i\leq \lfloor\lg\lg{n}\rfloor-1$. If all nodes are active at the beginning of iteration $i$, then according to our protocol, in the first slot in that iteration, the probability that at most one node will broadcast is $n\cdot(1/2^{2^i})\cdot(1-1/2^{2^i})^{n-1} + (1-1/2^{2^i})^{n}\leq 1/n^{O(1)}$. (See Figure \ref{fig-eqn-array-lemma-EstUpperSH-1} for details.)

\begin{figure}[!t]
\begin{align*}
& n\cdot(1/2^{2^i})\cdot(1-1/2^{2^i})^{n-1} + (1-1/2^{2^i})^{n} \\
& \leq n\cdot(1/2^{2^{\lfloor\lg\lg{n}\rfloor-1}})\cdot(1-1/2^{2^{\lfloor\lg\lg{n}\rfloor-1}})^{n-1} + (1-1/2^{2^{\lfloor\lg\lg{n}\rfloor-1}})^{n} \\
& \leq n\cdot(1/2^{2^{\lg\lg{n}-1}})\cdot(1-1/2^{2^{\lg\lg{n}-1}})^{n-1} + (1-1/2^{2^{\lg\lg{n}-1}})^{n} \\
& = \sqrt{n}\cdot(1-1/\sqrt{n})^{n-1} + (1-1/\sqrt{n})^{n} \\
& \leq \sqrt{n}\cdot(1-1/\sqrt{n})^{0.98n} + e^{-\sqrt{n}} \leq \sqrt{n}\cdot e^{-0.98\sqrt{n}} + e^{-0.98\sqrt{n}}\\
& \leq e^{0.3\sqrt{n}}\cdot e^{-0.98\sqrt{n}}=e^{-\Omega(\ln{n})}=1/{n}^{O(1)}
\end{align*}
\vspace{-5ex}\caption{}\label{fig-eqn-array-lemma-EstUpperSH-1}
\end{figure}
%%% CHAODONG: when n >= 50, the above inequalities hold

This implies all nodes will still be active after iteration $i$, with probability at least $1-1/{n}^{O(1)}$. Take a union bound over all the $\lfloor\lg\lg{n}\rfloor-1$ iterations, the lemma is proved.
\end{proof}

Next, we show by iteration $\lfloor\lg\lg{n}\rfloor+O(1)$, all nodes will terminate.

\begin{lemma}\label{lemma-EstUpperSH-2}
During the execution of \EstUpperSH, if $n\geq 2$ and nodes have not terminated by the end of iteration $\lfloor\lg\lg{n}\rfloor-1$, then by the end of iteration $\lfloor\lg\lg{n}\rfloor+1$, all nodes must have terminated with at least some constant probability. Moreover, by the end of iteration $\lfloor\lg\lg{n}\rfloor+k$, all nodes must have terminated with probability at least $1-2/({n}^{2^{k-1}-1})$. Here, $k\geq 2$ is an integer.
\end{lemma}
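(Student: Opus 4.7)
The plan rests on a clean structural observation about \EstUpperSH: in any iteration where all nodes are still active at the start, either every node terminates simultaneously or nobody does, and this is determined entirely by the number of broadcasters in the first slot of that iteration. Specifically, if zero nodes broadcast, every node listens, hears silence, and terminates (sending \textsf{stop} in slot~2); if exactly one node broadcasts, the $n-1$ listeners hear the \textsf{beacon} and terminate, and in slot~2 their subsequent \textsf{stop} transmissions (noise if $n\geq 3$, a single \textsf{stop} if $n=2$) force the lone broadcaster to terminate as well; and if at least two nodes broadcast, all listeners hear noise, no \textsf{stop} message is ever sent in slot~2, and nobody terminates. Let $q_i$ denote the probability that at least two nodes broadcast in the first slot of iteration $\lfloor\lg\lg{n}\rfloor + i$. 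Because random choices in different iterations are independent, the probability of reaching the end of iteration $\lfloor\lg\lg{n}\rfloor + k$ without any termination is exactly $\prod_{i=0}^{k} q_i$.

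For the first (constant probability) claim, I would focus on iteration $\lfloor\lg\lg{n}\rfloor + 1$ alone. Writing $m = \lfloor\lg\lg{n}\rfloor$ so that $2^m > (\lg{n})/2$, the broadcast probability used in that iteration is $p = 1/2^{2^{m+1}} \leq 1/2^{\lg{n}} = 1/n$. Hence the probability that nobody broadcasts in the first slot is at least $(1 - 1/n)^n \geq 1/4$ for $n\geq 2$, and by the structural observation this already yields termination with constant probability by the end of iteration $\lfloor\lg\lg{n}\rfloor + 1$, independent of what happens in iteration $\lfloor\lg\lg{n}\rfloor$.

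For the second claim I would bound $\prod_{i=0}^{k} q_i$ crudely by its last factor $q_k$ (the others being trivially at most $1$), and then apply the standard pairwise union bound $q_k \leq \binom{n}{2} p_k^2$, where $p_k = 1/2^{2^{m+k}}$ is the broadcast probability in iteration $\lfloor\lg\lg{n}\rfloor + k$. The main calculation is $2^{m+k} = 2^k \cdot 2^m > 2^{k-1}\lg{n}$, which gives $p_k < 1/n^{2^{k-1}}$ and therefore $q_k < (n^2/2) \cdot 1/n^{2^k} = 1/(2 n^{2^k - 2})$. A short algebra check then shows $1/(2n^{2^k-2}) \leq 2/n^{2^{k-1}-1}$ for every $k\geq 2$ and $n\geq 1$ (equivalent to $4n^{2^{k-1}-1} \geq 1$), matching the target.

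The only real care point is handling the floor when $\lg\lg{n}$ is not an integer: one must use $2^m > (\lg{n})/2$ rather than $2^m = \lg{n}$ to get the correct exponent of $n$ in $p_k$, otherwise the constant factor will be off by a square. With that in hand, both claims reduce to elementary probability estimates, and in fact the stated bound of $2/n^{2^{k-1}-1}$ is quite loose relative to the $1/(2n^{2^k-2})$ the calculation actually delivers, so there is room to spare.
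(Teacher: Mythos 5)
Your proposal is correct, and it reaches the stated bounds by a slightly different decomposition than the paper. The paper's proof isolates the single event that \emph{all} nodes stay silent in the first slot of iteration $\lfloor\lg\lg{n}\rfloor+c$, lower-bounds it by $(1-1/n^{2^{c-1}})^{n}\geq (2e)^{-n^{1-2^{c-1}}}$, and then linearizes the exponential to get $1-2/n^{2^{c-1}-1}$; it never uses the ``exactly one broadcaster'' termination path at all. You instead upper-bound the complementary non-terminating event --- at least two broadcasters --- by the pairwise union bound $\binom{n}{2}p_k^2$, which exploits the structural fact that a lone broadcaster also triggers termination (via the listeners' \textsf{stop} messages in slot two). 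Both routes use the same floor estimate $2^{\lfloor\lg\lg{n}\rfloor}>(\lg{n})/2$ to convert the doubly-exponential decay probability into a power of $n$, and both establish the same per-iteration all-or-nothing termination behavior. What your version buys is a quantitatively stronger failure probability, $1/(2n^{2^k-2})$ versus the paper's $2/n^{2^{k-1}-1}$ (the exponent doubles because you require two colliding broadcasters rather than merely one non-silent node), plus a marginally better constant ($1/4$ versus $1/(2e)$) for the $k$-independent claim; the paper's version is arithmetically self-contained in a single chain of inequalities. Your handling of the floor, the independence across iterations, and the reduction of the product $\prod_i q_i$ to its last factor are all sound.
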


\begin{proof}
Let $c\geq 1$ be an integer. Notice, it is easy to see our protocol guarantees by the end of an iteration, either no node terminates, or all nodes terminate. If all nodes have already terminated by the end of iteration $\lfloor\lg\lg{n}\rfloor+(c-1)$, then we are done. So, assume all nodes are still active at the beginning of iteration $\lfloor\lg\lg{n}\rfloor+c$. In such case, in the first slot in this iteration, the probability that all $n$ neighbors of $u$ will remain silent is $(1-1/2^{2^{\lfloor\lg\lg{n}\rfloor+c}})^{n}> (1-1/2^{2^{\lg\lg{n}+(c-1)}})^{n}=(1-1/{n}^{2^{c-1}})^{n}\geq (2e)^{-{n}/{n}^{2^{c-1}}}= (2e)^{-{n}^{1-2^{c-1}}}$.

When $c=1$, $(2e)^{-{n}^{1-2^{c-1}}}=1/2e$ is a constant. On the other hand, when $c\geq 2$, $(2e)^{-{n}^{1-2^{c-1}}}\geq e^{-(\ln(2e))\cdot{n}^{1-2^{c-1}}}\geq 1-(\ln{(2e)})/({n}^{2^{c-1}-1})> 1-2/({n}^{2^{c-1}-1})$.
\end{proof}

Clearly, the above two lemmas imply the correctness of \EstUpperSH:

\begin{theorem}\label{thm-EstUpperSH}
In single-hop radio networks, when collision detection is available, \EstUpperSH guarantees the following properties: (a) all nodes terminate simultaneously (if they ever terminate); (b) with at least some constant probability, all nodes get the same estimate of $\lg{n}$ in range $[2^{\lfloor\lg\lg{n}\rfloor+1}, 2^{\lfloor\lg\lg{n}\rfloor+2}]$ in $O(\lg\lg{n})$ time; and (b) with probability at least $1-1/{n}^{k-2}$, all nodes get the same estimate of $\lg{n}$ in range $[2^{\lfloor\lg\lg{n}\rfloor+1}, 2^{\lfloor\lg\lg{n}\rfloor+k}]$ in $O(\lg\lg{n})$ time. Here, $k\geq 4$ is an integer.
\end{theorem}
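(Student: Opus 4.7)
The plan is to obtain each of the three claims essentially as a corollary of Lemma \ref{lemma-EstUpperSH-1} and Lemma \ref{lemma-EstUpperSH-2}, with the only non-routine work being a deterministic case analysis on a single iteration to establish simultaneous termination.

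For part (a), I would argue deterministically by splitting on the number of nodes that choose to broadcast in slot 1 of some iteration $i$. If no node broadcasts, every node listens and hears silence, so by the protocol every node broadcasts \textsf{stop} in slot 2 and terminates with estimate $2^{i+1}$. If exactly one node $v$ broadcasts, the remaining $n-1$ nodes hear the \textsf{beacon}, broadcast \textsf{stop} in slot 2, and terminate; meanwhile $v$ listens in slot 2, and since $n \geq 2$ it hears at least one \textsf{stop} (or noise), causing it to terminate as well. If at least two nodes broadcast, every listener hears noise in slot 1 (so no listener broadcasts \textsf{stop}), every broadcaster listens in slot 2 to silence, and no node terminates. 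Thus in every iteration either every node terminates or none does, giving property (a) deterministically.

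For part (b), I would simply intersect the events guaranteed by the two lemmas. By Lemma \ref{lemma-EstUpperSH-1}, with high probability in $n$ no node terminates by the end of iteration $\lfloor\lg\lg n\rfloor - 1$. By Lemma \ref{lemma-EstUpperSH-2} (applied with $k=1$), conditioned on this, all nodes terminate by the end of iteration $\lfloor\lg\lg n\rfloor + 1$ with constant probability. Combining, with constant probability termination occurs in some iteration $i \in \{\lfloor\lg\lg n\rfloor, \lfloor\lg\lg n\rfloor + 1\}$, and by part (a) the common estimate is $2^{i+1} \in [2^{\lfloor\lg\lg n\rfloor+1}, 2^{\lfloor\lg\lg n\rfloor+2}]$. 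Since each iteration uses $2$ slots, the total time is $O(\lg\lg n)$.

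For part (c), I would repeat the same combination but invoke Lemma \ref{lemma-EstUpperSH-2} with parameter $k - 1$ instead of $k$, to match the stated range. The reason is the small but important offset: termination in iteration $i$ produces estimate $2^{i+1}$, so the range $[2^{\lfloor\lg\lg n\rfloor+1}, 2^{\lfloor\lg\lg n\rfloor+k}]$ corresponds to termination no later than iteration $\lfloor\lg\lg n\rfloor + (k-1)$. Lemma \ref{lemma-EstUpperSH-2} then yields failure probability at most $2/n^{2^{k-2}-1}$ on the ``late termination'' side, and Lemma \ref{lemma-EstUpperSH-1} contributes a polynomially-small failure on the ``early termination'' side. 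Since $2^{k-2} - 1 \geq k - 2$ for $k \geq 3$, a union bound gives overall failure at most $1/n^{k-2}$ for sufficiently large $n$ and $k \geq 4$, as required; the running time is still $O(\lg\lg n)$ because $k$ is a constant.

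There is no real obstacle in this proof beyond keeping the iteration indices and exponents straight; the two lemmas do essentially all the probabilistic work, and the case analysis for (a) is short and deterministic. The one sanity check worth doing carefully is that $2^{k-2} - 1 \geq k - 2$ holds for all integers $k$ used in part (c), so that the tail probability from Lemma \ref{lemma-EstUpperSH-2} dominates the polynomial bound claimed in the theorem.
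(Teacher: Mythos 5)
Your proof is correct and follows essentially the same route as the paper, which simply asserts that Lemmas \ref{lemma-EstUpperSH-1} and \ref{lemma-EstUpperSH-2} immediately imply the theorem; you have merely filled in the details the paper leaves implicit (the deterministic all-or-none termination case analysis, and the off-by-one bookkeeping that requires invoking Lemma \ref{lemma-EstUpperSH-2} with parameter $k-1$ so that the estimate $2^{i+1}$ lands in the stated range).
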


\subsection{Omitted description and analysis for \CountSHCDConst}\label{subsec-appx-omit-CountSHCDConst}

We first give a complete and detailed description of \CountSHCDConst.

\CountSHCDConst contains multiple iterations, each of which contains four time slots. In each iteration $i$, all nodes have a lower bound $a_i$ and an upper bound $b_i$, and will test whether the median $m_i=\lfloor (a_i+b_i)/2\rfloor$ is close to $\lg{n}$ or not. (Initially, $a_1$ is set to one, and $b_1$ is set to the estimate of $\lg{n}$ returned by \EstUpperSH.) In the first time slot in iteration $i$, each node will choose to broadcast a \textsf{beacon} message with probability $1/2^{m_i}$, and listen otherwise. If a node listens in the first time slot and hears silence, it will set $b_{i+1}$ to $m_i-1$; if it hears noise, it will set $a_{i+1}$ to $m_i+1$; and if it hears a \textsf{beacon} message, it will terminate by the end of this iteration with $2^{m_i+1}$ being its estimate of $n$. The other three time slots in each iteration $i$ allow nodes that have chosen to broadcast in the first time slot to learn the status of the channel, with the help of nodes that have chosen to listen in the first time slot. More specifically, for each node that have chosen to listen in the first time slot: if it heard silence, then it will broadcast an \textsf{over-est} message in the second time slot; if it heard noise, then it will broadcast an \textsf{under-est} message in the third time slot; and if it heard a \textsf{beacon} message, then it will broadcast a \textsf{stop} message in the last time slot. On the other hand, for each node that have chosen to broadcast in the first time slot, it will listen in the next three time slots. Moreover, if it hears noise or a message in the second time slot, it will set $b_{i+1}$ to $m_i-1$; if it hears noise or a message in the third time slot, it will set $a_{i+1}$ to $m_i+1$; and if it hears noise or a message in the last time slot, it will terminate with $2^{m_i+1}$ being its estimate of $n$. Finally, we note that, if at the beginning of some iteration $i$, a node finds $a_i>b_i$, then it will terminate without obtaining an estimate of $n$.

The complete pseudocode of \CountSHCDConst is given in Figure \ref{fig-alg-CountSHCDConst}.

\begin{figure}[!t]
\hrule
\vspace{1ex}\textbf{Pseudocode of \CountSHCDConst executed at node $u$:}\vspace{1ex}
\hrule
\begin{small}
\begin{algorithmic}[1]
\State $a\gets 1, b\gets \EstUpperSH()$.
\While {($true$)}
	\If {($a>b$)}\ \textbf{abort}.\EndIf
	\State $m\gets\lfloor(a+b)/2\rfloor$.
	\If {($\texttt{random}(1,2^m)==1$)}\ $role\gets bcst$ \algorithmicelse\ $role\gets listen$.\EndIf
	\If {($role==listen$)}
		\State $msg\gets\texttt{listen}()$. \Comment Listen in first time slot.
		\If {($msg==silence$)} \Comment Heard silence in first time slot.
			\State $\texttt{broadcast}(\langle\textsf{over-est}\rangle); \texttt{idle}(); \texttt{idle}()$. \Comment Only broadcast \textsf{over-est} in second time slot.
			\State $b\gets m-1$.
		\ElsIf {($msg==noise$)} \Comment Heard noise in first time slot.
			\State $\texttt{idle}(); \texttt{broadcast}(\langle\textsf{under-est}\rangle); \texttt{idle}()$. \Comment Only broadcast \textsf{under-est} in third time slot.
			\State $a\gets m+1$.
		\Else \Comment Heard \textsf{beacon} message in first time slot.
			\State $\texttt{idle}(); \texttt{idle}(); \texttt{broadcast}(\langle\textsf{stop}\rangle)$. \Comment Only broadcast \textsf{stop} in fourth time slot.
			\State \textbf{return} $2^{m+1}$.
		\EndIf
	\Else
		\State $\texttt{broadcast}(\langle\textsf{beacon}\rangle)$. \Comment Broadcast \textsf{beacon} in first time slot.
		\State $msg1\gets\texttt{listen}()$. \Comment Listen in remaining three time slots.
		\State $msg2\gets\texttt{listen}()$.
		\State $msg3\gets\texttt{listen}()$.
		\If {($msg1\neq silence$)}
			\State $b\gets m-1$.
		\ElsIf {($msg2\neq silence$)}
			\State $a\gets m+1$.
		\Else \Comment Heard noise or \textsf{stop} message in fourth time slot.
			\State \textbf{return} $2^{m+1}$.
		\EndIf
	\EndIf
\EndWhile
\end{algorithmic}
\end{small}
\hrule\vspace{1ex}
\caption{Pseudocode of the \CountSHCDConst algorithm.}\label{fig-alg-CountSHCDConst}
\vspace{-3ex}
\end{figure}

Before giving detailed analysis for \CountSHCDConst, we first show that the last three slots within each iteration ensure all nodes are ``tightly synchronized'' and always have same value of $a_i$ and $b_i$.

\begin{lemma}\label{lemma-CountSHCDConst-1}
During the execution of \CountSHCDConst, by the end of each iteration: (a) all nodes have same value of $a_i$ and $b_i$; and (b) either all nodes decide to terminate, or all nodes decide to continue.
\end{lemma}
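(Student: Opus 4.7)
The plan is to prove the statement by induction on the iteration index $i$. For the base case, observe that at the start of iteration~$1$ every node sets $a_1=1$ deterministically, and every node receives the same value of $b_1$ from \EstUpperSH: by Theorem~\ref{thm-EstUpperSH}, all nodes terminate \EstUpperSH simultaneously with an identical estimate, so the invariant ``all nodes agree on $(a,b)$'' holds before iteration~$1$ begins. For the inductive step, I would assume all nodes share the same pair $(a_i,b_i)$ entering iteration~$i$; then every node computes the same median $m_i=\lfloor(a_i+b_i)/2\rfloor$ and uses the same broadcast probability $1/2^{m_i}$ in slot~$1$. From this common state, I would check (a) and (b) simultaneously by tracing what happens in the remaining three slots.

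The heart of the argument is a case split on what listeners observe in slot~$1$: silence, a clean \textsf{beacon} message, or noise. Slots~$2$,~$3$, and~$4$ are engineered precisely so that nodes that broadcast in slot~$1$ can infer which of these three outcomes occurred—listeners that heard silence/noise/\textsf{beacon} transmit in slot~$2$/$3$/$4$ respectively. I would argue that in each of the three outcomes the listeners perform identical updates to $(a,b)$ or unanimously decide to terminate with identical estimate $2^{m_i+1}$, and then verify that the broadcasters' feedback in slots~$2$--$4$ leads them to the same decision. One degenerate subcase is when \emph{every} node broadcasts in slot~$1$: then slots~$2$--$4$ are all silent, the broadcaster path in the pseudocode falls through to its \texttt{else} branch, and every node terminates with the same estimate $2^{m_i+1}$—this preserves the invariant even though the resulting estimate may be poor. (Bounding the probability of this bad event is the job of the subsequent lemmas, not this one.)

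The step I expect to require the most care is the single-broadcaster case, where all but one node reach termination via the listener branch (having heard the \textsf{beacon}) while a single broadcaster $v$ must reach termination via the broadcaster branch. I would verify that $v$ does indeed read a non-silent signal in slot~$4$: every other node hears the \textsf{beacon} in slot~$1$ and therefore, according to the pseudocode, broadcasts \textsf{stop} in slot~$4$; since there is at least one such listener (otherwise $v$ would not be ``alone''), $v$ hears either the \textsf{stop} message or noise in slot~$4$, falls into the termination branch, and outputs $2^{m_i+1}$, matching the listeners. One also needs to note that in this case $v$ receives silence in slots~$2$ and~$3$ (no listener heard silence or noise), so the earlier conditionals in the broadcaster branch are correctly skipped. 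Beyond this subcase the argument is essentially a syntactic walk through the pseudocode once the right case split is in place, so I do not anticipate any further obstacles.
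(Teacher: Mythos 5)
Your proposal is correct and follows essentially the same route as the paper's proof: an inductive invariant that all nodes share $(a_i,b_i)$, followed by a case analysis showing that the feedback slots force listeners and broadcasters to the same update or the same termination decision. If anything, your treatment is slightly more thorough, since you explicitly handle the degenerate all-broadcasters case (which the paper's case analysis passes over silently but which is indeed covered by the final \texttt{else} branch of the pseudocode).
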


\begin{proof}
Assume all nodes are active at the beginning of iteration $i$, and have same $a_i$ and $b_i$.

If a node $u$ changes $b_{i+1}$ to $m_i-1$, then it must have chosen to listen in the first time slot and heard nothing. This implies all nodes have chosen to listen in the first time slot, which in turn implies all nodes have heard nothing in the first time slot. Thus, all nodes must have changed $b_{i+1}$ to $m_i-1$.

If a node $u$ changes $a_{i+1}$ to $m_i+1$, then it must have chosen to listen in the first time slot and heard noise. This implies all nodes that have chosen to listen in the first time slot must also have heard noise and set $a_{i+1}$ to $m_i+1$. On the other hand, all nodes that have chosen to broadcast in the first time slot will listen in the third time slot. Moreover, in this time slot, these nodes must have heard something (noise or \textsf{under-est}) since at least $u$ will broadcast \textsf{under-est}. Thus, these nodes will also set $a_{i+1}$ to $m_i+1$.

We continue to show all nodes will terminate simultaneously. If by the end of iteration $i$ all nodes decide to continue, then we are fine. If some node $u$ decides to terminate, then we claim all other nodes must have decided to terminate as well. To see this, notice that $u$ will terminate in case one of the three following events happens: (a) $u$ finds $a_i>b_i$; (b) $u$ hears a \textsf{beacon} message in the first time slot; or (c) $u$ hears noise or a \textsf{stop} message in the fourth time slot. (Notice, at most one of these events can happen to $u$.)

In case (a) happens to $u$, by our above analysis we know all nodes find $a_i>b_i$.

In case (b) happens, there must exist one node $v\neq u$ that broadcasts alone in the first time slot. Thus, all nodes but $v$ will decide to terminate by the end of iteration $i$. As for $v$, since it broadcast in the first time slot, it will listen in the fourth time slot. Moreover, in that slot, it must have heard a \textsf{stop} message or noise, since at least $u$ will broadcast \textsf{stop} in that time slot. Thus, $v$ will terminate by the end of iteration $i$ too.

And finally, in case (c) happens, $u$ must have chosen to broadcast in the first time slot. Thus, all nodes that have chosen to listen in the first time slot will terminate by the end of iteration $i$. As for the nodes that have chosen to broadcast in the first time slot, they will listen in the fourth time slot. Since $u$ hears something in the fourth time slot, these nodes must have heard something in the fourth time slot as well. Thus, these nodes will also choose to terminate by the end of iteration $i$.
\end{proof}

We are now ready to present the detailed correctness proof. To begin with, we show that nodes can make the correct decision whenever $m_i$ is in range $[1,\lg{n}-2]$.

\begin{lemma}\label{lemma-CountSHCDConst-2}
During the execution of \CountSHCDConst, if $n\geq 10$, then with probability at least $0.75$, during iterations in which all nodes are active and $m_i$ is in range $[1, \lg{n}-2]$, all nodes will adjust $a_{i+1}$ to $m_i+1$.
\end{lemma}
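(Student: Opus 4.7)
The plan is to reduce the claim to a first-slot analysis via Lemma \ref{lemma-CountSHCDConst-1} and then union-bound over iterations. By that lemma, whenever some listening node detects ``too many broadcasters'' in slot 1 of an iteration and sets $a_{i+1} = m_i+1$, the subsequent three-slot ``broadcast the observation'' mechanism propagates this decision to every broadcasting node as well. Hence it suffices to show that, with probability at least $0.75$, in every iteration where $m_i \in [1, \lg n - 2]$ and all $n$ nodes are still active, slot~1 produces a collision that is detected by at least one listener.

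Fix such an iteration and let $X$ be the number of nodes that broadcast in slot 1, where each of the $n$ active nodes broadcasts independently with probability $p = 1/2^{m_i}$. The failure modes are exactly three: $X = 0$ (silence, listeners decrease $b$), $X = 1$ (a clean message, nodes terminate with $2^{m_i+1}$), and $X = n$ (no listener, so broadcasters see silence in slots 2--4 and terminate with $2^{m_i+1}$). In every remaining case some listening node hears noise in slot 1, broadcasts \textsf{under-est} in slot 3, and Lemma \ref{lemma-CountSHCDConst-1} does the rest. With $\lambda = np = n/2^{m_i} \geq 4$ throughout this regime, standard Poisson-style bounds give
\[
\Pr[X=0] \leq e^{-\lambda}, \qquad \Pr[X=1] \leq \lambda (1-p)^{n-1} \leq \lambda e^{-0.9\lambda}, \qquad \Pr[X=n] = p^n \leq (4/n)^n,
\]
where the $0.9$ in the middle inequality uses $(n-1)/n \geq 9/10$ for $n \geq 10$.

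Since the binary search visits each value of $m_i$ at most once, I would union-bound over iterations by summing these estimates over $\lambda = 4, 8, 16, \ldots$, that is, over $k \geq 2$ with $\lambda = 2^k$. The resulting series $\sum_{k\geq 2}(e^{-2^k} + 2^k e^{-0.9 \cdot 2^k})$ is rapidly decreasing; a direct numerical evaluation shows the $k=2$ term is about $e^{-4} + 4e^{-3.6} \approx 0.13$, and all higher-$k$ terms together contribute less than $0.01$. The $O(\lg n)$ contributions from $\Pr[X=n] \leq 2^{-n}$ sum to $o(1)$. The grand total stays well below $0.25$, yielding the required $0.75$ success probability. The main obstacle is pinning down the $\Pr[X=1]$ bound: the trivial estimate $\lambda$ is far too weak (it would grow with $n$), so one must carry the $(1-p)^{n-1}$ factor through as exponential decay in $\lambda$, which is precisely where the hypothesis $n \geq 10$ enters.
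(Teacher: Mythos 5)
Your proof is correct and follows essentially the same route as the paper's: define the bad first-slot events for each value of $m_i$, bound their probabilities using $(1-p)^{n-1}\leq e^{-0.9\lambda}$ (where $n\geq 10$ gives the $0.9$), exploit the fact that binary search visits each $m_i$ at most once, and sum to get a total failure probability below $0.25$. Two small remarks. First, you actually handle one more failure mode than the paper does: the paper's bad event is only ``at most one node broadcasts,'' which silently ignores the case $X=n$ (all nodes broadcast, no listener remains, and the broadcasters hear silence in slots 2--4 and terminate prematurely); your explicit treatment of $X=n$ is a genuine, if minor, tightening. Second, the displayed inequality $\Pr[X=n]=p^n\leq (4/n)^n$ has the wrong direction --- in this regime $\lambda=np\geq 4$ gives $p\geq 4/n$, so $(4/n)^n$ is a lower bound on $p^n$; the bound you actually use in the summation, $p^n\leq 2^{-n}$ from $m_i\geq 1$, is the correct one, so this is only a typo-level slip. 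Your summation over $\lambda=2^k$ (only the powers of two actually visited) is also slightly sharper than the paper's sum over all integers $k\in[4,n]$, though both comfortably clear the $0.25$ threshold.
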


\begin{proof}
Consider an iteration $i$ in which all nodes are active and $m_i\in [1, \lg{n}-2]$, let $\mathcal{E}_{m_i}$ be the (bad) event that in the first time slot in that iteration at most one node choose to broadcast. According to the protocol, we know $\mathbb{P}(\mathcal{E}_{m_i})=(1-1/2^{m_i})^{n}+n\cdot(1/2^{m_i})\cdot(1-1/2^{m_i})^{n-1}\leq (1-1/2^{m_i})^{n}+n\cdot(1/2^{m_i})\cdot(1-1/2^{m_i})^{0.9n}\leq e^{-n/2^{m_i}}+(n/2^{m_i})\cdot e^{-0.9n/2^{m_i}}$.

We now bound the sum of $\mathbb{P}(\mathcal{E}_{m_i})$ when $1\leq {m_i}\leq \lg{n}-2$:

\vspace{-3ex}
\begin{align*}
\sum_{{m_i}=1}^{\lg{n}-2}{\mathbb{P}(\mathcal{E}_{m_i})} & \leq \sum_{{m_i}=1}^{\lg{n}-2}{\left(e^{-n/2^{m_i}}+(n/2^{m_i})\cdot e^{-0.9n/2^{m_i}}\right)} < \sum_{k=4}^{n}{\left(e^{-k}+k\cdot e^{-0.9k}\right)}
\end{align*}

It is easy to show $\sum_{k=4}^{n}{e^{-k}}<e^{-4}/(1-e^{-1})<0.03$. On the other hand, it is also not hard to prove $\sum_{k=4}^{n}{k\cdot e^{-0.9k}}<(4(e^{-0.9})^4-3(e^{-0.9})^5)/(1-e^{-0.9})^2<0.22$. Therefore, $\sum_{{m_i}=1}^{\lg{n}-2}{\mathbb{P}(\mathcal{E}_{m_i})}<0.25$.

Notice, during the execution of \CountSHCDConst, for each ${m_i}$ where $1\leq {m_i}\leq \lg{n}-2$, that value of $m_i$ will be used in at most one iteration (due to the properties of binary search). Since $\sum_{{m_i}=1}^{\lg{n}-2}{\mathbb{P}(\mathcal{E}_{m_i})}<0.25$, we know $\mathbb{P}(\bigwedge_{{m_i}=1}^{\lg{n}-2}{\overline{\mathcal{E}_{m_i}}})>0.75$.

As a result, the lemma is proved.
\end{proof}

We then show whenever $m_i$ is in range $[\lg{n}+2, c\cdot \lg{n}]$, all nodes will also make the correct decision of decreasing the upper bound to $m_i-1$. Here, $c\cdot \lg{n}$ is the upper bound of $\lg{n}$ returned by \EstUpperSH. Particularly, $c$ is some positive integer.

\begin{lemma}\label{lemma-CountSHCDConst-3}
During the execution of \CountSHCDConst, if $n\geq 2$, then with probability at least $0.45$, during iterations in which all nodes are active and $m_i$ is in range $[\lg{n}+2, c\cdot \lg{n}]$, all nodes will adjust $b_{i+1}$ to $m_i-1$. Here, $c$ is some positive integer.
\end{lemma}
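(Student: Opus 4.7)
The plan is to identify a clean ``good event'' for each value of $m_i$ in the stated range, bound the complementary bad probabilities, and union bound across values. By Lemma~\ref{lemma-CountSHCDConst-1}, all nodes agree on $a_i$ and $b_i$, so the event ``all nodes adjust $b_{i+1}$ to $m_i - 1$'' is equivalent to at least one listener hearing silence in slot~1 of iteration~$i$ (which, by the algorithm, forces every broadcaster to detect the subsequent \textsf{over-est} traffic in slot~2 and also set its bound to $m_i-1$). Since ``some listener hears silence in slot~1'' occurs exactly when \emph{no} node chooses to broadcast in slot~1, I will take the good event to be $\mathcal{G}_{m_i}$: ``zero nodes broadcast in the first slot of the iteration in which $m_i$ is tested.''

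First, I would compute $\mathbb{P}(\overline{\mathcal{G}_{m_i}}) = 1 - (1 - 1/2^{m_i})^{n} \leq n/2^{m_i}$, using Bernoulli's inequality. Then, because \CountSHCDConst performs binary search on the integer interval $[a_i, b_i]$, each integer value of $m_i$ is used in at most one iteration during an execution. Consequently, I can take a union bound over the possible bad events $\overline{\mathcal{G}_{m_i}}$ for $m_i \in [\lg n + 2, c \lg n]$, obtaining
\[
\mathbb{P}\Bigl(\bigvee_{m_i = \lg n + 2}^{c \lg n} \overline{\mathcal{G}_{m_i}}\Bigr) \;\leq\; \sum_{m_i = \lg n + 2}^{c \lg n} \frac{n}{2^{m_i}} \;\leq\; \sum_{k = 2}^{\infty} \frac{1}{2^{k}} \;=\; \frac{1}{2}.
\]
Hence with probability at least $1/2 \geq 0.45$, the good event $\mathcal{G}_{m_i}$ holds in every iteration for which $m_i$ lies in $[\lg n + 2, c \lg n]$, and in every such iteration (assuming all nodes are still active) the protocol sets $b_{i+1} \gets m_i - 1$ globally.

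I expect the bulk of this proof to be straightforward Bernoulli/geometric-series arithmetic, essentially mirroring the bookkeeping already performed in Lemma~\ref{lemma-CountSHCDConst-2}. The only mild subtlety is the justification that each integer value of $m_i$ in the stated range is tested at most once, which I would get immediately from the binary-search structure guaranteed by Lemma~\ref{lemma-CountSHCDConst-1}: since all nodes keep identical bounds and strictly shrink the interval (either $a$ strictly increases past $m$ or $b$ strictly decreases past $m$), no value of $m$ can recur. With that observation in hand, the union bound above closes the argument.
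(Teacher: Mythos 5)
Your proposal is correct and follows essentially the same route as the paper's proof: identify the bad event ``at least one node broadcasts'' for each overestimate $m_i$, note that binary search visits each value of $m_i$ at most once, and union bound the geometric series $\sum_{m_i\geq \lg n+2} \mathbb{P}(\text{bad})$ to a constant below $0.55$. The only difference is cosmetic---you bound $1-(1-1/2^{m_i})^{n}\leq n/2^{m_i}$ directly via Bernoulli's inequality, whereas the paper routes through $1-(\beta e)^{-n/2^{m_i}}$ with $\beta=1.1$ and gets the slightly weaker constant $(1/2)\ln(\beta e)<0.55$; your version is a touch cleaner and yields success probability $1/2$ rather than $0.45$.
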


\begin{proof}
Consider an iteration $i$ in which all nodes are active and $m_i\in [\lg{n}+2, c\cdot \lg{n}]$, let $\mathcal{E}_{m_i}$ be the (bad) event that in the first time slot in that iteration at least one node choose to broadcast. We know $\mathbb{P}(\mathcal{E}_{m_i})=1-(1-1/2^{m_i})^{n}\leq 1-(\beta e)^{-n/2^{m_i}}=1-e^{-(n/2^{m_i})\cdot\ln{(\beta e)}}\leq 1-(1-(n/2^{m_i})\cdot\ln{(\beta e)})=(n/2^{m_i})\cdot\ln{(\beta e)}$. Here, $\beta=1.1$.

We now bound the sum of $\mathbb{P}(\mathcal{E}_{m_i})$ when $\lg{n}+2\leq {m_i}\leq c\cdot\lg{n}$:

\vspace{-3ex}
\begin{align*}
\sum_{{m_i}=\lg{n}+2}^{c\cdot\lg{n}}{(n/2^{m_i})\cdot\ln{(\beta e)}} & = n\cdot\ln{(\beta e)}\cdot\sum_{{m_i}=\lg{n}+2}^{c\cdot\lg{n}}{(1/2)^{m_i}} \\
& = n\cdot\ln{(\beta e)}\cdot\frac{(1/2)^{\lg{n}+2}\cdot(1-(1/2)^{c\cdot\lg{n}-\lg{n}-1})}{1-(1/2)} \\
& < n\cdot\ln{(\beta e)}\cdot\frac{(1/2)^{\lg{n}+2}}{1/2} = (1/2)\cdot\ln{(\beta e)} \\
& < 0.55
\end{align*}

Notice, during the execution of \CountSHCDConst, for each ${m_i}$ where $\lg{n}+2\leq {m_i}\leq c\cdot\lg{n}$, that value of $m_i$ will be used in at most one iteration (due to the properties of binary search). Since $\sum_{{m_i}=\lg{n}+2}^{c\cdot\lg{n}}{\mathbb{P}(\mathcal{E}_{m_i})}<0.55$, we know $\mathbb{P}(\bigwedge_{{m_i}=\lg{n}+2}^{c\cdot\lg{n}}{\overline{\mathcal{E}_{m_i}}})>0.45$, thus the lemma is proved.
\end{proof}

We continue to show that whenever $m_i\in[\lg{n}-1, \lg{n}+1]$, there is a constant probability that only one node will broadcast in the first slot in that iteration, thus allowing all nodes to obtain a correct estimate.

\begin{lemma}\label{lemma-CountSHCDConst-4}
During the execution of \CountSHCDConst, if $n\geq 4$, then in a iteration in which all nodes are active and $m_i$ is in range $[\lg{n}-1, \lg{n}+1]$, with probability at least $1/8$, all nodes will use $2^{m_i+1}$ as estimate of $n$ and terminate after this iteration.
\end{lemma}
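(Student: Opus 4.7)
The plan is to show that when $m_i\in[\lg n-1,\lg n+1]$, there is a constant probability (at least $1/8$) that \emph{exactly one} node broadcasts a \textsf{beacon} message in the first time slot of iteration $i$, and then to argue that this event alone forces all nodes to terminate with the claimed estimate.

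First, I would bound from below the probability of the isolation event. Let $p_{m_i}=1/2^{m_i}$ be the broadcast probability and let $q$ denote the probability that exactly one of the $n$ nodes transmits in slot one, so $q=n\cdot p_{m_i}\cdot(1-p_{m_i})^{n-1}$. I would split into the three cases $m_i=\lg n-1,\lg n,\lg n+1$ (equivalently $2^{m_i}=n/2,n,2n$) and, for each, check directly that $q\geq 1/8$ whenever $n\geq 4$. Concretely: for $m_i=\lg n$ we get $q=(1-1/n)^{n-1}\geq 1/e$; for $m_i=\lg n+1$ we get $q=(1/2)(1-1/(2n))^{n-1}\geq(1/2)e^{-1/2}$ in the limit, and a direct check at $n=4$ gives $q\approx 0.34$; for $m_i=\lg n-1$ we get $q=2(1-2/n)^{n-1}$, which at $n=4$ equals $1/4$ and is monotone increasing in $n$ afterward, staying above $2e^{-2}>1/8$. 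The worst of the three values exceeds $1/8$, which gives the desired lower bound.

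Next, I would show that conditioning on this isolation event, every node terminates in iteration $i$ with the estimate $2^{m_i+1}$. Let $v$ be the unique broadcaster. Every other node listens in slot one, hears a clean \textsf{beacon} message, and according to the protocol sets its estimate to $2^{m_i+1}$ and prepares to terminate, after additionally broadcasting a \textsf{stop} message in the fourth slot of this iteration. In the meantime, $v$ itself listens in slots two through four; since $n\geq 4$ there are at least three senders of \textsf{stop} in slot four, so $v$ hears either noise or the \textsf{stop} message and therefore also terminates with estimate $2^{m_i+1}$. (If instead $n-1=1$, $v$ would receive the single \textsf{stop} message cleanly; this edge case is excluded by $n\geq 4$ but the argument is the same.) By Lemma~\ref{lemma-CountSHCDConst-1}, which already guarantees synchronized termination, no additional work is required to match the decisions across nodes.

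The main obstacle here is really just verifying the three numerical bounds at the boundary value $n=4$, since this is where the expressions $(1-2/n)^{n-1}$ and $(1-1/(2n))^{n-1}$ are furthest from their large-$n$ limits. Monotonicity of these expressions in $n$ (which is standard) then extends each bound from $n=4$ to all $n\geq 4$, so the overall bound of $1/8$ is comfortably achieved across the entire range of $m_i$ and $n$ under consideration.
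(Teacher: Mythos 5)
Your proposal is correct and follows essentially the same route as the paper: lower-bound the probability that exactly one node broadcasts in the first slot for each of the three values of $m_i$ (the paper does this in one line via $n\cdot(1/2^{j})\cdot(1-1/2^{j})^{n-1}>(n/2^{j})\cdot 4^{-n/2^{j}}$, yielding $\min\{1/8,1/4,1/4\}$, while your sharper case analysis actually gives $\geq 1/4$), and then observe that isolation forces synchronized termination with estimate $2^{m_i+1}$ via the last three slots. One tiny slip: for $m_i=\lg n-1$ the quantity $2(1-2/n)^{n-1}$ increases from $1/4$ at $n=4$ \emph{toward} $2e^{-2}$, so it stays above $1/4$ rather than above $2e^{-2}$ --- this does not affect the conclusion.
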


\begin{proof}
Consider such an iteration, we know the probability that exactly one node broadcasts in the first time slot is at least $\min_{\lg{n}-1\leq j\leq\lg{n}+1}\{n\cdot(1/2^j)\cdot(1-1/2^j)^{n-1}\}>\min_{\lg{n}-1\leq j\leq\lg{n}+1}\{(n/2^j)\cdot 4^{-n/2^j}\}=\min\{1/8,1/4,1/4\}=1/8$.
\end{proof}

The above analysis immediately lead to Theorem \ref{thm-CountSHCDConst}.

\subsection{Omitted proofs for \CountSHCDHigh}\label{subsec-appx-omit-CountSHCDHigh}

Here prove the correctness of \CountSHCDHigh more carefully.

Let $p_{\alpha}^{(S)}$, $p_{\alpha}^{(M)}$, $p_{\alpha}^{(N)}$ be the probability that the channel is silent, \textsf{beacon}, noisy in the first slot in an iteration in which each node broadcasts with probability $\min\{1/(\alpha n),1\}$.

When $\alpha n\geq 1$, we know:

\vspace{-3ex}
\begin{align*}
p_{\alpha}^{(S)} & = \left(1-\frac{1}{\alpha n}\right)^{n} \\
p_{\alpha}^{(M)} & = n\cdot\frac{1}{\alpha n}\cdot\left(1-\frac{1}{\alpha n}\right)^{n-1} \\
p_{\alpha}^{(N)} & = 1-p_{\alpha}^{(S)}-p_{\alpha}^{(M)}
\end{align*}

Therefore, for $n\geq 10$, when $\alpha n=\hat{n}\geq 16n$:

\vspace{-3ex}
\begin{align*}
p_{\alpha}^{(S)} & = \left(1-\frac{1}{\hat{n}}\right)^{n}\geq (1.1e)^{-n/\hat{n}}\geq (1.1e)^{-1/16} > 0.93 \\
p_{\alpha}^{(M)} & = \frac{n}{\hat{n}}\cdot\left(1-\frac{1}{\hat{n}}\right)^{n-1}\leq  \frac{n}{\hat{n}}\cdot e^{-0.9n/\hat{n}}\leq \frac{1}{16}\cdot e^{-0.9/16} < 0.06\\
p_{\alpha}^{(N)} & = 1-p_{\alpha}^{(S)}-p_{\alpha}^{(M)} < 1-p_{\alpha}^{(S)} <0.07
\end{align*}

Similarly, for $n\geq 50$, when $1\leq \alpha n=\hat{n}\leq n/4$:

\vspace{-3ex}
\begin{align*}
p_{\alpha}^{(S)} & = \left(1-\frac{1}{\hat{n}}\right)^{n}\leq e^{-n/\hat{n}}\leq e^{-4} < 0.02 \\
p_{\alpha}^{(M)} & = \frac{n}{\hat{n}}\cdot\left(1-\frac{1}{\hat{n}}\right)^{n-1} \leq \frac{n}{\hat{n}}\cdot e^{-0.98n/\hat{n}} \leq 4e^{-4\cdot 0.98} < 0.08 \\
p_{\alpha}^{(N)} & = 1-p_{\alpha}^{(S)}-p_{\alpha}^{(M)} > 1-0.02-0.08 = 0.9
\end{align*}

Define $\mathcal{G}$ to be the interval $[n/4,16n]$. We divide iterations into the following types:

\begin{itemize}
	\item a \emph{good iteration}: an iteration in which the estimates used by nodes in the current iteration and the next iteration are both in $\mathcal{G}$ (notice these two estimates may be different);
	\item an \emph{improving iteration}: an iteration in which the current estimate is not in $\mathcal{G}$, but the next estimate moves towards $\mathcal{G}$;
	\item a \emph{stationary iteration}: an iteration in which the current estimate is not in $\mathcal{G}$, and the next estimate is unchanged (i.e., same with the current one);
	\item a \emph{bad iteration}: an iteration in which the current estimate is not in $\mathcal{G}$ or is the minimum or the maximum possible estimate in $\mathcal{G}$, and the next estimate moves away from $\mathcal{G}$;
\end{itemize}

During the execution of \CountSHCDHigh, let $G,I,S,B$ be the number of good, improving, stationary, and bad iterations, respectively. Let $l=a\lg{n}$ denote the number of iterations of \CountSHCDHigh, where $a$ is some sufficiently large constant.

According to our previous analysis, if in one iteration the current estimate is not in $\mathcal{G}$ or is the minimum or the maximum possible estimate in $\mathcal{G}$, then the probability that this iteration is a bad iteration is at most $\max\{0.07,0.02\}=0.07$. I.e., for each iteration, the probability that it is a bad iteration is at most $0.07$. Apply a standard Chernoff bound, we know the number of bad iterations is less than $1.01\cdot 0.07l<0.072l$, with high probability in $n$. I.e., $B<0.072l$, with high probability in $n$.

Similarly, $S<1.01\cdot\max\{0.06,0.08\}\cdot l<0.082l$, with high probability in $n$.

On the other hand, define $d=\lceil\log_{4}{(\hat{N}/(16n))}\rceil=b\lg{n}$, where $b$ is some bounded constant. According to our protocol description, we know $I\leq B+d$. Now, notice that after each iteration, the estimate is either unchanged, or is updated to an adjacent estimate (i.e., increased or decreased by a factor of four). Hence, for any estimate that is not in $\mathcal{G}$, the number of iterations in which this estimate is used is at most $\lceil B/2\rceil+\lceil I/2\rceil+S\leq 0.036l+(0.036l+d/2)+0.082l=0.154l+d/2<0.16l$.

Recall $G=l-B-I-S>l-0.072l-(0.072l+d)-0.082l=0.774l-d>0.76l$. Moreover, notice that there are at most four estimates contained in $\mathcal{G}$. Hence, during the execution of \CountSHCDHigh, there exists at least one estimate in $\mathcal{G}$ that is used in at least $0.76l/4=0.19l$ iterations.

By now, we have proved the correctness of \CountSHCDHigh.

\subsection{Omitted proofs for \CountAllnoCDa}\label{subsec-appx-omit-CountAllnoCDa}

First, we show that for any node $u$, by the end of iteration $\lg{(n_u/(b\ln{n_u}))}$, it must have not obtained its estimate yet. Here, $b\geq 1$ is a sufficiently large constant.

\begin{lemma}\label{lemma-CountAllnoCDa-1}
During the execution of \CountAllnoCDa, for a node $u$, if $n_u$ is sufficiently large, then for each iteration $i$ where $1\leq i\leq \lg{(n_u/(b\ln{n_u}))}$, $u$ will not obtain its estimate after iteration $i$, with high probability in $n_u$. Here, $b\geq 1$ is a sufficiently large constant.
\end{lemma}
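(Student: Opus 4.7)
The plan is to show that, in every iteration $i$ with $1 \leq i \leq \lg(n_u/(b\ln n_u))$, the expected number of slots in which $u$ both listens and receives a clean \textsf{beacon} is polynomially small in $n_u$. Since obtaining the estimate requires at least a $1/10$ fraction of $u$'s listening slots in some iteration to carry a \textsf{beacon}---and in particular at least one such slot---a simple Markov bound followed by a union bound over the $O(\lg n_u)$ relevant iterations will suffice.

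First I would compute the per-slot beacon probability. In iteration $i$, each neighbor of $u$ independently transmits a \textsf{beacon} with probability $(1/2)\cdot(1/2^i) = 1/2^{i+1}$ (it must first be chosen as a broadcaster, and then send). Since there is no collision detection, $u$ (when listening) receives a \textsf{beacon} exactly when precisely one of its $n_u$ neighbors transmits, so
\[
p_i \;=\; n_u\cdot\frac{1}{2^{i+1}}\cdot\Bigl(1-\frac{1}{2^{i+1}}\Bigr)^{n_u-1}.
\]
For $i \leq \lg(n_u/(b\ln n_u))$ we have $1/2^{i+1} \geq (b\ln n_u)/(2n_u)$, and thus
\[
\Bigl(1-\tfrac{1}{2^{i+1}}\Bigr)^{n_u-1} \;\leq\; \exp\!\Bigl(-\tfrac{(n_u-1)\cdot b\ln n_u}{2n_u}\Bigr) \;\leq\; n_u^{-b/3}
\]
for sufficiently large $n_u$. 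Combined with the trivial bound $n_u\cdot 2^{-(i+1)} \leq n_u/2$, this yields $p_i \leq n_u^{1-b/3}/2$.

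Next, let $X_i$ denote the number of slots in iteration $i$ in which $u$ listens and hears a clean \textsf{beacon}. Iteration $i$ has $\Theta(i) = O(\lg n_u)$ slots, and in each slot the event ``$u$ listens and hears a \textsf{beacon}'' has probability $(1/2)\cdot p_i$, so linearity of expectation gives $\mathbb{E}[X_i] \leq O(\lg n_u)\cdot n_u^{1-b/3}$. Because obtaining an estimate after iteration $i$ requires $X_i \geq 1$ (if $u$ never listens in this iteration the fraction is undefined and the algorithm does not commit), Markov's inequality yields
\[
\mathbb{P}(u \text{ obtains its estimate after iteration } i) \;\leq\; \mathbb{P}(X_i \geq 1) \;\leq\; \mathbb{E}[X_i] \;\leq\; O(\lg n_u)\cdot n_u^{1-b/3},
\]
which is at most $n_u^{-\gamma}$ for any prescribed constant $\gamma \geq 1$ once $b$ is taken sufficiently large (e.g.\ $b \geq 9$ already gives $\gamma\geq 1$ after absorbing the logarithmic factor). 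A final union bound over the at most $\lg n_u$ iterations in the range completes the argument.

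The main obstacle is calibrating the tail estimate for $p_i$: one must choose the upper endpoint of the range precisely so that the exponential factor $(1-1/2^{i+1})^{n_u-1}$ already crosses a polynomial-in-$n_u$ threshold of the form $n_u^{-\Theta(b)}$, so that a single constant $b$ suffices to dominate both the iteration length $O(\lg n_u)$ and the number of iterations $O(\lg n_u)$ in the concluding union bound. Everything else is routine: once $p_i$ is shown to be polynomially small, Markov's inequality applied to $X_i$ already beats the $1/10$ threshold without needing any Chernoff concentration.
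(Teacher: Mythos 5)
Your proof is correct and follows essentially the same route as the paper's: show that the per-slot probability of $u$ receiving a clean \textsf{beacon} is $n_u^{-\Theta(b)}$ because in this range the per-neighbor transmission probability is at least $(b\ln n_u)/(2n_u)$, and then union bound over the $O(\lg^2 n_u)$ relevant slots/iterations. The only difference is cosmetic: the paper first applies a Chernoff bound to argue that at least $n_u/3$ neighbors are broadcasters in each listening slot and then bounds the beacon probability conditioned on that count, whereas you fold the broadcaster coin directly into a single per-neighbor transmission probability of $1/2^{i+1}$ and compute the unconditional probability in one step---a mild simplification that is equally valid.
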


\begin{proof}
Consider a time slot in which $u$ decides to listen. In expectation, $n_u/2$ neighbors of $u$ will choose to broadcast. A standard Chernoff bound implies with probability at least $1-e^{-n_u/144}$, at least $n_u/3$ neighbors of $u$ will choose to broadcast. For sufficiently large $n_u$, this means at least $n_u/3$ neighbors of $u$ will choose to broadcast, with high probability in $n_u$. Take a union bound over the $O(\lg^2{n_u})$ time slots during the first $\lg{(n_u/(b\ln{n_u}))}$ iterations, we know during these iterations, whenever $u$ chooses to listen, at least $n_u/3$ neighbors of $u$ will broadcast, with high probability in $n_u$.

Now, consider a time slot in iterations $i$ in which $u$ decides to listen, where $1\leq i\leq \lg{(n_u/(b\ln{n_u}))}$. Assume in this slot $x$ neighbors of $u$ broadcast, where $n_u/3\leq x\leq n_u$. We know the probability that $u$ hears a \textsf{beacon} message is $x\cdot(1/2^i)\cdot(1-1/2^i)^{x-1}\leq x\cdot(1/2^i)\cdot(1-1/2^i)^{x/2}\leq n_u\cdot(1/2)\cdot(1-(b\ln{n_u})/n_u)^{n_u/6}\leq (n_u/2)\cdot e^{-(b\ln{n_u}/n_u)\cdot(n_u/6)}=(n_u/2)\cdot e^{-(b/6)\ln{n_u}}={n_u}^{-\Theta(1)}$. Take a union bound over all the $O(\lg^2{n_u})$ time slots during the first $\lg{(n_u/(b\ln{n_u}))}$ iterations, we know the probability that $u$ hears a \textsf{beacon} message when it chooses to listen is at most ${n_u}^{-\Theta(1)}$. This proves the lemma.
\end{proof}

We then consider iterations $\lg{(n_u/(b\ln{n_u}))}$ to $\lg{n_u}-4$, and show that $u$ will not decide its estimate during these iterations as well.

\begin{lemma}\label{lemma-CountAllnoCDa-2}
During the execution of \CountAllnoCDa, for a node $u$, if $n_u$ is sufficiently large and $u$ has not obtained its estimate by the end of iteration $\lg{(n_u/(b\ln{n_u}))}-1$, then for each iteration $i$ where $\lg{(n_u/(b\ln{n_u}))}\leq i\leq\lg{n_u}-4$, $u$ will not obtain its estimate after iteration $i$, with high probability in $n_u$. Here, $b\geq 1$ is a sufficiently large constant.
\end{lemma}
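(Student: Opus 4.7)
The plan is to mirror the concentration argument of Lemma~\ref{lemma-CountSHnoCDHigh-2}, adapting it to the two-stage randomization (broadcaster coin times decay coin) of \CountAllnoCDa. Fix an iteration $i$ with $\lg(n_u/(b\ln n_u)) \leq i \leq \lg n_u - 4$, and condition on the event that $u$ has not yet obtained its estimate. Since iteration $i$ consists of $\Theta(i) = \Theta(\lg n_u)$ slots and $u$ listens in each slot independently with probability $1/2$, a standard Chernoff bound yields that $u$ listens in at least $l_i = \Omega(\lg n_u)$ of these slots, with high probability in $n_u$. Conditioned on $u$ listening in a given slot, each neighbor of $u$ transmits a \textsf{beacon} independently with probability $p = 1/2^{i+1}$, so the per-slot probability that $u$ hears a \textsf{beacon} equals $f(p) = n_u \cdot p \cdot (1-p)^{n_u-1}$.

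Next I would bound $f(p)$ uniformly over the iteration range. The function $f$ is unimodal with peak at $p = 1/n_u$ and is decreasing for $p > 1/n_u$; because $i \leq \lg n_u - 4$ forces $p \geq 8/n_u > 1/n_u$, the maximum of $f$ over the range is attained at $i = \lg n_u - 4$, giving $f(p) \leq 8 (1-8/n_u)^{n_u-1} \leq 8 e^{-5.6}$ for $n_u$ large enough (using $(1-8/n_u)^{0.7 n_u} \leq e^{-5.6}$). Let $X_i$ count the listening slots of iteration $i$ in which $u$ hears a \textsf{beacon}, and set $t_i = 24 e^{-5.6} \cdot l_i$. Since $\mathbb{E}[X_i] \leq 8 e^{-5.6} \cdot l_i$, the threshold $t_i$ is at least three times the mean, so Claim~\ref{claim-chernoff-bound-variant} gives
\[
\mathbb{P}(X_i \geq t_i) < (e/3)^{t_i} = n_u^{-\Omega(1)}.
\]
Because $24 e^{-5.6} < 1/10$, the event $\{X_i < t_i\}$ implies that $u$ observes strictly less than a $1/10$ fraction of clear-message slots during iteration $i$, and hence $u$ does not commit an estimate at the end of that iteration. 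A final union bound over the $O(\lg n_u)$ iterations in the stated range, together with the ``enough listening slots'' events, completes the argument.

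The main technical obstacle, as already encountered in Lemma~\ref{lemma-CountSHnoCDHigh-2}, is quantitative: the upper endpoint of the iteration range must be large enough that the expected per-slot beacon probability $f(p)$ stays comfortably below the $1/10$ threshold, yet small enough that $l_i = \Omega(\lg n_u)$, so that the Chernoff bound survives the union bound with high probability in $n_u$. The extra factor of two in the transmission probability (from the broadcaster/listener split) effectively shifts the usable range of $i$ downward by one compared with the single-hop analysis, which is exactly why the upper endpoint here is $\lg n_u - 4$ rather than $\lg n - 3$.
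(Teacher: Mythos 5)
Your proposal is correct and follows the same skeleton as the paper's proof: bound the expected per-slot probability of a clear \textsf{beacon} reception by roughly $8e^{-c}$ for the relevant range of $i$, apply the Chernoff variant of Claim~\ref{claim-chernoff-bound-variant} with a threshold three times the mean to get a failure probability of $(e/3)^{\Omega(\lg n_u)}=n_u^{-\Omega(1)}$, observe that the threshold sits below the $1/10$ cutoff, and union bound over the $O(\lg n_u)$ iterations. The one place you genuinely diverge is in handling the two-stage randomization: the paper first proves (via a separate Chernoff bound and union bound over all slots) that the number of broadcasting-role neighbors in each listening slot is concentrated in $[0.41 n_u, 0.6 n_u]$, and only then bounds the conditional probability of an isolated beacon; you instead integrate out the broadcaster/listener coin, noting that each neighbor broadcasts independently with probability exactly $1/2^{i+1}$ per slot, which gives the per-slot clear-reception probability in closed form as $n_u p(1-p)^{n_u-1}$ with $p=1/2^{i+1}\geq 8/n_u$. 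Your route is cleaner---it eliminates an entire concentration lemma and the associated union bound---at the cost of slightly different constants ($8e^{-5.6}$ versus the paper's $9.6e^{-6.4}$), both of which comfortably clear the $1/10$ threshold; your closing remark correctly identifies why the endpoint is $\lg n_u-4$ rather than the single-hop $\lg n-3$.
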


\begin{proof}
Firstly, it is easy to prove that during iteration $\lg{(n_u/(b\ln{n_u}))}$ to $\lg{n_u}-4$, whenever $u$ chooses to listen, at least $0.41n_u$ and at most $0.6n_u$ neighbors of $u$ will broadcast, with high probability in $n_u$. (This can be proved by a similar argument shown in the first paragraph of the proof for Lemma \ref{lemma-CountAllnoCDa-1}.)

Assume the $i$\textsuperscript{th} iteration contains $ai$ time slots, where $a$ is a sufficiently large constant. We know in iteration $i$, $u$ will choose to listen in $ai/2$ time slots, in expectation. Apply a Chernoff bound and a union bound, we know in each of the iterations from $\lg{(n_u/(b\ln{n_u}))}$ to $\lg{n_u}-4$, $u$ will choose to listen in at least $0.99a/2\cdot i$ time slots, and at most $1.01a/2\cdot i$ time slots, with high probability in $n_u$.

Assume $u$ chooses to listen in $l_i$ time slots in iteration $i$, and $n_j$ neighbors of $u$ decide to broadcast in the $j$\textsuperscript{th} slot of these $l_i$ slots. Let $X_i$ be a random variable denoting the number of slots in which $u$ hears \textsf{beacon} during the $l_i$ listening slots in iteration $i$. Let $\mu_i=\mathbb{E}(X_i)=\sum_{j=1}^{l_i}{(n_j\cdot(1/2^i)\cdot(1-1/2^i)^{n_j-1})}$. When $\lg{(n_u/(b\ln{n_u}))}\leq i\leq\lg{n_u}-4$, we know: $\mu_i\leq \sum_{j=1}^{l_i}{(n_j\cdot(16/n_u)\cdot(1-16/n_u)^{n_j-1})}\leq l_i\cdot 0.6n_u\cdot(16/n_u)\cdot(1-16/n_u)^{0.41n_u-1}\leq l_i\cdot 0.6n_u\cdot(16/n_u)\cdot(1-16/n_u)^{0.4n_u}\leq 9.6l_i\cdot e^{-(16/n_u)\cdot 0.4n_u}=9.6e^{-6.4}\cdot l_i$. Notice, the first inequality holds since $\mu_i$ is non-decreasing when $\lg{(n_u/(b\ln{n_u}))}\leq i\leq\lg{n_u}-4$.

Define $t_i=28.8e^{-6.4}\cdot l_i$ and $\delta_i=t_i/\mu_i$. Due to Claim \ref{claim-chernoff-bound-variant}, we know $\mathbb{P}(X_i\geq t_i)<(e\mu_i/t_i)^{t_i}=(e/t_i)^{t_i}\cdot{\mu_i}^{t_i}\leq (e/t_i)^{t_i}\cdot{(9.6e^{-6.4}\cdot l_i)}^{t_i}=(e/3)^{t_i}\leq (e/3)^{28.8e^{-6.4}\cdot\Theta(\lg{(n_u/(b\ln{n_u}))})}=(e/3)^{\Theta(\lg{n_u})}=1/{n_u}^{\Theta(1)}$. Take a union bound over iterations $\lg{(n_u/(b\ln{n_u}))}$ to $\lg{n_u}-4$, we know with high probability in $n_u$, node $u$ will hear \textsf{beacon} messages in at most $28.8e^{-6.4}<1/10$ fraction of listening slots during each of iterations $\lg{(n_u/(b\ln{n_u}))}$ to $\lg{n_u}-4$. According to our protocol, this is not enough for $u$ to decide its estimate.
\end{proof}

The last key technical lemma for \CountAllnoCDa states that $u$ must have decided its estimate by the end of iteration $\lg{n_u}-1$.

\begin{lemma}\label{lemma-CountAllnoCDa-3}
During the execution of \CountAllnoCDa, for a given node $u$, if $n_u$ is sufficiently large, then by the end of iteration $\lg{n_u}-1$, it must have obtained its estimate, with high probability in $n_u$.
\end{lemma}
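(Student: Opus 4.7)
The plan is to show that if $u$ has not already committed to an estimate during an earlier iteration, then it must do so during iteration $i^\star = \lg n_u - 1$, with high probability in $n_u$. Specifically, it suffices to demonstrate that the fraction of listening slots in iteration $i^\star$ in which $u$ hears a clean \textsf{beacon} exceeds the $1/10$ threshold used by the algorithm. Since $i^\star$ lies beyond the iteration range $[1,\lg n_u - 4]$ covered by Lemmas~\ref{lemma-CountAllnoCDa-1} and \ref{lemma-CountAllnoCDa-2}, and since the algorithm commits the first time the threshold is crossed, any earlier crossing only strengthens the conclusion.

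Iteration $i^\star$ comprises $\Theta(\lg n_u)$ slots, and in each slot $u$ independently chooses to listen with probability $1/2$. A standard Chernoff bound yields that the number of listening slots $L$ in this iteration satisfies $L \geq c\lg n_u$ for some constant $c>0$, with high probability in $n_u$; condition on this event. In any given listening slot, each neighbor $v$ of $u$ independently broadcasts a \textsf{beacon} with probability $\frac{1}{2}\cdot\frac{1}{2^{i^\star}}=\frac{1}{n_u}$, so the probability that $u$ actually hears a \textsf{beacon} (i.e.\ exactly one neighbor broadcasts) equals
\[
n_u\cdot\frac{1}{n_u}\cdot\left(1-\frac{1}{n_u}\right)^{n_u-1}\;\geq\;\frac{1}{e},
\]
using the fact that $(1-1/n_u)^{n_u-1}$ is decreasing in $n_u$ with limit $1/e$. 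Because, conditional on $L$, these are $L$ independent Bernoulli trials with common success probability at least $1/e$, a second Chernoff bound with deviation chosen so that $(1-\delta)/e>1/10$ shows that the observed fraction of ``heard'' slots is at least $1/10$, again with high probability in $n_u$. A union bound over the two concentration failures completes the argument, and $u$ commits the estimate $2^{i^\star+3}=4n_u$ (or a smaller value set in iteration $\lg n_u - 3$ or $\lg n_u - 2$).

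There is no genuine technical obstacle here: the entire argument is a pair of Chernoff bounds. The only care needed is in the choice of anchor iteration. Iteration $\lg n_u - 3$ has expected hit fraction roughly $4e^{-4}\approx 0.073 < 1/10$, so concentration cannot push it above the threshold and it cannot be used; iteration $\lg n_u - 1$, with expected hit fraction close to $1/e$, leaves a comfortable gap of more than $0.26$ above $1/10$, which makes the Chernoff step go through uniformly in $n_u$ for all sufficiently large $n_u$.
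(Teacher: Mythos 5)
Your proof is correct and follows essentially the same route as the paper's: condition on having $\Theta(\lg n_u)$ listening slots via one Chernoff bound, show the per-slot probability of hearing a clean \textsf{beacon} is a constant comfortably above the $1/10$ threshold, and apply a second Chernoff bound over those slots. The only (harmless) difference is that you fold the broadcaster/listener coin into each neighbor's per-slot broadcast probability to obtain the bound of at least $1/e$ directly, whereas the paper first concentrates the number of broadcaster-role neighbors in $[0.4n_u,0.6n_u]$ and only then bounds the clean-slot probability (obtaining $>0.15$); both yield the same conclusion.
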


\begin{proof}
Without loss of generality, assume $u$ has not decided its estimate by the end of iteration $\lg{n_u}-2$. In iteration $\lg{n_u}-1$, it is easy to show that whenever $u$ chooses to listen, at least $0.4n_u$ and at most $0.6n_u$ neighbors of $u$ will broadcast, with high probability in $n_u$. (This can be proved by a similar argument shown in the first paragraph of the proof for Lemma \ref{lemma-CountAllnoCDa-1}.) Moreover, assuming iteration $\lg{n_u}-1$ contains $a(\lg{n_u}-1)$ time slots where $a$ is a sufficiently large constant, then a Chernoff bound implies $u$ will choose to listen in at least $0.9a(\lg{n_u}-1)/2$ time slots, and at most $1.1a(\lg{n_u}-1)/2$ time slots, with high probability in $n_u$.

Assume $u$ chooses to listen in $l=\Theta(\lg{n_u})$ time slots in iteration $\lg{n_u}-1$, and $n_j$ neighbors of $u$ decide to broadcast in the $j$\textsuperscript{th} slot of these $l$ slots. Let $X$ be a random variable denoting the number of slots in which $u$ hears a \textsf{beacon} message during the $l$ listening slots in iteration $\lg{n_u}-1$. We know $\mathbb{E}(X)=\sum_{j=1}^{l}{(n_j\cdot(2/n_u)\cdot(1-2/n_u)^{n_j-1})}>l\cdot 0.4n_u\cdot(2/n_u)\cdot(1-2/n_u)^{0.6n_u}=0.8l\cdot (1-2/n_u)^{0.6n_u}>0.8l\cdot 4^{-(2/n_u)\cdot 0.6n_u}=0.8l/4^{1.2}>0.15l$. Apply a standard Chernoff bound we know $X$ will be at least $l/10$, with probability at least $1-e^{-\Theta(l)}=1-1/{n_u}^{\Theta(1)}$. Hence, $u$ will decide its estimate after iteration $\lg{n_u}-1$ if it has not done so already, with high probability in $n_u$.
\end{proof}

The above lemmas immediately give the correctness proof of \CountAllnoCDa.

\subsection{Omitted description and analysis for \CountAllnoCDb}\label{subsec-appx-omit-CountAllnoCDb}

We first give a specification for each iteration. Each iteration contains $\lg{N}$ sub-iterations, each of which has $\Theta(\lg{N})$ time slots. In each time slot in an iteration, a listener will simply listen. On the other hand, in each time slot in the $i$\textsuperscript{th} sub-iteration, each broadcaster will broadcast a \textsf{beacon} message with probability $1/2^i$. Moreover, for a listener $u$, if by the end of sub-iteration $i$, for the first time since the beginning of this iteration, it has heard \textsf{beacon} messages in at least $1/40$ fraction of slots within this sub-iteration, then $u$ will use $2^{i+2}$ as its estimate for the number of broadcasting neighbors (in this iteration).

We still need to prove Lemma \ref{lemma-CountAllnoCDb-1}.

\begin{proof}[Proof of Lemma \ref{lemma-CountAllnoCDb-1}.]
In sub-iteration $i$, in a slot, the probability that $u$ will hear a \textsf{beacon} message is $m\cdot(1/2^{i})\cdot(1-1/2^{i})^{m-1}$. Hence, if the length of a sub-iteration is $a\lg{N}$ where $a$ is some sufficiently large constant, then in expectation, the listener will hear $a\cdot m\cdot(1/2^{i})\cdot(1-1/2^{i})^{m-1}\cdot\lg{N}$ messages in round $i$. Let random variable $X_i$ denote this number, we have $\mathbb{E}(X_i)=a\cdot m\cdot(1/2^{i})\cdot(1-1/2^{i})^{m-1}\cdot\lg{N}$.

Define function $f(x)=a\cdot m\cdot x\cdot(1-x)^{m-1}\cdot\lg{N}$. (Here, $x$ is effectively representing $1/2^{i}$.) The first order derivative of $f(x)$ is $f'(x)=(a\cdot m\cdot(1-x)^{m-2}\cdot\lg{N})\cdot(1-x\cdot m)$. Hence, $f(x)$ is maximized when $xm=1$. In turn, this suggests, when $2^{i}=m$, the value of $\mathbb{E}(X_i)$ will be maximized.

Therefore, when $1\leq 2^{i}\leq m/8$, we know $\mathbb{E}(X_i)\leq a\cdot m\cdot(8/m)\cdot(1-8/m)^{m-1}\cdot\lg{N}=8a\lg{N}\cdot(1-8/m)^{m-1}\leq 8a\lg{N}\cdot e^{-8(m-1)/m}\leq (a\lg{N})\cdot(8e^{-8(7m/8)/m})\leq(a\lg{N})\cdot(8e^{-7})$. Since nodes' choices made among different slots are independent, apply a Chernoff bound and we know, when $2^{i}\leq m/8$, in one sub-iteration, the maximum fraction of time slots in which the listener can hear a message is $(1+\delta)\cdot(8e^{-7})$, w.h.p.\ in $N$. Here, when $a$ is sufficiently large, $0<\delta<1$ is an arbitrarily small constant.

On the other hand, when $m\geq 2^{i}\geq m/2\geq 2$, we know $\mathbb{E}(X_i)\geq a\cdot m\cdot(2/m)\cdot(1-2/m)^{m-1}\cdot\lg{N}=2a\lg{N}\cdot(1-2/m)^{m-1}\geq 2a\lg{N}\cdot e^{-4(m-1)/m}\geq (a\lg{N})\cdot(2e^{-4})$. Since nodes' choices made among different slots are independent, apply a Chernoff bound and we know, when $m\geq 2^{i}\geq m/2$, in one sub-iteration, the minimum fraction of time slots in which the listener can hear a message is $(1-\delta)\cdot(2e^{-4})$, w.h.p.\ in $N$. Again, when $a$ is sufficiently large, $0<\delta<1$ is an arbitrarily small constant.

Recall our algorithm asks $u$ to decide its estimate for an iteration in the first sub-iteration in which the fraction of clear message slots is at least $1/40$. With a union bound over the $\lg{N}$ sub-iterations, we know $u$ will not decide its estimate when $2^{i}\leq m/8$, and must have decided its estimate when $2^{i}\geq m$. (Since $8e^{-7}<1/40<2e^{-4}$.) Hence, we know the listener will obtain an estimate in range $[m,4m]$, with high probability in $N$.
\end{proof}

\subsection{Omitted proofs for \CountAllCDa}\label{subsec-appx-omit-CountAllCDa}

\begin{proof}[Proof sketch of Theorem \ref{thm-CountAllCDa}.]
Consider a node $u$. According to the protocol, it is easy to see all neighbors of $u$ will not terminate until $u$ has obtained its estimate for $n_u$. Thus, by the analysis of \CountAllnoCDa, we can conclude $u$ will obtained an estimate of $n_u$ in range $[n_u,4n_u]$ within $O(\lg^2{n_u})$ time slots, with high probability in $n_u$.

We now analyze when will $u$ terminate. We know by the end of part one of iteration $\lg{n_u}-1$, node $u$ must have obtained its estimate of $n_u$, with high probability in $n_u$. Assume by the end of part one of iteration $\lg{n_u}-1$, node $u$ indeed has obtained its estimate of $n_u$. Consider a neighbor $v$ of $u$. If $n_v\leq n_u$, then by the end of part one of iteration $\lg{n_u}-1$, we know with probability at least $1-1/{n_v}$, node $v$ has already obtained its estimate of $n_v$, and will not broadcast \textsf{continue} to try to stop $u$ from terminating. On the other hand, if $n_v> n_u$, then by the end of part one of iteration $\lg{n_v}-1$, we know with probability at least $1-1/{n_v}$, node $v$ has already obtained its estimate of $n_v$, and will not broadcast \textsf{continue} to try to stop $u$ from terminating. To sum up, by the end of part one of iteration $\max\{\lg{n_u},\lg{n_v}\}$, node $v$ will not broadcast \textsf{continue} to try to stop $u$ from terminating, with probability at least $1-1/{n_v}$. Take a union bound over all neighbors of $u$, we know by the end of part one of iteration $\max\{\lg{n_u},\max_{v\in\Gamma_u}\{\lg{n_v}\}\}$, no neighbor of $u$ will try to stop $u$ from terminating, with probability at least $1-\sum_{v\in\Gamma_u\cup\{u\}}(1/{n_v})$.
\end{proof}

\section{Upper Bounds for Multi-hop with Designated Node Counting}\label{sec-appx-omit-alg-center}

\subsection{No collision detection}\label{subsec-appx-omit-alg-center-noCD}

\subparagraph*{Constant probability of success.} \CountCenternoCDConst contains multiple iterations, each of which contains $l+1$ time slots. Here, $l$ is a constant, and can be adjusted to achieve desirable correctness guarantees. In the $i$\textsuperscript{th} iteration, nodes assume $n_w\approx 2^{i}$ and use the first $l$ time slots to determine the accuracy of the estimate. (Recall $w$ denotes the designated node.) More specifically, $w$ will listen in all these $l$ time slots; while each neighbor of $w$ will broadcast a \textsf{beacon} message with probability $1/2^{i}$ in each of these time slots. After these $l$ time slots, for the designated node $w$, if the fraction of time slots in which it heard \textsf{beacon} messages is at least $1/2e$, then it will broadcast a \textsf{stop} message in the last time slot to inform all nodes to terminate, and itself will use $2^{i+2}$ as the estimate of $n_w$. Otherwise, $w$ will keep silent in the last time slot to instruct all nodes to continue into the next iteration.

Similar to \CountSHnoCDConst, in order to prove the correctness of \CountCenternoCDConst, we first show that by the end of iteration $\lg{n_w}-3$, all nodes are still active, with at least some constant probability (that can be arbitrarily close to one). The high level strategy for proving this claim is similar to that of proving Lemma \ref{lemma-CountSHnoCDConst-1}, but the actual proof is more involved.

\begin{lemma}\label{lemma-CountCenternoCDConst-1}
During the execution of \CountCenternoCDConst, if $n_w\geq 10$ and $l$ is sufficiently large, then by the end of iteration $\lg{n_w}-3$, designated node $w$ and all of its neighbors are still active, with probability at least $1-\epsilon$. Here, $0<\epsilon<1$ is an arbitrarily small constant.
\end{lemma}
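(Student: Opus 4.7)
The proof plan is to reduce the lemma to a concentration bound on the designated node $w$'s local statistic. Observe that a neighbor of $w$ terminates in iteration $i$ only upon hearing \textsf{stop} from $w$ in the last slot of that iteration, and $w$ itself terminates only when it decides to broadcast \textsf{stop}. Hence it is enough to upper bound the probability that $w$ broadcasts \textsf{stop} in some iteration $i$ with $1 \leq i \leq \lg n_w - 3$.

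Fix such an $i$ and set $k_i := n_w/2^i$, noting that the range of $i$ ensures $k_i \geq 8$. In each of the $l$ probing slots of iteration $i$, the probability that exactly one neighbor of $w$ broadcasts---so that $w$ hears a \textsf{beacon}---equals
\[
p_i := n_w \cdot \frac{1}{2^i} \cdot \left(1 - \frac{1}{2^i}\right)^{n_w - 1}.
\]
Using $(1-x) \leq e^{-x}$ together with the hypothesis $n_w \geq 10$ yields $p_i \leq k_i\, e^{-0.9 k_i}$, which shrinks super-exponentially as $i$ decreases (i.e., as $k_i$ grows). Let $X_i$ count the \textsf{beacon} slots in iteration $i$, so that $X_i \sim \mathrm{Binomial}(l, p_i)$; the false-positive event in iteration $i$ is exactly $X_i \geq l/(2e)$. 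A standard multiplicative Chernoff bound of the form $\Pr[X \geq a] \leq (e l p / a)^a$, applied with $a = l/(2e)$, gives
\[
\Pr\!\left[X_i \geq \frac{l}{2e}\right] \leq \left(2 e^2 p_i\right)^{l/(2e)} \leq \left(2 e^2 k_i\, e^{-0.9 k_i}\right)^{l/(2e)}.
\]

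Summing over iterations $1 \leq i \leq \lg n_w - 3$ amounts to summing over $k_i \in \{8, 16, 32, \ldots, n_w/2\}$. Because $k_i$ doubles from one iteration to the next and the bound above is doubly exponentially small in $k_i$, the series is dominated by its largest term, at $k_i = 8$: a short direct calculation shows the total is at most $2 \cdot (16 e^2 e^{-7.2})^{l/(2e)} \leq 2 \cdot (0.09)^{l/(2e)}$, a quantity depending only on $l$ and not on $n_w$. Choosing the constant $l$ sufficiently large as a function of $\epsilon$ drives this bound below $\epsilon$, and a final union bound over the iterations completes the proof.

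The main obstacle---and the reason the argument is ``more involved'' than that of Lemma~\ref{lemma-CountSHnoCDConst-1}---is that one must union bound over $\Theta(\lg n_w)$ iterations while keeping $l$ a constant independent of $n_w$. A naive concentration bound yields per-iteration failure probability $e^{-\Theta(l)}$, which would force $l$ to grow like $\lg \lg n_w$ to absorb the union bound. The resolution is to exploit the super-exponential decay of $p_i$ in $k_i = n_w/2^i$, which produces a Chernoff exponent that itself grows with $k_i$; since $k_i$ doubles each iteration, the resulting geometric-type series converges to an $n_w$-free bound.
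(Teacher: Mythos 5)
Your proof is correct and follows essentially the same route as the paper's: reduce to bounding the probability that $w$ sees a $\geq 1/(2e)$ fraction of \textsf{beacon} slots in some early iteration, bound the per-slot success probability by $k_i e^{-0.9k_i}$ with $k_i = n_w/2^i$ (using $n_w \geq 10$), apply the Chernoff-style tail $\Pr[X \geq a] \leq (e\mu/a)^a$, and sum over iterations exploiting the super-exponential decay in $k_i$. The only cosmetic difference is that the paper applies the tail bound at an intermediate threshold $t = 80e^{-7.2}l < l/(2e)$ and controls the sum via $\sum_j (j e^{-0.9j})^t \leq (\sum_j j e^{-0.9j})^t$, whereas you apply it directly at $l/(2e)$ and argue geometric dominance of the $k_i = 8$ term; both yield an $n_w$-independent bound that vanishes as $l$ grows.
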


\begin{proof}
Assume $w$ and all of its neighbors are still active at the beginning of iteration $i$ where $i\leq\lg{n_w}-3$. Let $X_i$ be a random variable denoting the number of slots in which $w$ hears a \textsf{beacon} message during the first $l$ time slots in iteration $i$. Let $\mu_i=\mathbb{E}(X_i)$. We know $\mu_i=l\cdot n_w\cdot(1/2^i)\cdot(1-1/2^i)^{n_w-1}\leq l\cdot n_w\cdot(8/n_w)\cdot(1-8/n_w)^{n_w-1}\leq 8l\cdot(1-8/n_w)^{0.9n_w}\leq 8l\cdot e^{-(8/n_w)\cdot 0.9n_w}=8l\cdot e^{-7.2}$. Notice, the first inequality holds since $\mu_i$ is non-decreasing during iterations $1\leq i\leq\lg{n_w}-3$.

Define $t=8lb\cdot e^{-7.2}$ and $\delta_i=t/\mu_i$, where $b$ is some positive constant to be fixed. Since $\mu_i\leq l\cdot n_w\cdot(1/2^i)\cdot(1-1/2^i)^{0.9n_w}\leq l\cdot n_w\cdot(1/2^i)\cdot e^{-0.9n_w/2^i}$, we know $\delta_i\geq (8b\cdot e^{-7.2}/n_w)\cdot 2^i\cdot e^{0.9n_w/2^i}$. Thus, due to Claim \ref{claim-chernoff-bound-variant}, we know $\mathbb{P}(X_i\geq t)<(e^{8.2} n_w/8b)^t\cdot(e^{-0.9n_w/2^i}/2^i)^t$.

Therefore, we can conclude:

\vspace{-3ex}
\begin{align*}
\sum_{i=1}^{\lg{n_w}-3}{\mathbb{P}(X_i\geq t)} & < \sum_{i=1}^{\lg{\frac{n_w}{8}}}{\left(\frac{e^{8.2}\cdot n_w}{8b}\right)^t\cdot\left(\frac{1}{2^i}\cdot e^{-0.9n_w/2^i}\right)^t} \\
& < \sum_{j=8}^{n_w}{\left(\frac{e^{8.2}\cdot n_w}{8b}\right)^t\cdot\left(\frac{j}{n_w}\cdot e^{-0.9j}\right)^t} = \left(\frac{e^{8.2}}{8b}\right)^t \cdot \sum_{j=8}^{n_w}{\left(j\cdot e^{-0.9j}\right)^t} \\
& \leq \left(\frac{e^{8.2}}{8b}\right)^t \cdot \left(\sum_{j=8}^{n_w}{j\cdot e^{-0.9j}}\right)^t
\end{align*}

where the last inequality clearly follows if $t$ is some positive integer (and it will be).

Define $S=\sum_{j=8}^{n_w}{j\cdot e^{-0.9j}}$, we know:

\vspace{-3ex}
\begin{align*}
S-e^{-0.9}\cdot S & = (1-e^{-0.9})S = \left(\sum_{i=8}^{n_w}{(e^{-0.9})^i}\right) + 7e^{-7.2} - n_w\cdot(e^{-0.9})^{n_w+1} \\
& < \left(\sum_{i=8}^{n_w}{(e^{-0.9})^i}\right) + 7e^{-7.2} = \frac{e^{-7.2}\cdot(1-(e^{-0.9})^{n_w-7})}{1-e^{-0.9}} + 7e^{-7.2} \\
& < \frac{e^{-7.2}}{1-e^{-0.9}} + 7e^{-7.2}
\end{align*}

Hence, $S<1/90$, which implies $\sum_{i=1}^{\lg{n_w}-3}{\mathbb{P}(X_i\geq t)}<(e^{8.2}/8b)^t\cdot(1/90)^t=(e^{8.2}/720b)^t$. Now, fix $b=10$, and choose $l$ such that $t=8lb\cdot e^{-7.2}=80e^{-7.2}\cdot l$ is a positive integer. As a result, $\sum_{i=1}^{\lg{n_w}-3}{\mathbb{P}(X_i\geq 80e^{-7.2}\cdot l)}<(e^{8.2}/720b)^t=(e^{8.2}/7200)^{80e^{-7.2}\cdot l}$. I.e., for sufficiently large constant $l$, this sum is an arbitrarily small positive constant. This implies, with constant probability that can be arbitrarily close to one, during each iteration $i\leq \lg{n_w}-3$, the designated node $w$ will hear \textsf{beacon} messages in at most $80e^{-7.2}$ fraction of time slots. Recall according to our protocol, nodes will only terminate if the designated node hears \textsf{beacon} messages in at least $1/2e$ fraction of time slots. Therefore, the lemma is proved.
\end{proof}

Next, we show that all nodes must have terminated by the end of iteration $\lg{n_w}$, with at least some constant probability (that can be arbitrarily close to one).

\begin{lemma}\label{lemma-CountCenternoCDConst-2}
During the execution of \CountCenternoCDConst, assume all nodes are active at the beginning of iteration $\lg{n_w}$. In such case, if $n_w\geq 2$ and $l$ is sufficiently large, then by the end of iteration $\lg{n_w}$, designated node $w$ and all of its neighbors must have terminated, with probability at least $1-\epsilon$. Here, $0<\epsilon<1$ is an arbitrarily small constant.
\end{lemma}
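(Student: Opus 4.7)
The plan is to show that in iteration $\lg n_w$, the designated node $w$ hears a sufficiently large fraction of clean \textsf{beacon} messages with probability at least $1-\epsilon$, which triggers the \textsf{stop} broadcast in the last slot and terminates everyone. First, I would compute the per-slot probability that exactly one neighbor of $w$ broadcasts during the first $l$ slots of iteration $\lg n_w$. Since each of $w$'s $n_w$ neighbors broadcasts independently with probability $1/2^{\lg n_w} = 1/n_w$, this probability equals $n_w \cdot (1/n_w) \cdot (1 - 1/n_w)^{n_w-1} = (1-1/n_w)^{n_w-1}$, which is at least $1/e$ for every $n_w \geq 2$ (using the standard fact that $(1-1/n)^{n} \leq 1/e$ and dividing by $(1-1/n_w) \leq 1$).

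Next, let $X$ denote the number of slots among the first $l$ of iteration $\lg n_w$ in which $w$ hears a \textsf{beacon} message. Because the slots use independent coin flips, $X$ is a sum of $l$ independent indicator variables, each with success probability at least $1/e$, so $\mathbb{E}[X] \geq l/e$. A standard multiplicative Chernoff bound with deviation parameter $\delta = 1/2$ gives $\Pr[X < l/(2e)] \leq \exp(-l/(8e))$. Since $l$ is a tunable constant, one can make this failure probability smaller than any prescribed $\epsilon > 0$ by choosing $l$ sufficiently large; note that $1/(2e) < 1/e$ leaves ample slack for the concentration argument.

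Finally, I would verify that the event $X \geq l/(2e)$ leads to simultaneous termination. Conditioned on this event, by the protocol $w$ broadcasts \textsf{stop} in the $(l+1)$-st slot of iteration $\lg n_w$ and adopts $2^{\lg n_w + 2} = 4 n_w$ as its estimate. In that last slot, no neighbor of $w$ broadcasts (they only transmit \textsf{beacon} messages in the first $l$ slots), so every neighbor that is listening receives the \textsf{stop} message cleanly and terminates with the same estimate $4 n_w$. Combining this with the failure bound above, the probability that $w$ and all of its neighbors terminate by the end of iteration $\lg n_w$ is at least $1 - \exp(-l/(8e)) \geq 1 - \epsilon$, as required.

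The argument is largely bookkeeping: the only subtle point is ensuring the $1/e$ lower bound on the per-slot success probability holds uniformly in $n_w \geq 2$ (rather than only asymptotically), so that the threshold $1/(2e)$ used by the algorithm is crossed with room to spare. Once this is handled, the Chernoff step is routine and the termination argument follows immediately from the fact that only $w$ transmits in the final slot of each iteration.
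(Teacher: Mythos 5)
Your proof is correct and follows essentially the same route as the paper: lower-bound the per-slot probability that exactly one neighbor of $w$ broadcasts by a constant exceeding the threshold $1/(2e)$, then apply a Chernoff bound over the $l$ independent slots and choose $l$ large enough, with the final-slot \textsf{stop} broadcast propagating termination. One minor quibble: your justification of $(1-1/n_w)^{n_w-1}\geq 1/e$ is backwards---starting from $(1-1/n)^{n}\leq 1/e$ and dividing by a quantity at most $1$ does not yield a \emph{lower} bound of $1/e$---but the inequality itself is a true standard fact, and in any case the paper's cruder estimate $(1-1/n_w)^{n_w-1}\geq (1-1/n_w)^{n_w}\geq 1/4$ for $n_w\geq 2$ already clears the $1/(2e)$ threshold and makes the same Chernoff argument go through.
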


\begin{proof}
Since we assume all nodes are active at the beginning of iteration $\lg{n_w}$, we know in each time slot in this iteration, the probability that exactly one neighbor of $w$ broadcasts is $n_w\cdot(1/n_w)\cdot(1-1/n_w)^{n_w-1}\geq (1-1/n_w)^{n_w}\geq 1/4$. Thus, in expectation, the number of time slots in which $w$ hears a \textsf{beacon} message is at least $l/4$. Recall according to our protocol, nodes will terminate if $w$ hears \textsf{beacon} message in at least $l/2e$ time slots. Since each time slot is independent, apply a Chernoff bound and we know $w$ will hear \textsf{beacon} message in at least $l/2e$ time slots, with probability at least $1-e^{-\Theta(l)}$.
\end{proof}

The following theorem immediately follows from the above two lemmas.

\begin{theorem}\label{thm-CountCenternoCDConst}
The \CountCenternoCDConst approximate neighbor counting algorithm guarantees the following properties when executed in a multi-hop network with no collision detection:
(a) the designated node $w$ and all its neighbors terminate simultaneously;
and (b) with probability at least $1-\epsilon$, the designated node $w$ obtains the estimate of $n_w$ in range $[n_w,4n_w]$ within $O(\lg{n_w})$ time slots. Here, $0<\epsilon<1$ is an arbitrary constant.
\end{theorem}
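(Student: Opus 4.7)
The plan is to derive Theorem~\ref{thm-CountCenternoCDConst} almost entirely by stitching together Lemmas~\ref{lemma-CountCenternoCDConst-1} and~\ref{lemma-CountCenternoCDConst-2} with one structural observation about the algorithm. For part (a), I would note that the only termination trigger in \CountCenternoCDConst is $w$'s broadcast of a \textsf{stop} message in the $(l+1)$-th slot of an iteration. By the protocol, in that final slot every neighbor of $w$ is listening (neighbors broadcast only in the first $l$ slots). So if $w$ sends \textsf{stop} each neighbor receives it and terminates with $w$ at the end of the same iteration, and if $w$ stays silent neither $w$ nor any neighbor terminates. Consequently $w$ and its entire neighborhood terminate simultaneously.

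For part (b), I would pick the constant $l$ large enough that both of the supporting lemmas return failure probability at most $\epsilon/2$ (this is exactly the kind of tuning both lemma statements allow). Lemma~\ref{lemma-CountCenternoCDConst-1} then guarantees, with probability at least $1-\epsilon/2$, that nobody terminates during iterations $1,\dots,\lg{n_w}-3$. Lemma~\ref{lemma-CountCenternoCDConst-2} guarantees, with probability at least $1-\epsilon/2$, that termination has occurred by the end of iteration $\lg{n_w}$. A union bound forces the terminating iteration $i^\star$ to lie in $\{\lg{n_w}-2,\lg{n_w}-1,\lg{n_w}\}$ with probability at least $1-\epsilon$. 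Since the protocol sets $w$'s output to $2^{i^\star+2}$, this output lies in $\{n_w, 2n_w, 4n_w\}\subseteq [n_w,4n_w]$, giving the claimed accuracy. The runtime claim is then immediate: each iteration has constant length $l+1$, and at most $\lg{n_w}$ iterations are executed before termination, for a total of $O(\lg{n_w})$ slots.

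I do not anticipate a genuine obstacle. The two supporting lemmas have already absorbed all of the probabilistic and combinatorial work; the remaining argument is essentially bookkeeping, the key check being that $2^{i^\star+2}$ for $i^\star\in[\lg{n_w}-2,\lg{n_w}]$ falls in the desired multiplicative window. The only mildly delicate points are (i) confirming that the $\epsilon$ granted by Lemma~\ref{lemma-CountCenternoCDConst-1} and by Lemma~\ref{lemma-CountCenternoCDConst-2} can be made simultaneously small (they can, by a single common choice of $l$), and (ii) handling the edge case where $n_w$ is so small that $\lg{n_w}-3$ is nonpositive, which I would dispose of by observing that in that regime the protocol terminates in $O(1)$ slots with output already in $[n_w,4n_w]$ by direct case analysis, so the asymptotic $O(\lg{n_w})$ bound still holds trivially.
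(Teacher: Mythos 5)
Your proposal is correct and matches the paper's approach: the paper simply states that the theorem follows immediately from Lemmas~\ref{lemma-CountCenternoCDConst-1} and~\ref{lemma-CountCenternoCDConst-2}, which is exactly the stitching you perform (with the union bound, the check that $2^{i^\star+2}\in[n_w,4n_w]$ for $i^\star\in\{\lg{n_w}-2,\lg{n_w}-1,\lg{n_w}\}$, and the observation that termination is triggered only by $w$'s \textsf{stop} broadcast made explicit). Your write-up is, if anything, slightly more careful than the paper's one-line justification.
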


\subparagraph*{High probability of success.} If we want to guarantee the estimate of $n_w$ is correct with high probability in $n_w$, only small adjustments to \CountCenternoCDConst are needed: for every iteration $i$, increase the number of time slots in which each neighbor broadcasts from $l=\Theta(1)$ to $\Theta(i)$.

To prove the correctness of \CountCenternoCDHigh, we follow the strategies that are used for the analysis of \CountSHnoCDHigh: first, show that during iterations one to $\lg{(n_w/(a\ln{n_w}))}$ no node will terminate as the designated node can never hear a \textsf{beacon} message (since the estimate is too small); then, show that during iterations $\lg{(n_w/(a\ln{n_w}))}$ to $\lg{n_w}-\Theta(1)$ (such as $\lg{n_w}-3$) still no node will terminate as the fraction of time slots in which the designated node has heard a \textsf{beacon} message will not reach the threshold; and finally, show that by the end of iteration $\lg{n_w}$, all nodes must have terminated since the designated node must have heard sufficient \textsf{beacon} messages in this iteration.

We omit the detailed proofs for \CountCenternoCDHigh, as they are almost identical to the ones for \CountSHnoCDHigh.

\subsection{Collision detection is available}

In such case, we can use algorithms that are very similar to \CountSHCDConst or \CountSHCDHigh to achieve our goals. More specifically, we will briefly describe two algorithms such that one---called \CountCenterCDConst---can solve the considered problem with constant probability, and the other---called \CountCenterCDHigh---can solve the considered problem with high probability in $n_w$ (recall $w$ is the designated node and $n_w$ is the number of neighbors it has). We will not give detailed correctness proofs for these two algorithms since these proofs are almost identical to the ones for \CountSHCDConst and \CountSHCDHigh.

Before describing \CountCenterCDConst and \CountCenterCDHigh, we first discuss how to adopt \EstUpperSH to the multi-hop designated node counting scenario. We call this variant of \EstUpperSH as \EstUpperCenter. \EstUpperCenter contains multiple iterations. In the $i$\textsuperscript{th} iteration, the designated node $w$ assumes $\lg{n_w}=2^i$, and verifies the accuracy of the estimate. In case the estimate is accurate, $w$ informs all neighbors and \EstUpperCenter is done. Otherwise, all nodes proceed into the next iteration. More specifically, the $i$\textsuperscript{th} iteration contains two slots. In the first slot, each neighbor of $w$ broadcasts a \textsf{beacon} message with probability $1/2^{2^i}$, and $w$ simply listens. If $w$ hears a message or silence, then in the second slot $w$ broadcasts a \textsf{stop} message to inform all nodes to terminate, with $2^{i+1}$ being the estimate of $\lg{n_w}$. Otherwise, $w$ remains silent in the second slot and all nodes proceed into the next iteration.

\textbf{\CountCenterCDConst} contains multiple iterations, each of which contains two time slots. In each iteration $i$, all nodes have a lower bound $a_i$ and an upper bound $b_i$, and will test whether the median $m_i=\lfloor (a_i+b_i)/2\rfloor$ is close to $\lg{n_w}$. (Initially, $a_1$ is set to one, and $b_1$ is set to the estimate of $\lg{n_w}$ returned by \EstUpperCenter.) To achieve this goal, in the first slot in iteration $i$, each neighbor of $w$ will choose to broadcast a \textsf{beacon} message with probability $1/2^{m_i}$, and the designated node $w$ will simply listen. If $w$ finds no node broadcasts in the first slot, it will broadcast an \textsf{over-est} message in the second slot to inform all nodes to set $b_{i+1}$ to $m_i-1$; and if $w$ finds multiple nodes broadcast in the first slot, it will broadcast an \textsf{under-est} message in the second slot to inform all nodes to set $a_{i+1}$ to $m_i+1$. Moreover, if $w$ finds exactly one node broadcasts in the first slot, it will broadcast a \textsf{stop} message in the second slot to inform all nodes to terminate, with $2^{m_i+1}$ being the estimate of $n_w$. Finally, if in an iteration $a_i>b_i$, then all nodes will simply abort without obtaining an estimate of $n_w$.

\textbf{\CountCenterCDHigh} contains $\Theta(\lg{\hat{N}_w})$ iterations, each of which containing two time slots. Here, $\hat{N}_w$ is the polynomial upper bound of $n_w$ returned by \EstUpperCenter. In each iteration, all nodes maintain a current estimate on $n_w$ which is denoted by $\hat{n}_w$. (Initially, $\hat{n}_w$ is set to $\hat{N}_w$.) In the first slot in an iteration, each neighbor of $w$ will broadcast a \textsf{beacon} message with probability $1/\hat{n}_w$, and $w$ will simply listen. If $w$ hears silence, then it will instruct all nodes to decrease $\hat{n}_w$ by a factor of four in the next iteration; if $w$ hears noise, then it will instruct all nodes to increase $\hat{n}_w$ by a factor of four in the next iteration; and if $w$ hears \textsf{beacon} from some neighbor, then it will instruct all nodes to keep $\hat{n}_w$ unchanged in the next iteration. After these $\Theta(\lg{\hat{N}_w})$ iterations, $w$ will use $4\tilde{n}_w$ to be the final estimate of $n_w$, where $\tilde{n}_w$ is the most frequent estimate used by the nodes during the $\Theta(\lg{\hat{N}_w})$ iterations.

\end{document}